\newtheorem{Thm}{Theorem}
\newtheorem{Lem}[Thm]{Lemma}
\newtheorem{Cor}[Thm]{Corollary}
\newtheorem{Prop}[Thm]{Proposition}
\newtheorem{Claim}{Claim}
\newcommand\mbZ{\mbox{$\mathbb{Z}$}}
\newcommand\mbR{\mbox{$\mathbb{R}$}}
\newcommand\Bn{\{0,1\}^n}
\newcommand {\ie} {\textit{i.e.}\xspace}
\newcommand {\st} {\textit{s.t.}\xspace}
\newcommand\valpha{\mbox{$\boldsymbol\alpha$}}
\newcommand\co{\mbox{\bf co-}}
\newcommand\am{\mbox{\bf AM}\xspace}
\newcommand\p{\mbox{\bf P}\xspace}
\newcommand\sharpp{\mbox{{\small \#}{\bf P}}\xspace}
\newcommand\np{\mbox{\bf NP}\xspace}
\newcommand\ph{\mbox{\bf PH}\xspace}
\newcommand\pp{\mbox{\bf PP}\xspace}
\newcommand\fp{\mbox{\bf FP}\xspace}
\newcommand\ppad{\mbox{\bf PPAD}\xspace}
\newcommand{\uu}{\textup{${\sf _{u}}$}\xspace} % unknown input
\newcommand{\V}{\mbox{\sf {V}}\xspace} % verification oracle
\newcommand{\A}{\mbox{\sf {A}}\xspace} % search problem
\newcommand\GraphI{\mbox{\sf {GraphIso}}\xspace}
\newcommand\Clique{\mbox{\sf {Clique}}\xspace}
\newcommand\HC{\mbox{\sf {HamiltonianCycle}\xspace}}
\newcommand{\GpI}{\mbox{\sf {GroupIso}}\xspace}
\newcommand{\GpIp}{\mbox{\sf {GroupIso}$(\cdot, \mbZ_p)$}}
\newcommand\PHC{\mbox{\sf {PromisedHamiltonianCycle}\xspace}}
\newcommand\yes{\mbox{\sf {Yes}}\xspace}
\newcommand\no{\mbox{\sf {No}}\xspace}
\newcommand{\Sat}{\mbox{\sf {SAT}}\xspace}
\newcommand\ssum{\mbox{\sf {SubsetSum}}\xspace}
\newcommand\nash{\mbox{\sf {Nash}}\xspace}
\newcommand\Core{\mbox{\sf {Core}}\xspace}
\newcommand\SM{\mbox{\sf {StableMatching}}\xspace}
\newcommand\Sort{\mbox{\sf {Sorting}}\xspace}
\newcommand\mcA{\mathcal{A}\xspace}
\newcommand\mcB{\mathcal{B}\xspace}
\newcommand\mcO{\mathcal{O}\xspace}
\newcommand\D{\mathsf{D}}
\newcommand\R{\mathsf{R}}
\begin{document}
\title{\bf{On the Complexity of Trial and Error}}

\author{
Xiaohui Bei\thanks{Nanyang Technological University, Singapore. Email: {\tt xhbei@ntu.edu.sg}.}
\and Ning Chen\thanks{Nanyang Technological University, Singapore. Email: {\tt ningc@ntu.edu.sg}.}
\and Shengyu Zhang\thanks{The Chinese University of Hong Kong, Hong Kong. Email: {\tt syzhang@cse.cuhk.edu.hk}.}
}

\date{
\vspace{0.2in}
\begin{quote}
\begin{flushright}
{\it Great scientific discoveries have been made by men} \\
{\it seeking to verify quite erroneous theories about the nature of things.} \\[.1in]
 -- {\it Aldous Huxley}
\end{flushright}
\end{quote}
}
%\begin{quote}
%\begin{quote}
%{\it Great scientific discoveries have been made by men seeking to verify quite erroneous theories about the nature of things. \qquad \qquad \qquad \qquad \quad -- Aldous Huxley}
%\end{quote}
%\end{quote}
%}
\maketitle

\begin{abstract}
Motivated by certain applications from physics, biochemistry, economics, and computer science in which the objects under investigation are unknown or not directly accessible because of various limitations, we propose a trial-and-error model to examine search problems in which inputs are {\em unknown}.
More specifically, we consider constraint satisfaction problems $\bigwedge_i C_i$, where the constraints $C_i$ are hidden, and the goal is to find a solution satisfying all constraints. We can adaptively propose a candidate solution (i.e., {\em trial}), and there is a verification oracle that either confirms that it is a valid solution, or returns the index $i$ of a violated constraint (i.e., {\em error}), with the exact content of $C_i$ still hidden.
%Given a search problem with a hidden input, we are asked to find a valid solution. The way to find such a solution is to propose candidate solutions, i.e., \emph{trials}, and, using observed violations, i.e., \emph{errors}, to prepare future proposals. In accordance with our motivating applications, we consider the fairly broad class of constraint satisfaction problems, and assume that errors are signaled by a verification oracle in the format of the index of a violated constraint (with the exact content of the constraint still hidden). The objective is to design time- and/or trial-efficient algorithms that will find a valid solution or alternatively, to show that the problem is intrinsically hard.

We studied the time and trial complexities of a number of natural CSPs, summarized as follows. On one hand, despite the seemingly very little information provided by the oracle, efficient algorithms do exist for Nash, Core, Stable Matching, and SAT problems, whose unknown-input versions are shown to be as hard as the corresponding known-input versions up to a factor of polynomial. The techniques employed vary considerably, including, e.g., order theory and the ellipsoid method with a strong separation oracle.
%Our discoveries can be summarized as follows. On one hand, despite the seemingly very little information provided by the verification oracle, efficient algorithms do exist for a number of important problems. For the Nash, Core, Stable Matching, and SAT problems, the unknown-input versions are as hard as the corresponding known-input versions, up to a factor of polynomial. We further conduct a closer study of the latter two problems and give almost tight bounds on their trial complexities. The techniques employed to prove these results vary considerably, including, e.g., order theory and the ellipsoid method with a strong separation oracle.

On the other hand, there are problems whose complexities are substantially increased in the unknown-input model. In particular, no time-efficient algorithms exist for Graph Isomorphism and Group Isomorphism (unless \ph collapses or $\p = \np$). The proofs use quite nonstandard reductions, in which an efficient simulator is carefully designed to simulate a desirable but computationally unaffordable oracle.
%On the other hand, there are problems whose complexities are substantially increased in the unknown-input model. For Graph Isomorphism and Group Isomorphism, in particular, although there are trial-efficient algorithms, no time-efficient algorithms exist (unless \ph collapses and $\p = \np$, respectively). These results also imply lower bounds on the tradeoff between time and trial complexities. The proofs use quite nonstandard reductions, in which an efficient simulator is carefully designed to simulate a desirable but computationally unaffordable oracle.

Our model investigates the value of input information, and our results demonstrate that the lack of input information can introduce various levels of extra difficulty. The model accommodates a wide range of combinatorial and algebraic structures, and exhibits intimate connections with (and hopefully can also serve as a useful supplement to) certain existing learning and complexity theories.
\end{abstract}
%\clearpage

\setcounter{page}{0}
\thispagestyle{empty}

\newpage

\section{Introduction}

In a broad sense, computer science studies computation and other information processing tasks. Theoretical computer science, in particular, focuses on understanding the ultimate power and limits of computation in various models. A central question in theoretical computer science is to find the minimum cost of an algorithm for computing a function $f$ on inputs $x$. Algorithm design and computational complexity analysis assume that the input is given explicitly.
% on which we want to compute the function value
However, in many scenarios, we actually \emph{lack input information}.
%Theoretical computer science is a subfield of information science which studies the theoretical foundations of computation of information.
%A central question in theoretical computer science is algorithm design, which, for any explicitly given {\em input} information, considers the design of algorithmic protocols to compute an objective output solution.
%Input is the most critical ingredient in algorithm design: it serves as a criterion to evaluate algorithmic complexity and solution accuracy.
%However, do we always have full information of the input? The question, to some extent, has already been considered by computer scientists: In algorithmic game theory, it is assumed that inputs are revealed by self-interested agents; in learning theory, one may learn input information from historical data. But what if the self-interested agents also do not have the input information, or what if the historical data is not sufficient to learn?
%In some applications, the knowledge of input information is indeed {\em unknown}.
%\vspace{-.03in}
\begin{itemize}
\item In a normal-form game, it is usually assumed that the payoff function of every player is given explicitly (or can be computed easily in a certain way). However, in many circumstances, players do not necessarily know their payoffs or possible strategies, particularly when they are exploring a new environment (such as a new business model). In some cases, they may not even know the number of other players, let alone their strategies~\cite{Hal08}.
%For instance, a company opening up a branch in a new country may not have a good idea of its competitors, the commercial environments, the new customer behaviors, the (hidden) local regulations, etc. Similar situations may also happen when a new and promising E-business model emerges and a boom of startup companies appear. Government may desire to regulate the business model and eventually to see an equilibrium among the companies to avoid monopoly and encourage positive competitions, but neither government or the companies know the payoff functions.

% STOC 2013
\iffalse
For instance, when a new and promising business model emerges, each startup company may be unaware of all of its competitors, and may lack a full understanding of the commercial environment and customer behavior in this new model, and hence its payoffs.
\fi

%As another example, in an online carpooling system, registered users look for carpoolers to save their commute costs and time (some highway lanes are only for carpools). Given the requests from the users, the system generates a carpool assignment. However, every user may have no information about his assigned carpoolers and their activities (e.g., the lack of punctualness would lead to a rather bad payoff for others).

%\vspace{-.07in}
\item In a two-sided matching marketplace, every individual has a preference over the agents of the other side. %, which is taken as the input of a matching algorithm.
However, the individuals themselves may not know their (precise) preferences.
% STOC 2013
\iffalse
    For example, a man usually comes to realize how much he likes a woman only after they have hung out for a while. Another example is the job market:
\fi
    For example, in a job market, an applicant may not know precisely how much he or she would like the job in question because of a lack of information about the nature of the job, the company culture, and his or her future relations with colleagues. At the same time, it is generally quite difficult for recruiters to judge which candidates best fit the position. %every job position may not necessarily know which candidate fits it better
    Indeed, decision makers quite often make hiring mistakes: ``...~a systematic and continuous approach to fitting the right person to the right job at the right time has long been the Holy Grail of workforce organization''~\cite{Mck03}.

% Mckinsey quarterly: ``It is thus no surprise that a systematic and continuous approach to fitting the right person to the right job at the right time has long been the Holy Grail of workforce organization''. Vivek Agrawal, James M. Manyika, John E. Richards The McKinsey Quarterly, 2003 Number 2 Organization.

%    As another example, in the Danish daycare system, children of ages from 6 months to 3 years can apply for public daycares; an assignment algorithm runs dynamically every month, which balances the choices of parents and priorities of daycares (including, e.g., children with special needs)~\cite{KMT11}. The daycares have preferences over children, even for those that are in the same priority category. For example, they do not prefer disobedient children; however, such information \emph{cannot} be collected until the children have been accepted (it is not surprising to see that most parents would claim their own children to be obedient).

% Motivations: assignment of public school teachers in Brazil and France (Bloch \& Cantala 2008).

%\vspace{-.07in}
\item %The search problem of \Sat asks to find a satisfying assignment of a given CNF formula (or claim that it is unsatisfiable). In some examples the formula itself is not given explicitly.
%  While it is already an \np-complete problem, sometimes that the formula itself is not known may add new difficulties.

    In the event of an infectious disease outbreak, due to an unknown virus, biochemists need to find diagnostic reagents that have no serious side effects. In a simplified formulation, this involves a search for a reagent that satisfies a collection of constraints (e.g., one constraint may be that the reagent should not contain certain medical ingredients composed in a certain way, as otherwise, its reaction with the virus could cause a severe headache). If the biochemists knew everything about the virus (e.g., its DNA sequence, chemical composition, etc.),
    then it would be much easier for them to find a diagnostic reagent. If the virus is largely unknown, however, then they are left with an effectively unknown-constraints formula to satisfy. Of course, they could try to employ modern DNA technologies to gain most information on the virus, but doing so usually takes a long time, and identifying a diagnostic reagent to control the ensuing pandemic is a matter of urgency.
\end{itemize}

In summary, an input, while it exists, may be {\em unknown} because of our limited knowledge and control of the system, or of our lack of experience in a new environment.
There are numerous other scenarios with unknown inputs, e.g., animal behavior studies, neural science, and hidden web databases,~to~name~just~a~few.

\subsection{Trial and Error}

Trial and error is a basic methodology in problem solving and knowledge acquisition, and it has also been used extensively in product design and experiments~\cite{Mon08}. Generally speaking, the approach proceeds by adaptively posing a sequence of candidate solutions and observing their validity. If a proposed candidate solution is found to be valid, then the mission is accomplished. Otherwise, an error is signaled from one of the characteristics of the studied object. An important feature of the approach is its solution orientation: the goal is to find one solution, with little care paid to other considerations such as why the solution works~\cite{wiki-trial}.

Trial and error is also a commonly used approach in the aforementioned examples. In economics, individuals adopt and adaptively adjust their strategies based on observed market reactions. Such self-motivated, but self-regulating, types of behavior, as implied by Adam Smith's ``invisible hand'' theory, can converge to a socially desirable state (even without the individuals involved having any knowledge of one another).
%In job markets, a candidate may always look for a better position (especially when one is not satisfied with his current position), and a company often reorganizes its internal organization structures and reassigns positions.
In a company, an employee will usually look for a more suitable position when dissatisfied with his or her current one, and senior management will usually encourage personnel adjustments to enhance performance.
Biomedical scientists conduct clinical trials to test their designed reagents, and if an unacceptable side effect is observed, then they collect and analyze feedback data to help with future diagnostic reagent tests~\cite{Indrayan04}.

The most critical ingredient in the trial and error approach is how to employ previously returned errors to propose future trials. This procedure is algorithmic in nature, but it does not seem to have been formally addressed from an algorithmic perspective. This paper aims to investigate this approach on a broad category of problems with unknown inputs.

\subsection{Model and Preliminaries}

%We consider search problems of which inputs are unknown, and we would like to find a solution for any instance. Motivated by previous examples and discussions, we study the question in the framework of constraint satisfaction problems (CSP): A solution needs to satisfy a set of predefined constraints, and
%For example, there is a CNF formula but the clauses are unknown to us, and we are asked to find a satisfying assignment to the formula or conclude that it is not satisfiable.
%Whether a solution is a desired one can be illustrated by a set of constraints: it is a desired one only if all the constraints are satisfied.
%our trial-and-error approach to finding a solution on an unknown input is to propose a candidate solution. %(e.g., an assignment to the variables).
%We assume that there is a {\em verification oracle} \V, which, for every proposed solution, either confirms that it is indeed a targeted solution or a violation (e.g., an unsatisfied constraint) that refutes the proposed solution to be a desired one. We assume that the information returned by the oracle \V is only the index of the constraint but not the constraint itself (e.g., if we are informed that a clause is not satisfied for an assignment, we only know which clause it is, but still do not know the format of the clause). Such a verification oracle is from the nature of trial and error: we only observe a violated `phenomenon' but may still do not the exact constraint that defines the desired solution space.

Motivated by the foregoing examples, we investigate the effects of the lack of input information from a computational viewpoint on the basis of the trial and error approach.
The central question we consider is the following.

\begin{quote}
{\em How much extra difficulty is introduced due to the lack of input knowledge?}\\[-.2in]
\end{quote}

%model
In this paper, we explore this question in search problems. %via the trial and error approach.
Suppose that on an input (instance) $I$, there is a set $S(I)$ of \emph{solutions}. A search problem is to find a solution $s\in S(I)$ to input $I$.\footnote{One might wish to find more or even all solutions. Here, we follow the standard requirement for searching problems in complexity theory~\cite{Gol08,AB09} by asking for only one (arbitrary) solution.}
%It is actually what one needs in many applications such as optimization problems, where we are usually happy with any optimizer.}
Numerous problems arising from a variety of applications studied in algorithm design and computational complexity are search problems. Typical examples include searching for a Nash equilibrium in a multi-player game, searching for a satisfying assignment in a conjunctive normal form (CNF) formula, and finding a stable matching to pair individuals with preferences in a two-sided market.
%, and finding a solution to a system of linear equalities/inequalities.

% constraints
All of these problems, in addition to the motivating examples discussed earlier, naturally fall into the broad category of \emph{constraint satisfaction problems} (CSPs). Suppose that there is a space $\Omega = \Bn$ of candidate solutions. Corresponding to an input $I$, there are a number of constraints $C_1, C_2, \ldots, C_m (, \ldots)$, where each $C_i\subseteq \Bn$ is a relation on the solution variables defined on given domains.\footnote{Note that it is possible for a problem to have different CSP definitions, which, depending on the available set of tools, may in turn lead to different complexities for solving the problem.
For example, to identify an unknown substance, we can employ different (physical, chemical, etc.) test methods, which, in general, require different costs.
This phenomenon reflects the intrinsic variety of a problem and the diversity of its solutions.
Thus, to define an unknown-input problem, we need to indicate explicitly its corresponding CSPs. For the problems investigated in this paper, we arguably use their most natural definitions.}
%\footnote{
%For simplicity of notation, we do not emphasize how many variables $s_i$ are involved in each constraint, though many problems studied here do have the property that each constraint contains only a small number of variables.}
The solutions of $I$ are defined as those $s$ that satisfy all constraints $C_i$, i.e., $s\in \bigcap_i C_i$.
Note that the number of constraints can range from constant to polynomial, exponential, or even infinite.
CSPs are a subject of intensive research in theoretical computer science, artificial intelligence, and operations research, and they provide a common basis for exploration of a large number of problems with both theoretical and practical importance.

%unknown input
This paper addresses the situation in which the input $I$ is unknown. For a search problem $\A$, we denote by $\A\uu$ the same search problem with unknown inputs.
For example, in the \SM problem, the input contains the preference lists of all men and women; in $\SM\uu$, these preference lists are unknown to us. The constraints are that all man-woman pairs $(m,w)$ are not blocking pairs, and the task is to find a solution that satisfies all constraints, namely a stable matching~\cite{GS62}.

%verification oracle
Similar to the way in which a biochemist
%%% natural sciences &&&
%propose a hypothesis and verify it through experiments in scientific methods, or
%%% natural sciences &&&
proposes a chemical reagent and then performs clinical tests,
%In accordance with the motivating examples,
here our method of searching for a solution of a CSP is also the trial and error approach. We propose a candidate solution $s$: If $s$ is not a valid solution, then we are told so by a \emph{verification oracle} $\V$, and, what is more, $\V$ also gives us the \emph{index} of one constraint that is not satisfied.
Otherwise, $s$ satisfies all constraints, and then we cannot observe any violated constraint; equivalently, $\V$ returns a confirmative answer, and our job is done. Some remarks are necessary: % are in order.
\begin{itemize}
\item If more than one constraint is violated, then (the index of) any one of them can be returned by $\V$. We make no assumption about which one, not only because worst-case analysis is standard in algorithm and complexity studies, but also because in many applications, such as drug tests, the verification oracle is carried out by Nature or human bodies, and thus how and which violation is returned is truly beyond our current understanding.
\item Note that $\V$ does not reveal the constraint itself, but only its index or label.
    For example, we know something like ``the third constraint is violated'' in the proposed assignment of the $\Sat\uu$ problem, or ``the second player has a better mixed strategy'' for the proposed strategy in the $\nash\uu$ problem,
	but the exact content of the constraint (i.e., the literals in the clause of $\Sat\uu$ or the player's utility function of $\nash\uu$) is still unknown to us,
    which is consistent with our motivating examples. If a headache is observed in a drug development clinical trial, then we do not always know which components of the proposed reagent caused the problem: We have only a label of ``headache'' for the proposed reagent.
		%%% natural sciences &&&
%		Similarly, if some violation is observed in a physical experiment, it does not mean that one immediately knows why and where the current theory is wrong.
		%%% natural sciences end &&&
\end{itemize}
Surprisingly, despite this seemingly very little information and the worst-case assumption on the verification oracle, we still have efficient algorithms for many problems.

%\vspace{-.05in}
%algorithm
Given the verification oracle \V, an algorithm is an interactive process with \V. We choose candidate solutions (i.e., trials), and the oracle returns violations (i.e., errors). The process is adaptive, i.e., the newly proposed solution can be based on the historical information returned by the oracle. %Therefore, our model has a computational issue as well: how much time complexity is needed to compute the next proposal?

Because our focus is on how much \emph{extra} difficulty is introduced by the lack of input information for a search problem $\A$, we single out this complexity by comparing the unknown-input and known-input scenarios. To this end, we equip our algorithms with another oracle, the \emph{computation oracle}, which can solve the known-input version of the same problem $\A$.
Overall, our algorithms can access two oracles, the verification oracle and the computation oracle (we do not allow them to invoke each other).
%\footnote{Note that we do not allow the two oracles to invoke each other.}

%time complexity
As is standard in complexity theory, a query to either oracle has a unit time cost. The \emph{time complexity} of a problem with unknown inputs is the minimum time needed for an algorithm to solve it for all inputs and all verification oracles consistent with the input. We employ the standard notation in computational complexity theory for complexity classes such as $\p$ and $\np$ and also for oracles. For example, $\A\uu \in \p^{\scriptsize \V,\A}$ means that problem $\A\uu$ can be solved by a polynomial-time algorithm with verification oracle $\V$ and the computation oracle that can solve the known-input version of $\A$. If this occurs, then we consider the extra complexity (resulting from the unknown input) not to be very high.
The central question can therefore be translated to the following. Given a search problem $\A$, is $\A\uu\in \p^{{\scriptsize \V,\A}}$?
If the given known-input problem $\A$ is in $\p$, then the computation oracle can be omitted, and the problem becomes ``Is $\A\uu\in \p^{{\scriptsize \V}}$?''

%trial complexity
%For some problems, regardless of its computational power, a correct algorithm has to make a number of queries to the verification oracle.
We also define the \emph{trial complexity} of an unknown-input problem $\A\uu$ as the minimum number of queries to the verification oracle that any algorithm needs to make, regardless of its computational power.\footnote{It is thus the ``query complexity'' to the verification oracle. Here, we adopt the term ``trial complexity'' to avoid any potential confusion of the two types of oracle queries (corresponding to the two oracles).} As is standard in query complexity theory, we can consider deterministic or (Las Vegas) randomized algorithms. The latter can be assumed to be error-free because of the verification oracle $\V$, and we count the cost as the expected number of queries to $\V$. We denote by $\D(A\uu)$ and $\R(A\uu)$ the deterministic and randomized trial complexities of $\A\uu$, respectively.

We investigate trial complexity not only because it provides a rigorous proof of computational hardness, but also because it measures the number of trials (or in another perspective, errors) that must be undertaken to find a solution. Note that in many scenarios, such as diagnostic reagent development, trials constitute the major expense, both financially and temporally, and in almost all of the motivating examples discussed earlier, an important goal is to design protocols or experiments with a small number of trials.

\subsection{Our Results and Techniques}

We consider a number of problems, that are motivated by the aforementioned examples, to investigate the trial and time complexities resulting from the lack of input knowledge. (The formal definitions of these problems and their natural formulation as CSPs are deferred to subsequent sections.)

\begin{Thm}
  For the following problems $\A$, \mbox{we have $\A\uu \in \p^{\scriptsize \V, \A}$}.
  \begin{itemize}
    \item \nash: Find a Nash equilibrium of a normal-form game.
    \item \Core: Find a core of a cooperative game.
    \item \SM: Find a stable matching of a two-sided market with preferences.
    \item \Sat: Find a satisfying assignment of a CNF formula.
  \end{itemize}
\end{Thm}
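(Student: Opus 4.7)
The plan is to prove all four containments through a common template: maintain a compact summary of what has been learned about the hidden input, use the computation oracle to produce a candidate consistent with that summary, query \V, and convert any returned violation index into new information. Since each iteration performs one invocation of the computation oracle and $O(\mathrm{poly}(n))$ invocations of \V, it suffices to bound the number of iterations by a polynomial.

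For \SM the trick is structural: the index returned by \V on a candidate matching $\mu$ is a blocking pair $(m,w)$, which reveals the preference comparisons $w\succ_m \mu(m)$ and $m\succ_w \mu(w)$. Running a Gale-Shapley-style deferred acceptance driven by these comparisons converges in $O(n^2)$ revisions by the standard lattice argument on stable matchings, and the computation oracle is barely used. For \Core, the core is the polytope $\{x:\sum_{i\in S}x_i\ge v(S)\}$; a violated coalition returned by \V is precisely a strong separating hyperplane, so \V doubles as a Gr\"otschel-Lov\'asz-Schrijver separation oracle and the ellipsoid method terminates in polynomial time. \nash is handled by an analogous separation-based reduction on the polytope description of Nash equilibria: a profitable deviation $(i,s_i')$ reported by \V supplies the cutting inequality, and the \nash computation oracle is invoked on an auxiliary game built from the accumulated best-response inequalities.

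For \Sat we maintain a conjunction $F$ of clauses proved to be consequences of the hidden $\phi$. Using the \Sat computation oracle we produce $x\models F$ and query \V on $x$; if the returned index is $i$, we know $C_i$'s literals lie inside the $n$-literal ``exclude-$x$'' clause $\bigvee_j \ell_j^{\neg x}$. A learning sub-routine queries \V on each of the $n$ one-bit neighbors $x\oplus e_j$: a response of $i$ proves $j\notin\mathrm{vars}(C_i)$ and permits pruning $\ell_j^{\neg x}$ from the sub-clause; a response of $k\ne i$ reveals a violation of $C_k$ and tightens an upper bound on $\mathrm{vars}(C_k)$ via intersection of the false-literal sets of the observations that returned $k$; ``satisfying'' halts the algorithm. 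The pruned sub-clause, still implied by $C_i$, is appended to $F$ and we iterate.

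The main obstacle is proving polynomial termination for \Sat under worst-case index choice by \V. An adversarial oracle can mask a bit-flip that truly satisfies $C_i$ by diverting to some other violated clause $k$, blocking direct learning of $C_i$. The amortization that rescues the bound is to track the potential $\Phi=\sum_k (n-|U_k|)$ summed over indices $k$ already encountered, where $U_k$ is the current superset bound on $\mathrm{lits}(C_k)$. Each masking event either exposes a previously unseen index $k$ (contributing at most $n$ to $\Phi$, and this happens at most $m$ times) or strictly shrinks some already-tracked $U_k$ (at most $n$ shrinkages per index). Since $\Phi\le mn$ throughout, the total number of queries to \V is $\mathrm{poly}(n,m)$. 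Considerably simpler termination arguments close the analyses for \SM (lattice height), \Core (ellipsoid volume), and \nash (ellipsoid volume plus one final \nash-oracle call).
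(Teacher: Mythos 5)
Your \Core argument matches the paper, but the other three have real issues.

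For \Sat your approach is substantially more complicated than needed, and the termination argument has a gap. The paper never queries one-bit neighbors: it maintains superset bounds $L_i$ on the literals of each clause, feeds $\phi'=\bigwedge_i(\bigvee_{\ell\in L_i}\ell)$ to the \Sat oracle, proposes the returned $x$, and on a violation index $i$ replaces $L_i$ by $\{\ell\in L_i:\ell(x)=0\}$. The one observation that makes this work is that $x$ was chosen to satisfy $\phi'$, hence satisfies $\bigvee_{\ell\in L_i}\ell$, hence \emph{some} literal of $L_i$ is true on $x$ and gets removed; so $\sum_i|L_i|$ strictly decreases, giving $O(mn)$ trials outright. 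Your neighbor-query subroutine proposes $x\oplus e_j$, which in general does \emph{not} satisfy the current $\bigvee_{\ell\in U_k}\ell$; hence when $\V$ responds with a ``masking'' index $k$, the false-literal set of $x\oplus e_j$ need not miss anything in $U_k$, so the intersection need not shrink $U_k$. Your amortization claim that every masking event either exposes a new index or strictly shrinks some $U_k$ is therefore unjustified, and the potential bound $\Phi\le mn$ does not control the number of trials.

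For \SM, ``Gale-Shapley-style deferred acceptance converges in $O(n^2)$ revisions by the standard lattice argument'' is not substantiated and is not how the paper proceeds. A single blocking pair $(m,w)$ gives only two comparison facts ($w\succ_m\mu(m)$ and $m\succ_w\mu(w)$), chosen adversarially; there is no obvious way to drive deferred acceptance with this, and the lattice structure of stable matchings does not by itself bound the number of such revelations. The paper instead treats each man's and each woman's preference as an instance of $\Sort\uu$, and the crux is an order-theoretic lemma: by proposing a ``good'' linear order (sorted by average heights in the current poset, approximated via an FPRAS for counting linear extensions, using the Kahn--Saks bound), any worst-case returned comparison shrinks the set of consistent linear extensions by a constant factor. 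This gives $O(n\log n)$ per agent, hence $O(n^2\log n)$ total, and the construction is what makes the argument go through; your sketch skips it entirely.

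For \nash, your phrase ``polytope description of Nash equilibria'' is worrying because the set of Nash equilibria of a fixed game is \emph{not} convex, so one cannot run the ellipsoid method in the strategy space. The paper's key move is to run the ellipsoid method in the \emph{payoff space} $\mbR^{2m^2}$, where the unknown input is a single point $U$; a better response returned by $\V$ to a proposed $X$ yields a hyperplane separating $X$ from $U$ (because the computation oracle guarantees the proposed strategy is an equilibrium of $X$), and the degeneracy of $U$ to a point forces the use of the Gr\"otschel--Lov\'asz--Schrijver strong-nonemptiness machinery rather than the standard volume argument. Your ``auxiliary game built from the accumulated best-response inequalities'' gestures in this direction but does not make the payoff-space change of coordinates or the degeneracy issue explicit, which are precisely the two nontrivial points.
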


\nash is a fundamental problem in game theory, and its complexity has been characterized (as $\ppad$-complete)~\cite{DGP10,CDT09}. \Core is also a fundamental problem in cooperative game theory~\cite{NRT07}. Both problems are naturally defined as CSPs.
%$\nash$ can be formulated as a CSP of finding a pair of mixed strategies, where the constraints are that for each player, for each strategy, adopting that strategy is not better than the current (mixed) strategy.
Our algorithms for both $\nash\uu$ and $\Core\uu$ employ the ellipsoid method, although for $\nash\uu$ we shrink the input space, and for $\Core\uu$ we shrink the solution space. One technical difficulty is that the target space may degenerate to the case of containing at most one point. (In $\nash\uu$, there is only one input point, and in $\Core\uu$ the core may contain only one point or even be empty.) Note that the standard perturbation approach, which proceeds by increasing the volume of the feasible region, is not applicable in our setting, because the linear constraints, as the input, are unknown. Here, we employ a more sophisticated ellipsoid method that works as long as the polyhedron can be specified by a \emph{strong separation oracle}. As it turns out, this oracle can be constructed from the verification oracle $\V$ in both problems, and, crucially, the construction for $\nash\uu$ uses the existence of a~Nash~equilibrium~in~\emph{any}~game.

$\SM$ is a problem with interesting combinatorial structures and many applications, such as the pairing of graduating medical students with hospital residencies~\cite{RS92,Roth08}.
%Formally, we are given two sets of elements $M$ and $W$, each element having a preference list of elements in the other set. The task is to find a matching of the two sets \st no two unmatched elements $(m_i, w_j)$ both prefer each other to the currently assigned one.
%In the unknown input version, the preference list of each individual is not known. The algorithm can propose a matching; if it is not stable in the above sense, then a pair $(m_i, w_j)$ not satisfying the above property, sometimes called ``blocking pair", is returned.
$\Sat$ is a natural CSP, with the constraints being the OR of some literals.
Considering the practical significance of $\SM$ and $\Sat$, we take a closer look at their trial complexities.

\begin{Thm}\label{thm:trialbounds}
	We have the following bounds for the trial complexity.
    \begin{itemize}
      \item $\Omega(n^2) \leq \R(\SM\uu) \le \D(\SM\uu) \leq O(n^2\log n)$, where $n$ is the number of agents.
      \item Given a formula with $n$ variables and $m$ clauses, $\R(\Sat\uu) \\ \le \D(\Sat\uu) = O(mn)$. Further, $\R(\Sat\uu) = \Omega(mn)$ if $m = \Omega(n^2)$, and $\R(\Sat\uu) = \Omega(m^{3/2})$ if $m = o(n^2)$.
    \end{itemize}
\end{Thm}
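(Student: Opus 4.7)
My plan is to prove the upper and lower bounds separately, using learning-style algorithms for the upper bounds and adversary arguments for the lower bounds. For the $\SM\uu$ upper bound $O(n^2 \log n)$, I would run a lazy Gale--Shapley simulation that acquires preference information on demand. The algorithm maintains a partial order over the opposite side for each agent, extends these to linear orders arbitrarily, runs Gale--Shapley on the extensions, and proposes the output. Each rejection from $\V$ returns a blocking pair $(m,w)$, which contributes two strict new comparisons $w \succ_m \mu(m)$ and $m \succ_w \mu(w)$ that must refine the current extension of the partial order of $m$ or $w$. The trial bound follows from a comparison-sorting analogy: each agent's linear preference list requires $O(n \log n)$ comparisons to pin down, so an amortized argument gives $O(n^2 \log n)$ trials in total. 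The $\Omega(n^2)$ randomized lower bound is proved via Yao's principle, applied to a distribution over preference profiles in which $\Theta(n^2)$ pairs independently straddle the blocking/non-blocking boundary; since a single trial resolves only $O(1)$ such candidate pairs, $\Omega(n^2)$ trials are needed on average.

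For the $\Sat\uu$ upper bound $O(mn)$, my algorithm maintains the explicit content of all clauses identified so far, uses the computation oracle to construct an assignment $x$ satisfying them, and queries $\V$. A rejection reveals the index $i$ of a hitherto unknown violated clause $C_i$; the algorithm then spends $O(n)$ trials to pin down $C_i$, sweeping over the $n$ variables from $x$ and flipping $x_j$ one at a time. Either the response is again $i$ (certifying $j \notin C_i$) or it signals that $j$'s flipped value satisfies $C_i$ (so $j$ appears in $C_i$ with a specific polarity). The subtle point is that the flipped assignment can also newly violate yet another unseen clause, rendering a non-$i$ response ambiguous; this is handled by queuing the new clause index for later learning, so that each individual trial is charged against learning a single specific clause. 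Since each clause requires only $O(n)$ charged trials and there are at most $m$ clauses, the total is $O(mn)$.

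For the $\Sat\uu$ lower bounds, I would construct adversary families tailored to the two regimes. In the dense regime $m = \Omega(n^2)$, a product family in which each clause is drawn independently from an $\Omega(2^n)$-size set of possibilities yields $\Omega(mn)$ through a sharpened information-theoretic or Yao-style argument. In the sparse regime $m = o(n^2)$, I would design a collision-heavy family in which resolving each clause requires $\Omega(\sqrt m)$ queries, giving the $\Omega(m^{3/2})$ bound. The principal obstacle throughout is ensuring that the adversary's incremental responses remain consistent with a single realizable satisfiable hidden instance; this is handled by maintaining a sufficiently rich pool of candidate instances and collapsing to one at the very end of the interaction.
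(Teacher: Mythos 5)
Your proposal has several genuine gaps, concentrated in the upper bounds but affecting the lower bounds too.

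\textbf{The $\SM\uu$ upper bound.} Extending each agent's partial order \emph{arbitrarily} to a linear order and charging each blocking-pair response to ``$O(n\log n)$ comparisons per agent'' is incorrect. The comparison-sorting figure $O(n\log n)$ is attained because the sorter \emph{chooses} which pairs to compare; here, $\V$ adversarially chooses the violation. With an arbitrary linear extension, the adversary can always return a pair that eliminates only a single linear extension from the consistent set, so a single agent's preference can take $\Theta(n^2)$ trials to resolve, giving only $O(n^3)$ overall. The paper's proof of this bound goes through a ``good order'' lemma for $\Sort\uu$: one must propose the linear extension that sorts elements by their \emph{average height} in the current poset, and then invoke a theorem of Kahn and Saks to show that \emph{every} possible returned pair cuts the number of consistent extensions by a constant factor. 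The average height is $\sharpp$-hard to compute exactly, so the paper also needs an FPRAS for counting linear extensions. None of this appears in your sketch, and it cannot be elided: the ``arbitrary extension'' version of your algorithm simply does not achieve $O(n\log n)$ per agent.

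\textbf{The $\Sat\uu$ upper bound.} Your plan---when $\V$ first reports a violated clause $C_i$, spend $O(n)$ flips from $x$ to pin down $C_i$---does not go through. Flipping $x_j$ to get $x'$, the response ``index $k\neq i$'' tells you only that $C_k(x')=0$; it says nothing about whether $j$ is a literal of $C_i$, since $C_i(x')$ may equal $0$ or $1$ either way. ``Queuing'' $k$ for later does not repair this: you still have not learned anything about $j$ vs.\ $C_i$, and you may need to retry the same flip later, at which point $\V$ can return yet another index. Worse, $x'$ may violate an already-learned clause, in which case the trial is provably wasted. An adaptive $\V$ can exploit this to defeat any sensible charging scheme, so the $O(mn)$ bound is not established. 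The paper sidesteps clause-by-clause learning entirely: it maintains, for each index $i$, a superset $L_i$ of the literals of $C_i$, always proposes an assignment satisfying $\bigwedge_i(\vee_{\ell\in L_i}\ell)$ (which is guaranteed satisfiable whenever $\phi$ is), and observes that on any returned index $i$, \emph{at least one} literal is deleted from $L_i$. Since $\sum_i|L_i|$ starts at $2mn$ and strictly decreases, the bound is immediate, and correctness follows because the true $C_i\subseteq L_i$ is preserved throughout.

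\textbf{The lower bounds.} Both of your sketches are too vague to evaluate, and both diverge from the paper's route in a way that matters. For $\SM\uu$, the paper explicitly does \emph{not} use Yao's principle; instead, for any fixed randomized algorithm that has made $k<(n^2-n)/2$ trials, it applies the probabilistic method over the \emph{transcript graphs} (one per agent, recording learned preference relations) to exhibit, for every candidate output $\mu$, a pair $(m,w)$ that could still be a blocking pair on some consistent input. For $\Sat\uu$, a product/entropy argument of the type you gesture at runs into two problems: each trial yields $\log m$ bits, costing you a $\log m$ factor against $\Omega(mn)$, and more fundamentally the algorithm need not learn the formula, only find a satisfying assignment, so input entropy alone does not lower-bound trials. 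The paper's construction is adversarial and structural: split variables into three blocks, use $\Theta(n^2)$ ``forcing'' clauses to restrict satisfying assignments to a special form $x(i_1,i_2,i_3)$, and hide a unique triple among $(n/3)^3$ candidates so that each trial eliminates exactly one candidate triple; the $m=\omega(n^2)$ and $m=o(n^2)$ regimes are handled by re-blocking. You would need to supply a comparably explicit construction to make your lower bounds stand.
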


The proofs of both lower bounds deviate from the standard method of applying Yao's min-max principle. Rather, they are obtained by arguing that, for an arbitrary but fixed \emph{randomized} algorithm with an insufficient number of queries, there are input instances with disjoint solution sets between which the algorithm cannot distinguish. The existence of such input instances is proved by the probabilistic method for $\SM\uu$, and by an adaptive construction procedure for $\Sat\uu$.

%short: The lower bounds are proved by the probabilistic method for $\SM\uu$, and by an adaptive construction procedure for $\Sat\uu$.

The upper and lower bounds proofs for $\R(\SM\uu)$ also employ order theory~\cite{Bri99,DP02}. A key step is to characterize how fast one can shrink the set of linear orders consistent with a partial order by worst-case pair violations. We identify the average height as the correct measure; the control of which allows us to bound the speed of the shrinkage. %by a result by Kahn and Saks \cite{KS84}. The height $h(a)$ is $\sharpp$-hard to compute \cite{BW91}, but fortunately a FPRAS by Dyer, Frieze and Kannan \cite{DFK89} can be used to give us an estimate of $h(a)$ to a small enough precision without losing the control of running time.
Along the way, we examine another natural problem, $\Sort\uu$, whose trial complexity is completely pinned down as $\Theta(n\log n)$.

\medskip
It is somewhat surprising that knowing only the indices of violated constraints %returned by the verification oracle \V, it
is already sufficient to admit quite a number of efficient algorithms. %for a number of problems.
It is therefore natural to wonder whether %this is always true. %It is therefore natural to ask if
the lack of input information adds any extra difficulty at all in finding a solution.
We find that it does indeed: there are problems whose unknown-input versions are considerably more difficult than their known versions. Two representatives are $\GraphI$ and $\GpI$, the problems of deciding whether two given graphs or groups are isomorphic.
%We consider two isomorphism problems: graph and group isomorphism.
%$\GraphI$ (find an isomorphic mapping between two graphs if one exists) and $\GpI(\cdot,\mbZ_p)$ (find an isomorphism between a given group to $\mbZ_p$ where $p$ is a prime number).
%These problem is interesting because of many applications and also because of its elusive complexity.

\begin{Thm}
%If the target problems themselves are used as the computational oracle,
We have the following hardness results.
    \begin{itemize}
      \item If $\GraphI\uu\in \p^{\scriptsize \V, \GraphI}$, then
      the polynomial hierarchy ($\ph$) collapses to the second level.
      \item If $\GpI(\cdot,\mbZ_p)\uu\in \p^{\scriptsize \V}$, then we have $\p=\np$. Here, $\GpI(\cdot,\mbZ_p)$ is the group isomorphism problem with the second group known as $\mbZ_p$ for a prime $p$.
    \end{itemize}
However, if $\Sat$ is given as the computation oracle, then we have deterministic polynomial-time algorithms for $\GraphI$ and $\GpI$, i.e., $\GraphI\uu \in \p^{\scriptsize \V, \Sat}$ and $\GpI\uu\in \p^{\scriptsize \V, \Sat}$, with $O(n^6)$ and $O(n^2)$ trials, respectively. %$\D(\GpI\uu)=O(n^6)$ and $\D(\GraphI\uu)=O(n^2)$ for the trial complexity.
\end{Thm}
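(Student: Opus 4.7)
I treat the three assertions in turn: the upper bound admits a uniform ``learn-then-$\Sat$'' template, while the two hardness reductions share a common simulator device whose construction is the key technical content.

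\textbf{Upper bounds.} For each of $\GraphI\uu$ and $\GpI\uu$ I would maintain the transcript $\mathcal L$ of all $(\pi,\text{violated index})$ pairs returned by $\V$ so far, and at every step query the $\Sat$ oracle for a propositional witness to the statement ``there exist an input $I$ and a candidate $\pi$ such that $I$ is consistent with $\mathcal L$ and $\pi$ is a valid solution of $I$''. I extract $\pi$ from the witness and propose it; either it is confirmed or a new violation strictly enlarges $\mathcal L$. For $\GpI\uu$ each failed trial refines one of the $n^2$ multiplication-table cells relative to the current candidate, giving $O(n^2)$ trials. For $\GraphI\uu$ each violation reveals only a single parity equation between adjacency bits of $G_1$ and $G_2$, which is why the bound weakens to $O(n^6)$; I would obtain it by charging an $O(n^4)$ budget of parity equations to each of the $O(n^2)$ vertex pairs. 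Correctness is immediate: the true unknown input is always a satisfying assignment of the $\Sat$ instance the oracle solves, so the Sat-returned witness is never silently inconsistent with the true input.

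\textbf{Hardness of $\GraphI\uu$.} Suppose $\GraphI\uu\in\p^{\V,\GraphI}$ via algorithm $A$. The plan is to use $A$ to Turing-reduce $\Sat$ to $\GraphI$; combined with $\GraphI\in\co\am$ and the Boppana-H\aa stad-Zachos theorem, $\Sat\in\p^{\GraphI}$ forces $\ph$ to its second level. Given a $\Sat$ formula $\phi$, I would construct a polynomial-time simulator $\V_\phi$ faithful to an implicitly described graph pair $(G_1(\phi),G_2(\phi))$ with $G_1(\phi)\cong G_2(\phi)\Leftrightarrow\phi\in\Sat$; running $A$ with $\V_\phi$ and the true $\GraphI$ oracle then decides $\phi$. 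I would assemble the virtual pair from per-clause gadgets, glued so that the adjacency of any queried vertex pair depends on only a constant number of clauses of $\phi$ (making each $\V_\phi$-answer polynomial time), while the global isomorphism class is governed by $\phi$'s satisfiability; $A$'s $\GraphI$ oracle is only invoked on gadget-sized subproblems, never on the full virtual pair. The main obstacle is to simultaneously enforce (i) local polynomial-time answerability, (ii) global consistency with some actual graph pair, and (iii) a $\Sat$-encoding of the isomorphism class---precisely the three conditions that make direct SAT-to-GraphI reductions elusive. I expect the trial-and-error interface to supply the needed slack, because the simulator need only be consistent with \emph{some} graph pair rather than one committed a~priori.

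\textbf{Hardness of $\GpI(\cdot,\mbZ_p)\uu$.} Because $\GpI(\cdot,\mbZ_p)\in\p$, no computation oracle is required, and the same template gives $\Sat\in\p^{\V_\phi}$, i.e.\ $\p=\np$. The twist specific to this problem is that every group of prime order $p$ is already abstractly isomorphic to $\mbZ_p$, so satisfiability cannot be encoded in the \emph{existence} of an isomorphism; it must be encoded in the act of \emph{producing} one. For $\phi$ on $n$ variables and a prime $p=\mathrm{poly}(n)$, I would design a virtual $G_\phi$ so that any isomorphism $\pi\colon G_\phi\to\mbZ_p$ produced by $A$ decodes, in a designated window of its image values, to a satisfying assignment of $\phi$---the $n^2$ multiplication constraints are arranged to force every iso to carry such an encoding. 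The simulator $\V_\phi$ locates and reports a violation in polynomial time by checking one multiplication cell against an $O(\log n)$-bit portion of $\phi$. The same consistency-versus-efficiency trade-off is the principal obstacle, though it is somewhat easier here because a group's multiplication is strictly more rigid locally than a graph's adjacency, so local consistency can be enforced cell-by-cell during the simulation.
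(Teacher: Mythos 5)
Your upper-bound sketch is in the spirit of the paper (candidate elimination driven by a $\Sat$ oracle), and your observation for $\GpI(\cdot,\mbZ_p)\uu$ that ``satisfiability must be encoded in the act of producing an isomorphism, not in its existence'' is exactly the right insight. But the two hardness arguments, as you have sketched them, have genuine gaps and in one case head in the wrong direction.

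For $\GraphI\uu$ you propose to build a virtual pair $(G_1(\phi),G_2(\phi))$ from per-clause gadgets with $G_1(\phi)\cong G_2(\phi)\Leftrightarrow\phi\in\Sat$. Notice that if such a gadget construction existed with the adjacency of every queried pair polynomial-time computable, you would already have a polynomial-time many-one reduction from $\Sat$ to $\GraphI$, and the $\ph$ collapse would follow from Boppana--H{\aa}stad--Zachos without any appeal to the trial-and-error algorithm. The ``slack'' you hope to extract from the interface does not rescue this: the simulator must remain consistent with \emph{some} concrete graph pair, and the eventual output (bijection or ``not isomorphic'') must be decoded into a $\Sat$ verdict, which requires the pair's isomorphism class to encode satisfiability. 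The paper avoids this trap entirely. It reduces from $\Clique$: fix $G_2=G$ (the input graph), fix $G_1$ to be a $k$-clique plus $n-k$ isolated vertices, and define a \emph{real} verification oracle $\mcO$ that, among all violated pairs, prioritizes those with both indices $\le k$. Because $E(G_1)$ lives inside $[k]$, $\mcO$ can answer every query in polynomial time from $G$ alone, and the moment it is forced to return a pair $(i,j)$ with $\max\{i,j\}>k$, the current $\pi$ already embeds a $k$-clique into $G$. The isomorphism class of $(G_1,G_2)$ is never used to encode an NP-hard decision; only the oracle's prioritization is. This gives $\np\subseteq\p^{\GraphI}$ (under Turing reduction) and then the collapse. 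Your route, by contrast, cannot be completed without resolving a long-standing open problem.

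For $\GpI(\cdot,\mbZ_p)\uu$ you leave the construction of $G_\phi$ and the simulator unspecified, and the phrase ``the $n^2$ multiplication constraints are arranged to force every iso to carry such an encoding'' glosses over the essential difficulty: since every group of prime order is cyclic, the only freedom in an isomorphism $\pi\colon T\to\mbZ_p$ is the image of a generator ($p-1$ choices), so there is no room to embed an $n$-bit assignment into $\pi$ itself. The paper reduces instead from (promise) Hamiltonian Cycle: the group $T$ is defined so that its cyclic order \emph{is} a Hamiltonian cycle of the input graph, and the simulator $\V'$, given a proposed $\pi$ with $\pi(a_1)=x\ne 0$, scans $i=0,\ldots,p-1$ and returns the first $i$ with $(\pi^{-1}(ix),\pi^{-1}((i+1)x))\notin E$; if none exists it returns \yes, at which point $(\pi^{-1}(0),\ldots,\pi^{-1}((p-1)x))$ is already a Hamiltonian cycle. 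The two guarantees the paper proves are exactly the two you gestured at (every violation is genuine; a \yes\ yields a witness), but they rest on this concrete Hamiltonian-cycle-as-group-structure encoding, not on an unspecified $\Sat$ gadget. Without producing $G_\phi$ and verifying those two properties, your argument does not yet constitute a proof.

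On the upper bounds, your template is plausible but you would still need to justify the trial counts concretely: the paper does so for $\GraphI\uu$ by tracking connected components in a bipartite graph on pairs, giving at most $2\binom{n}{2}-1$ trials, and for $\GpI\uu$ by counting forbidden triples; your ``charge an $O(n^4)$ budget of parity equations to each pair'' is not yet a proof of termination within the claimed bound.
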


%$\GpI(\cdot,\mbZ_p)$ determines if a given group is isomorphic to $\mbZ_p$ where $p$ is a prime number and find such an isomorphism if yes.
%The lower bound for $\GpI\uu$ actually holds even if one of the two groups is known to the algorithm.
Note that $\GpI(\cdot,\mbZ_p)$ (with a known input) admits a simple polynomial-time algorithm by comparing the multiplication tables. %table of the given group to that of $\mbZ_p$.
Actually, $\GpI$ is in $\p$ if the two groups are Abelian~\cite{Kav07}. However, if the multiplication table of the input group is unknown, then, surprisingly, the problem becomes \np-hard. %as hard as solving an \np-hard problem.
Interestingly, this substantial increase in computational difficulty occurs only for time complexity, not for trial complexity, which can be seen as a tradeoff (from below) between the two complexity measures---a phenomenon not commonly seen in other query models.
%This implies a substantial overhead over the known-input setting due to additional necessary computation (but not to trial complexity). Our results for the $\GraphI$ and $\GroupI$ show a tradeoff (from below) between the computational cost and the (trial-)query cost, which is not commonly seen (if any) in traditional query models.

This hardness result for $\GpI(\cdot,\mbZ_p)\uu$ is proved by a nonstandard reduction from the classic \np-complete problem of finding a Hamiltonian cycle. We use an algorithm $\mcA$ for $\GpI(\cdot,\mbZ_p)\uu$ to find a Hamiltonian cycle in a given graph $G$ in the following way. Assuming the existence of a Hamiltonian cycle $C$, which does not change the $\np$-completeness of the problem, we define a group $H$ via $C$ and run $\mcA$ on input $(H,\mbZ_p)$. An issue here is that because the reduction algorithm has only polynomial time, it cannot find such a Hamiltonian cycle for defining $H$. A related issue is how to provide the verification oracle $\V$ for $\mcA$ without knowing $C$. These issues can be overcome by (i) making use of the crucial property that $\mcA$ does not know its input, and (ii) designing an efficient simulator $\V'$ for the verification oracle $\V$. Due to the time constraint, $\V'$ cannot perfectly mimic $\V$ to answer all of $\mcA$'s questions correctly. However, it is designed with the favorable property that whenever it produces an incorrect answer, a Hamiltonian cycle in $G$ has just been found. ($\mcA$'s correctness on $H$ may already have been compromised, but we are not further concerned with it. We simply use $\mcA$'s code to serve the purpose of finding a Hamiltonian cycle in $G$.)

%The proof of the hardness result is done by a reduction that is not very standard. We use an algorithm $\mcA$ for $\GpI(\cdot,\mbZ_p)\uu$ to find a Hamiltonian cycle in a given graph in the following way. First define a group $T$ via $C$ and run $\mcA$ on input $(T,\mbZ_p)$, then a Hamiltonian cycle can be derived from the output of $\mcA$. An issue is how to provide the verification oracle $\V$ for $\mcA$ without knowing $C$. In order to overcome the issue, we design a efficient simulator $\V'$ for the verification oracle $\V$. Though $\V'$ cannot answer all $\mcA$'s questions correctly, it has the property that by the time it gives the first wrong answer, a Hamiltonian cycle has just been found (then we does not care about $\mcA$'s correctness any further).

\smallskip
Finally, beyond all of the foregoing problems that can be solved in $\p^{\scriptsize \V, \Sat}$, we show via an information theoretical argument that certain other problems, such as Subset Sum, have an exponential lower bound for the randomized trial complexity.
%We also consider the problem $\ssum$, which decides whether a given set of integers can be partitioned into two parts with equal summation of elements. Interestingly, we can prove an exponential lower bound.

\begin{Thm}
  $\R(\ssum\uu) = \Omega(2^n)$.
\end{Thm}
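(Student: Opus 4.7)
The plan exploits the fact that the natural CSP formulation of $\ssum$ has a \emph{single} constraint, namely the linear equation $\sum_{i=1}^{n} a_i x_i = T$. Consequently, on any candidate $x \in \Bn$ the verification oracle $\V$ either accepts or returns the fixed index ``constraint $1$ violated,'' so every negative response conveys no information beyond ``this $x$ is not the solution.'' The whole argument therefore reduces to showing that searching $\Bn$ with a plain yes/no oracle requires $\Omega(2^n)$ trials.

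The first step is to construct a hard family of instances. Take $a_i = 2^{i-1}$; then every subset $S\subseteq\{1,\dots,n\}$ has a distinct sum, so for each $x^* \in \Bn$ the choice $T = \sum_{i=1}^n a_i x^*_i$ gives an instance whose unique solution is $x^*$. Let $\mcB$ denote this family of $2^n$ instances, equipped with the uniform distribution on $x^*$.

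Next I would bound any deterministic algorithm against $\mcB$. As long as no previous query has hit the true $x^*$, the oracle has returned the identical response ``constraint $1$ violated'' every time, so the sequence of queries $y_1, y_2, \dots$ that the algorithm produces is a fixed sequence depending only on the algorithm, not on $x^*$. For a uniformly random $x^* \in \Bn$, the probability that $x^*$ coincides with some $y_j$ with $j \le q$ is therefore at most $q / 2^n$. Since the algorithm is Las Vegas and must eventually succeed, the expected number of queries is at least $(2^n+1)/2 = \Omega(2^n)$.

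Finally, applying the Las Vegas form of Yao's minimax principle---the worst-case expected cost of any randomized algorithm is at least the expected cost of the best deterministic algorithm under any input distribution---yields $\R(\ssum\uu) = \Omega(2^n)$. There is no real obstacle here: the only point that deserves care is isolating why the oracle feedback on $\ssum\uu$ is informationally trivial (one constraint implies the only possible error index is $1$), and after that the result is the standard adversary-style lower bound.
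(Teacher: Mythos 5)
Your proof has a genuine gap stemming from a misreading of the verification oracle for $\ssum\uu$. You model the problem as a single constraint $\sum_i a_i x_i = T$ and assert that the oracle's only possible response on a non-solution is the uninformative ``constraint $1$ is violated.'' But that is not what the paper defines. In the paper's CSP formulation, the equality $\sum_{a\in S_1}a = \sum_{b\in S_2}b$ is deliberately split into two inequality constraints, and the oracle returns \emph{which} of the two is violated, i.e., which of $S_1, S_2$ has the larger total. So a negative response is \emph{not} informationally trivial: it reveals the sign of $\sum_{S_1}a - \sum_{S_2}a$.

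This matters decisively, because your hard family is broken under the real oracle. With $a_i = 2^{i-1}$ and a target determined by a hidden $x^*$, every ``the left set is larger / smaller'' response is exactly a higher/lower comparison against an $n$-bit number, so an algorithm can binary-search and pin down $x^*$ in $O(n)$ trials. Thus the claimed $\Omega(2^n)$ bound on your instance family is false, and the reduction to ``plain yes/no search over $\Bn$'' does not go through. The paper avoids this by a very different construction: a multiset built from one dominant element ($a_1 = M+n+2$) and two blocks of nearly equal values ($M+2$ and $M+3$), chosen so that on any unequal-sized proposed partition the larger side always wins, and on any equal-sized proposal the side containing $a_1$ always wins. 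In that setup the sign feedback collapses to ``which side has $a_1$,'' which is genuinely uninformative about the identity of the other $2n-1$ elements, yielding the $\binom{2n-1}{n}$ lower bound. To repair your argument you would need to neutralize the sign information in a similar way rather than assume it away.
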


In a followup work~\cite{BCZ12}, we showed that solving an unknown linear programming with $m$ constraints and $n$ variable requires $\Omega(m^{n/2})$ queries to the oracle, i.e., it also has an exponential lower bound on the trial complexity. This result implies that our efficient algorithm solving $\nash\uu$ does not completely resort to the computation oracle $\nash$: it is indeed the combinatorial structure of Nash equilibrium (i.e., its existence) that yields our algorithm. See more discussions in~\cite{BCZ12}.

Our results illustrate the variety of time and trial complexities that arise from the lack of input information for different problems, and imply distinct levels of the cruciality of input information for different problems.
%Most proofs of our results are in Appendix due to the space limit.

%Technique-wise, because of the distinction between the input space and the solution space, there are two possible high-level approaches in algorithm design.
%One directly deals with the solution space: Since the final goal is to find a solution, some of our algorithms (e.g., the one for $\Core\uu$) are based on shrinking the solution space instead of the input space.
%The other type of algorithms (e.g., those for $\Sat\uu$ and $\nash\uu$) is to shrink the input space; the similar idea has been used in learning theory extensively, e.g., candidate elimination. However, the objective and efficiency of our algorithms are completely different from those in learning; see more discussions in Section~\ref{sectoin-related-work}.

%In either approach, a common phenomenon is the interplay of the two spaces. Apart from the correlation of the shrinkage of both spaces, the key of many of our algorithms lies on the design of the next queries. The query, if not a valid solution, can be somehow used to cut out a decent fraction of the current target space, regardless of which constraint the query violates.

\smallskip
In addition to the specific techniques previously mentioned for each problem, a general remark is that, at a very high level, our algorithms are in line with the candidate elimination approach, similar to many existing learning algorithms~\cite{KV94}. However, our framework allows a space for possible inputs {\em and} a space for possible solutions---the interplay between them seems to be the main source of combinatorial structures, and how well the two spaces are combinatorially related accounts for the complexity of the problem.
Some algorithms (e.g., that for $\Core\uu$) obtain their efficiency by directly shrinking the solution space. Even for those that shrink the input space (e.g., those for $\Sat\uu$ and $\nash\uu$), the key is to explore the relation of the two spaces and to design trials such that even a worst-case violation can be used to cut out a decent fraction of the input space.

%short: This enables better efficiency (e.g. the algorithm for $\Core\uu$ directly shrinks the solution space), and the interplay between the two spaces is the main source of combinatorial structures of our framework. How well the two spaces get along accounts for the upper and lower bounds of the corresponding problem.

%v2: The other type of algorithms shrink the input space. There is still an important difference to the candidate elimination approach in machine learning. Stopping criteria is another difference: In all of our later algorithms (such as those for \SM, \Sat, \nash, \Core), for each proposed candidate solution, either it is a valid one, in which case we stop, or the algorithm is designed such that a returned violation contains substantial new information about the input. Thus

\subsection{Relation to Existing Work}\label{sectoin-related-work}

%Our model with unknown inputs bears a resemblance to some other problems and models, but with fundamental distinctions. Next we give a detailed discussion on the relation of our work to these problems.

Our model with unknown inputs bears a resemblance to certain other problems and models, e.g., learning, algorithm design in unknown environments, ellipsoid method, and query complexity. However, there are fundamental distinctions between these models and ours. We now provide a detailed discussion of our work's relation to these models and problem.
%In this section and Appendix~\ref{appendix-related-work}, we provide a detailed discussion of our work's relation to these models and problems.

\medskip
\noindent {\bf Learning.}
Our model has strong connections to various learning theories,
%(e.g., concept learning with membership or equivalence query~\cite{Ang04}, decision tree learning, reinforcement learning~\cite{BS98}, and (semi-)supervised learning~\cite{BB10}), 
but fundamental differences also exist.

\begin{enumerate}
\item Learning theories, in essence, aim to identify the unknown object {\em itself}, either exactly (as in concept learning) or approximately (sometimes in the form of a prediction, as in PAC learning and active learning). In our model, however, the ultimate goal is quite different: we attempt only to find a solution of an unknown object, without necessarily learning the object itself. For certain applications (such as the aforementioned development of diagnostic reagents), finding a solution is indeed the main mission.

\item It is important to note that in our model, a solution may be found long before the exact input is learnt. Further, in certain cases, such as $\Sat\uu$ and $\nash\uu$, the exact input may take an exponential number of queries or even be impossible to learn.
    For $\Sat\uu$, even if we relax the requirement by allowing to output any formula within a cluster in which all formulas have the same set of satisfying assignments as the hidden input formula,
    it is still exponentially harder than finding a solution (Proposition~\ref{prop-SAT-learning}).
    Our algorithm, in contrast, is able to find a solution in polynomial time without learning the exact input formula.

%(As the other side of the coin, this general potential of finding solutions without learning out the input also giving challenges for proving lower bounds.)

\item Both similarities and differences abound in existing learning theories, and deciding which one to use in a specific application largely depends on the available method of accessing the unknown. As we have demonstrated, there are a fairly large number of scenarios in which the only available access to the unknown is provided by the verification oracle, but existing learning theories do not seem to address such situations.
\end{enumerate}
%\smallskip
In summary, with its solution-oriented objective and advantages in computational efficiency, the present work is hopefully to serve as a useful supplement to existing learning theories, particularly in contexts in which the unknown object itself is impossible or unaffordable to learn and the only available access to the unknown is through a solution-verification process. % and broaden the application domains with more positive results.

Next we give a detailed discussion on the relationship between our model and relevant learning theories.

\begin{itemize}
\item {\em Concept learning}.
    In concept learning, a concept is secretly drawn from a given concept class, and the task at hand is to identify it; see the survey by Angluin~\cite{Ang04}. More precisely, given a domain set $X$ an      d a collection $C$ of concepts, each $c\in C$ maps $X\rightarrow \{0,1\}$, defining a table with rows $C$ and columns $X$. Further, there is an unknown concept $c^*\in C$. To identify the row/concept $c^*$, two types of queries are commonly used: (i) a {\em membership query}, where one queries a particular column/domain element $x\in X$, and an oracle returns its value $c^*(x)$,
    and (ii) an {\em equivalence query}, where one proposes a particular row/concept $c\in C$, and an oracle returns either a confirmative answer or a column/domain element $x$ with $c(x) \neq c^*(x)$. If the proposed concept $c$ is allowed to be any mapping from $X$ to $\{0,1\}$, then it is called an extended equivalence query. %in $c\oplus c^*$ that is classified differently by the proposed concept and the unknown concept.

    Despite having some similar features, our model cannot be cast into the membership or equivalence query frameworks. For a search problem, we can take the set of inputs as the concept class $C$, take all possible solutions as the domain $X$, and let $c(x)=1$ if and only if $x\in X$ is a solution of the instance $c$. A membership query is thus similar to our model in that ours also proposes solutions $x\in X$. However, the goal in our problem is to find a solution $x$ that satisfies $c^*(x)=1$, where $c^*$ is the hidden input, whereas the task in the membership query model is to identify the concept $c^*$. The second important difference is that other than knowing whether or not a proposed solution is valid, in our model, we also gain extra information on the index of a violated constraint. Both this extra information and the feature of not necessarily learning the input enable us to obtain (more) combinatorial structures and to design efficient algorithms.

    Alternatively, we can take the solution space as $C$ and the set of constraints (corresponding to an input instance) as $X$; then, our problem is to find a solution $c$ that satisfies $c(x)=1$ for all $x\in X$, i.e., all constraints are satisfied. Our trial and error model then becomes similar to the equivalence query model: each time we propose a solution $c$ and an oracle returns a violated constraint $x$ where $c(x)=0$.
    However, note that different input instances correspond to different sets of constraints, and thus induce different $(C,X)$ tables. As a result, there are actually many such tables in our model, and we do not know which one we are faced with. Our task is still to find a row (i.e., a solution) in which all entries are 1 in the hidden table, and the algorithm needs to succeed for {\em any} of the possible input tables $(C,X)$. Thus, both the unknowns and objectives of the two models are entirely distinct.

    We could also put all of the possible constraints together to form a much larger table, with the rows indexed by all possible solutions $c\in C$ and the columns indexed by all possible constraints $x\in X$, with $c(x) = 1$ if the solution $c$ satisfies the constraint $x$, and $c(x) = 0$ otherwise. This approach would produce only one fixed table (as in concept learning), but the hidden input would now correspond to a hidden subset of columns. When a violation $x$ is returned, it is not the identity of a column of the large table, but the column's relative position inside the hidden subset of columns. Further, the required task is also different. In the equivalence query model a row is to be identified, whereas in our model we are asked merely to return one row $c\in C$ that satisfies $c(x) = 1$ for all $x$ from the hidden subset of columns.

    In addition, even more differences exist between our model and concept learning. In our model, a search problem instance may have several different solutions; thus, the correct output of an algorithm is not unique. However, in concept learning, the return of an algorithm has to be the unique hidden input $c^*$. Another difference lies in the complexity measure. In concept learning, the cost of an algorithm is measured in terms of $|C|$ and $|X|$. In our model, in contrast, the complexity is evaluated in terms of the input size of the search problem, rather than the size of the solution space, which can be exponential or even unbounded.

\item {\em Active learning}. A more general type of query learning model is active learning. Roughly speaking, there are two sets, $L$ and $U$, whose data are labeled and unlabeled, respectively. Based on labeled set $L$, a learning algorithm interactively queries an oracle concerning certain data instances from unlabeled set $U$. The oracle then returns the labels of the queried data, which are added into $L$. See the survey by Settles~\cite{Set09} for a more detailed discussion. There are a number of similarities between active learning and our model. For example, both consider an iterative and adaptive process that involves posing selected queries to a given oracle, and both investigate the complexity of interacting with this oracle~\cite{Das05,BHV10}. However, the two oracles are fundamentally different. In active learning, the oracle always returns the {\em correct} answer (i.e., label), whereas in ours it always returns one of the {\em erroneous} constraints. In addition, the objective in active learning is to achieve a high degree of accuracy (for the prediction of the data in unlabeled set $U$) using as few labeled instances as possible, which is very different from the objective in ours.

\item {\em PAC learning}. Another slightly related model is Valiant's probably approximately correct (PAC) learning model~\cite{Val84}, in which from a probability distribution $D$ over the domain set $X$, we can sample instances and be told whether or not they are supported by the unknown concept. Based on the training samples (the minimum number of which is called the sample complexity), the objective is to propose a concept (called a hypothesis) that approximates the unknown concept with a small probability of error (which can be used to predict future samples). Classic examples include DNF learning~\cite{Jac97,KS04,BMO05} and halfspace learning~\cite{KOS04,KKM08,KS09}. As in concept learning, PAC also aims to learn the unknown itself (although an approximation is allowed) rather than an induced solution, as in our model.

    %In PAC learning, it is not necessarily to output the original unknown; in this sense it has a similar flavor to our model. However, an output of a PAC learning problem must be `close' and of the same form to the unknown input, whereas in ours an output solution can have completely different format to the input.

    %One of the most important PAC learning problems is learning boolean formulas in disjunctive formal form (DNF).
    %That is, the concept class $C$ contains all possible DNF formulas, and each time we can sample an assignment to the variables and learn the truth value of the unknown formula.
    %It is still an open problem if DNF formulas is PAC learnable in polynomial time even if the probability distribution $D$ is uniform; the fastest algorithm known so far is due to Klivans and Servedio with runtime %$\exp(n^{1/3}\log^2 n)$~\cite{KS01}.
\end{itemize}

There are also many other learning models, e.g., decision tree learning, reinforcement learning~\cite{BS98}, statistical learning~\cite{Vap98}, (semi-)supervised learning~\cite{BB10}, and learning with errors~\cite{Re09}; see~\cite{KV94,Mit97} and the references therein for a more detailed discussion. The high-level philosophy of these models is also ``sample and predict'', 
%that is, the objective is to propose a hypothesis to predict future instances based on sampled instances,
which is very different from our trial and search (for a solution).

\medskip
\noindent {\bf Algorithms in an unknown environment.}
Theorists have addressed the exploration of an unknown environment in several domains.
In robot exploration and navigation, a robot is placed in an unknown geometric terrain with obstacles~\cite{BRS97,DKP98,HIK01,AKS02,IKL04} or a graph with unknown edges~\cite{DP90,PP98,ABR99,AH00},
and the robot's goal is to explore the entire unknown space starting from a given point.
A similar problem is path planning, where again we are given an unknown environment, but the objective is to find a desired path between two specific locations in either graphs~\cite{PY91} or geometric spaces~\cite{LS95,LS96,AWZ00}. Readers are referred to~\cite{Mit98} for a survey of these topics and results.
A feature our model has in common with these is that the input object under study is unknown.
However, as opposed to the proposal of entire solutions in our model, solution finding in the robot exploration and path planning models takes place via a ``local search'' type approach, where an agent looks for the next move on the basis of his or her current location and historical information. Despite being a very natural model for these applications, it seems difficult to generalize to scenarios without an underlying geometric structure. %In our model, at every stage there is proposed global solution.

\medskip
\noindent {\bf Ellipsoid method.}
Our trial and error search model is, in spirit, similar to the ellipsoid method (see, e.g.,~\cite{GLS88}), in which a point is proposed as a trial, and a separating hyperplane is returned as an error. The ellipsoid method is an elegant approach for proving the polynomial time solvability of a class of combinatorial optimization problems; it applies even when the explicit expressions of the constraints are unknown.
However, our trial and error model includes a much broader class of search problems---not only convex optimization problems, but also many with pure combinatorial structures (e.g., the \Sat, \GpI, and \GraphI problems considered in our paper). From this perspective, the ellipsoid method is only one possible approach for the trial and error search problems in our model. (Indeed, the algorithms for the $\Core\uu$ and $\nash\uu$ problems considered herein are built crucially on the ellipsoid method; but we also employ other approaches to solve other problems, including $\Sat\uu$ and $\SM\uu$.) In addition, even if a problem can be solved using an ellipsoid-based approach, its trial and time complexities may be quite large (e.g., the ellipsoid method cannot compete with the simplex algorithm for practical calculations). Therefore, for problems with numerous applications, e.g., $\SM\uu$, a more efficient (combinatorial) algorithm is desirable (note that a stable matching instance can be written by a linear program~\cite{Van89}).

\medskip
\noindent {\bf Complexity.}
In the query model (also known as decision tree model), an algorithm makes queries in the form of ``$x_i = ?$'', and the task is to compute a function $f$ on the unknown $x$ by the minimum number of queries~\cite{BdW02}. %Query algorithms also apply to promise problems. For example, if the aim is only to distinguish inputs with a certain property from those far from having the property, then very efficient algorithms have been discovered for a wide range of problems in the area of property testing \cite{Ron08}.
Although this area has the same flavor of computing a function without learning all input variables, it is quite different from our model in the form of queries allowed.\footnote{Other forms of queries were also considered, such as those in linear decision trees and algebraic decision trees, but they are still in a very restricted form of queries.}
Therefore, our results on trial complexity can be viewed as an extension of the traditional query model by allowing a much larger class of queries with natural motivations.

In some cryptographic tasks, input instances are hidden from one party. For example, in instance-hiding proof systems~\cite{BFOS93}, the verifier tries to compute a function $f$ on input $x$ by interacting with one or more provers, without leaking any information on $x$ to the prover(s). Although this model is clearly very different from ours, an interesting research direction would be to explore the connections between our model and various cryptographic tasks.

%\vspace{-0.05in}
\section{Stable Matching}\label{section-stable-matching}

In a Gale-Shapley two-sided matching market model~\cite{GS62}, we are given a set of men $M$ and a set of women $W$, where $|M|=|W|=n$. Each man $m \in M$ has a strict and complete {\em preference list},\footnote{We follow the model proposed by Gale and Shapley in their seminal work~\cite{GS62}, where the number of men and women is the same, and every individual's preference is assumed to be complete and strict. All of these assumptions can be removed in our results, but for simplicity of exposition, we will adopt Gale and Shapley's original model.} denoted by $\succ_m$, ranking all the women in $W$, where $w_1\succ_m w_2$ means that $m$ prefers $w_1$ to $w_2$.
%Every two women in $P_m$ are comparable and the
The preference list $\succ_m$ is assumed to be transitive (i.e., if $w_1\succ_m w_2$ and $w_2\succ_m w_3$, then $w_1\succ_m w_3$). The preference list $\succ_w$ of every woman $w\in W$ is defined similarly.

Given a matching between $M$ and $W$, denoted by $\mu$, we say that $m\in M$ and $w\in W$ form a \emph{blocking pair} if
both prefer each other to their matched partner in $\mu$, i.e., $w \succ_m \mu(m)$ and $m\succ_w \mu(w)$, where $\mu(m)$ and $\mu(w)$ are the woman matched to $m$ and the man matched to $w$ in $\mu$, respectively.
Matching $\mu$ is called {\em stable} if it contains no blocking pair. The $\SM$ problem is to find a matching that is stable, namely, a matching that satisfies the following set of constraints, labeled by man-woman pairs. %\\[-0.15in]
\begin{equation}
\mbox{Either $\mu(m) \succ_m w$ or $\mu(w)\succ_w m$,\ $\forall m\in M, w\in W$.} %\\[-0.02in]
\end{equation}
A stable matching always exists, and can be computed by Gale-Shapley's deferred acceptance algorithm in time $O(n^2)$.

In the unknown-input version of the stable matching problem, denoted by $\SM\uu$, we do not
know the preference lists $\succ_m$ and $\succ_w$. What we can do
is to propose candidate matchings as potential solutions. If a proposed matching is indeed stable, then the verification oracle $\V$ returns \yes, and the problem is solved. If it is not  stable, then one constraint (i.e., a blocking pair) is revealed by $\V$.
Our first result is an $O(n^2\log{n})$ upper bound on the randomized trial complexity of $\SM\uu$.

\begin{Thm}\label{Thm:SMUpperBound}
  There is a polynomial-time randomized algorithm solving $\SM\uu$ with~$O(n^2\log{n})$ trials.
\end{Thm}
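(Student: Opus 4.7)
The plan is to maintain, for each of the $2n$ individuals, a partial order $P_i$ on their possible partners that records every preference comparison learned so far; initially all $P_i$ are empty. The algorithm proceeds in rounds. In each round I sample, independently for every individual, a uniformly random linear extension $\sigma_i$ of $P_i$ (using an approximately uniform poset-extension sampler, which runs in polynomial time), assemble these extensions into a guess for the whole preference profile, apply the classical Gale--Shapley deferred-acceptance algorithm on this guessed profile to obtain a candidate matching $\mu$, and call $\V$ on $\mu$. If $\V$ accepts, we output $\mu$; otherwise $\V$ returns a blocking pair $(m,w)$, from which we read off the two true comparisons $w \succ_m \mu(m)$ and $m \succ_w \mu(w)$ and insert them into $P_m$ and $P_w$ before the next round.

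The first observation is that every failed trial strictly refines some $P_i$: since $\mu$ is stable with respect to the sampled profile, $(m,w)$ cannot block there, which forces either $\mu(m) \succ_m w$ in $\sigma_m$ or $\mu(w) \succ_w m$ in $\sigma_w$; the opposite (true) comparison in the corresponding individual's partial order was therefore genuinely new, so $|e(P_i)|$ drops for at least one $i$. Once every $P_i$ is a total order, the algorithm has pinned down the true preference profile and Gale--Shapley produces a truly stable matching, so the process terminates.

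For the quantitative bound I would use the potential $\Phi = \sum_i \log |e(P_i)|$, where $e(P_i)$ denotes the set of linear extensions of $P_i$. Initially $\Phi = 2n \log(n!) = O(n^2 \log n)$, and it ends at $0$. Adding to $P_i$ a comparison that a fraction $p$ of its extensions already satisfied decreases $\Phi$ by $\log(1/p)$. The central lemma to establish is that, in expectation over the algorithm's random sampling of $\sigma_m$ and $\sigma_w$, this one-step decrease is $\Omega(1)$ uniformly over the adversary's choice of blocking pair. This is where I would invoke the average-height calculus from order theory previewed in the paper's overview: the event ``the sampled extension disagreed with reality on the returned comparison'' has probability $1-p$, and coupling this with the $\log(1/p)$ potential drop shows, after averaging over the adversary's legal responses, an expected per-trial drop bounded away from zero.

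The main obstacle is precisely this one-step $\Omega(1)$ expected-drop claim. A purely entropic bound does not suffice, because a worst-case oracle could in principle target comparisons whose current probability $p$ is arbitrarily close to $1$, giving deterministic progress as small as one pleases; the control has to come from an order-theoretic invariant, controlled by the average height of the current $P_i$, that cannot remain adversarially skewed across many rounds. Granting the lemma, a standard potential/martingale argument yields an $O(n^2 \log n)$ bound on the expected number of trials, while the per-round work (polynomial-time sampling of a linear extension plus $O(n^2)$ Gale--Shapley) keeps the total running time polynomial, which establishes the theorem.
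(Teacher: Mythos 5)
Your overall structure---maintain per-individual posets $P_i$, propose a consistent preference profile, run Gale--Shapley, feed the returned blocking pair back into the posets---mirrors the paper's reduction of $\SM\uu$ to $2n$ parallel instances of $\Sort\uu$. The genuine gap, which you acknowledge but do not close, is the per-trial $\Omega(1)$ potential drop, and this is exactly where your algorithm differs from the paper's. You sample a \emph{uniformly random} linear extension $\sigma_i$ of each $P_i$. This carries no worst-case guarantee: the adversary sees $\{\sigma_i\}$ and $\mu$, and among all pairs that are genuinely blocking in some input consistent with the history, it can return the one whose revealed comparison already holds in all but a tiny fraction of the extensions of the relevant poset, making $\log(1/p)$ arbitrarily small on that trial. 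Your sketch then appeals to ``averaging over the adversary's legal responses,'' which is the wrong quantifier (the adversary minimizes), and appeals vaguely to ``average height'' as the saving order-theoretic invariant without actually computing it or using it anywhere in the algorithm or the analysis. There is no reason the entropy of a random extension cannot be eaten away in $o(1)$ increments under a worst-case oracle, and you have not supplied the lemma that rules this out.

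What closes the gap in the paper is replacing the uniformly random extension with what the paper calls a \emph{good order}: sort the elements of $S$ so that $h(a_{k_1}) \le h(a_{k_2}) \le \cdots \le h(a_{k_n})$, where $h(\cdot)$ is the average height (average rank over all linear extensions of $P_i$). By the Kahn--Saks theorem, for any $i<j$ with $h(a_{k_i}) - h(a_{k_j}) < 1$ one has $\Pr(a_{k_j} \succ a_{k_i}) < 8/11$, so \emph{whichever} pair the oracle returns, at least a $3/11$ fraction of the extensions of that $P_i$ is eliminated---a deterministic, worst-case $\Omega(1)$ drop. Since exact average heights are $\sharpp$-hard, the paper uses the FPRAS for counting linear extensions to get $h'(a)$ with $|h'(a)-h(a)| < 1/2$, which is enough for the sorted-by-$h'$ order to remain good. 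This yields $O(\log_{11/8} n!) = O(n\log n)$ updates per poset (Lemma~\ref{Lem:Sorting}), and the $O(n^2\log n)$ bound for $\SM\uu$ then follows by summing over the $2n$ posets since each failed trial updates at least one of them. So: replace ``sample a uniform extension'' by ``sort by approximate average height,'' drop the expectation/potential-martingale framing in favor of the per-poset worst-case counting, and your argument becomes the paper's.
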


Before describing the idea underlying the proof of this result, we first look at the
unknown-input version of another basic problem, \Sort, which has a close relationship
with \SM.
%is one of the most extensively researched problem in computer science
In $\Sort\uu$, there is a set of
$n$ elements $S = \{a_1, a_2, \dots, a_n\}$ in some underlying
linear (total) order $\succ$, but this order is unknown to
us, and the task is to discover it. We
can propose a linear order $(a_{k_1}, a_{k_2}, \ldots, a_{k_n})$ each time. If it is
indeed the desired hidden total order, i.e., $a_{k_1} \succ a_{k_2}
\succ \dots \succ a_{k_n}$, then the verification oracle returns
\yes, and the problem is solved; otherwise, a pair of elements $(a,
b)$ is returned such that $a$ is before $b$ in the proposed order, whereas
$b \succ a$ in the actual order.

It is well known that the time complexity of a comparison-based sorting problem is $\Theta(n\log{n})$. Note that this time complexity for \Sort is completely different from our trial complexity for $\Sort\uu$. In the following, we will show that the trail complexity bound for $\Sort\uu$ is actually also $\Theta(n\log{n})$.

\begin{Lem}\label{Lem:Sorting}
  $\Sort\uu$ can be deterministically solved using $O(n\log{n})$ trials, and the running time can be made polynomial for randomized algorithms. Meanwhile, any randomized algorithm that solves $\Sort\uu$ needs at least $\Omega(n\log{n})$ trials, even with unbounded computational power.%; that is, $\R(\Sort\uu)=\Theta(n \log n)$.
\end{Lem}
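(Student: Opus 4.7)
The plan is to prove the upper bound on the number of trials, the polynomial-time implementation, and the lower bound separately.

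For the \emph{upper bound}, I maintain a partial order $P$ recording every constraint learned so far (initially empty), and write $\Omega(P)$ for the set of linear extensions of $P$; the hidden order always lies in $\Omega(P)$. At each step I compute the \emph{average height} $h_P(a):=\mathbb{E}_{\sigma\sim\mathrm{Unif}(\Omega(P))}[\mathrm{pos}_\sigma(a)]$ of every element under the uniform distribution on $\Omega(P)$, and propose the linear extension $\pi$ obtained by sorting the elements in increasing order of $h_P$; this $\pi$ is consistent with $P$ because $a<_P b$ forces $\mathrm{pos}_\sigma(b)<\mathrm{pos}_\sigma(a)$ in every $\sigma\in\Omega(P)$, hence $h_P(b)<h_P(a)$. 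Whenever the oracle returns a violation $(a,b)$, necessarily $h_P(a)\le h_P(b)$, and I add the new constraint $b\succ a$ to $P$. The key combinatorial claim---identified by the paper as controlling the average height---is that any such violation shrinks $|\Omega(P)|$ by at least a constant factor: sorting by average height concentrates near-balanced pairs in nearby positions, so for any pair $a$-before-$b$ in $\pi$, $\Pr_{\sigma\sim\mathrm{Unif}(\Omega(P))}[b\succ_\sigma a]$ is bounded above by an absolute constant strictly less than $1$ (a quantitative statement in the spirit of the Kahn--Saks balanced-pair theorem). Iterating, $|\Omega(P)|$ drops from $n!$ to $1$ in $O(\log n!)=O(n\log n)$ trials. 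The hard part is turning this averaging intuition into the precise constant-factor shrinkage bound.

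For the \emph{polynomial-time implementation}, note that computing $h_P$ exactly reduces to counting linear extensions of $P$, which is $\sharpp$-hard. I would instead approximate $h_P$ by an empirical mean over $\mathrm{poly}(n)$ approximately uniform samples from $\Omega(P)$ produced by a rapidly mixing Markov chain on linear extensions (for example, the Bubley--Dyer adjacent-transposition chain, which mixes in polynomial time). Sorting by the estimate $\widehat h_P$ preserves the constant-factor shrinkage so long as the estimation accuracy is $1/\mathrm{poly}(n)$, yielding a randomized algorithm with polynomial running time per round and $O(n\log n)$ total trials.

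For the \emph{lower bound}, I would use the adversary methodology described in the paper's introduction (which avoids Yao's principle): fix any randomized algorithm $\mathcal{A}$ and maintain a candidate set $S_t\subseteq\mathfrak{S}_n$ of permutations consistent with all past responses, with $S_0=\mathfrak{S}_n$. When $\mathcal{A}$ proposes $\pi_t$, respond with the violated pair $(a_t,b_t)$ that maximizes $|S_{t+1}|:=|\{\sigma\in S_t:b_t\succ_\sigma a_t\}|$. The central claim---dual to the upper-bound shrinkage---is that this greedy adversary always achieves $|S_{t+1}|\ge|S_t|/2$ whenever $|S_t|\ge 2$, which again reduces to a balanced-pair statement about the uniform distribution on $S_t$ (viewed as linear extensions of the poset of past constraints). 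Since $\mathcal{A}$ cannot terminate until $|S_T|=1$, at least $\log_2(n!)=\Omega(n\log n)$ trials are required; the bound applies to randomized $\mathcal{A}$ because the adversary is oblivious to its internal coins, so for any $\mathcal{A}$ with $o(n\log n)$ trials, the adversary delivers two distinct permutations in $S_T$ inducing identical transcripts, and $\mathcal{A}$ must err on at least one of them.
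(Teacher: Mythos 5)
Your proposal follows the same route as the paper on all three fronts---sort by average height and shrink the set of linear extensions, approximate the heights by sampling/counting linear extensions, and run an adaptive adversary for the lower bound---but the two quantitative claims that make the argument go through are left as placeholders, and the second one is stated with a false constant. In both places the missing ingredient is the Kahn--Saks theorem.

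For the upper bound, the constant-factor shrinkage you flag as ``the hard part'' is precisely what Kahn and Saks prove: in any finite poset, if $h(a)-h(b)<1$, where $h$ is the average height over uniformly random linear extensions, then $\Pr[b\succ a]<\tfrac{8}{11}$. Sorting by exact heights gives $h(a_{k_i})\le h(a_{k_j})$ whenever $i<j$, so every reported violation cuts $|\Omega(P)|$ by a factor less than $\tfrac{8}{11}$, giving $O(\log_{11/8} n!)=O(n\log n)$ trials. For the randomized polynomial-time version you only need each $h(a)$ estimated to additive error below $\tfrac12$ (so the estimated order still has $h(a_{k_i})-h(a_{k_j})<1$ for $i<j$); your Bubley--Dyer sampling route and the paper's FPRAS for linear-extension counting both give this with $\mathrm{poly}(n)$ work per round, so this part is essentially the paper's argument.

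Your lower-bound claim that the greedy adversary always retains at least half of the surviving extensions is false. Take the three-element poset with $a<c$ and $b$ incomparable to both: the three linear extensions give $\Pr[a\succ b]=2/3$ and $\Pr[b\succ c]=2/3$, so for every pair the losing side retains only a $1/3$-fraction; achieving $1/2$ here is impossible, and in general a guaranteed $1/2$ would strengthen even the still-open $1/3$--$2/3$ conjecture. The statement you actually need is the other half of Kahn--Saks: every poset that is not a chain contains a pair $(a,b)$ with both $\Pr[a\succ b]\ge\tfrac{3}{11}$ and $\Pr[b\succ a]\ge\tfrac{3}{11}$. The adversary returns that pair in whichever orientation the proposed order disagrees with, keeping $|S_{t+1}|\ge\tfrac{3}{11}|S_t|$, and hence forcing $\log_{11/3}(n!)=\Omega(n\log n)$ trials. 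This repairs your constant without changing the asymptotics; the rest of your adversary/transcript argument for randomized algorithms is fine.
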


\noindent
{\em Idea of the proof.}
The upper and lower bounds for $\R(\Sort\uu)$ both use order theory~\cite{Bri99,DP02}. Both bounds critically depend on how fast the set of complete orders consistent with a partial order can be shrunk by \emph{worst-case} pair violations. Due to the distinction of the oracles in the standard comparison model and in ours, the techniques in the comparison model do not straightforwardly carry over to solving our problem. It turns out that controling the measure of \emph{average height} %of an element in a partially ordered set
allows us to bound, in both directions, the worst-case shrinkage speed in our model. Although %(the exact value of)
the average height is $\sharpp$-complete to compute \cite{BW91}, fortunately, we can efficiently estimate this value to a sufficient degree of precision %(without losing control of running time),
by using a fully polynomial randomized approximation scheme (FPRAS) %(given by Dyer, Frieze and Kannan)
for another related problem \cite{DFK89,BD98}. \hfill $\square$
%We defer the full proof to the appendix.
\medskip

\begin{proof}[Formal Proof of Lemma~\ref{Lem:Sorting}]
  First, we define the notation as follows. A \textit{partially
  ordered set} (or \textit{poset}) is a set $S$ equipped with an
  irreflexive transitive relation $>$. A \textit{linear extension} of
  a poset $(S, >)$ is a linear order $\succ$ over set $S$ such that $a
  \succ b$ whenever $a > b$ in $S$. For any given poset $(S, >)$
  and elements $a, b \in S$, we denote by $\Pr(a \succ b)$ the
  probability of $a \succ b$ where $\succ$ is chosen uniformly at random among linear extensions
  of $(S, >)$.

  In the $\Sort\uu$ problem, suppose the unknown order is $\succ^*$. Notice that for each of our proposed orders $\succ$, the verification oracle returns a pair of elements $a, b \in S$, from which and $\succ$, we can infer the relation between $a$ and $b$ in the actual order $\succ^*$. Thus, at each point of the algorithm, the information collected so far forms a poset $(S, >)$, of which the underlying unknown order $\succ^*$ is a linear extension.

  For any poset $(S, >)$, we call an order $(a_{k_1}, a_{k_2}, \dots,
  a_{k_n})$ \textit{good} if there is a constant $c<1$, such that for any $i < j$, we always have
  $\Pr(a_{k_j} \succ a_{k_i}) < c$. Note that this is a very strong condition because it requires a constant shrinkage for \emph{all} possible pairs $(i,j)$. The idea of solving $\Sort\uu$
  is that, at each step of the algorithm when our collected
  information forms a poset $(S, >)$, we propose a good order if it exists. The property of a good order guarantees that whatever the verification oracle returns, we can always reduce the number of candidate linear extensions by a constant fraction $c$. Note that at the beginning of the algorithm, the number of candidate linear extensions is $n!$ as we do not have any information about $\succ^*$. And at the end of the algorithm the unique order $\succ^*$ is found and thus the number of linear extensions is 1. Therefore, the problem $\Sort\uu$ can be solved in polynomial time and by $O\big(\log_{1/c}{n!}\big) = O(n\log{n})$ trials to the verification oracle, provided that %a good order always exists for any poset $(S, >)$, then
%  It remains to consider the following two questions:
  \begin{enumerate}
  \item for any poset $(S, >)$, a good order always exists, and
  \item we find a good order in polynomial time.
  \end{enumerate}

  Next we shall address how to satisfy these two conditions. Given a poset $(S, >)$ and an element $a \in S$, define its \textit{average height} $h(a)$ to be the average rank of $a$ in
  all linear extensions of $(S, >)$, where the rank of $a$ in a
  linear extension $\succ$ is the number of elements $b$ such that $b
  \succ a$. It is easy to see that the average height of elements in
  $S$ are all rational numbers between $0$ and $n-1$.
  Kahn and Saks~\cite{KS84} showed that for any pair of elements $a, b
  \in S$ satisfying $h(a) - h(b) < 1$, we must have $\Pr(b \succ a) <
  \frac{8}{11}$. Thus, for any poset $(S, >)$, if we sort all elements
  of $S$ in order $(a_{k_1}, a_{k_2}, \dots, a_{k_n})$ such that
  $h(a_{k_1}) \leq h(a_{k_2}) \leq \dots \leq h(a_{k_n})$, then for
  any $i < j$ we will have $h(a_{k_i}) - h(a_{k_j}) \leq 0 < 1$, and
  thus $\Pr(a_{k_j} \succ a_{k_i}) < \frac{8}{11}$. This implies that
  this is a good order as we want.

  Therefore, it remains to compute $h(a)$ efficiently for every element $a$.
  However, it was shown in~\cite{BW91} that counting the number
  of linear extensions of a given poset is \sharpp-complete, and
  determining the average height of an element of a poset is
  polynomially equivalent to the linear extension counting problem,
  thus is also \sharpp-complete. Luckily, to our purpose of having a good order, an approximation of $h(a)$ to a small enough precision suffices.
  To this end, we first notice that there exists a fully polynomial randomized approximation scheme (FPRAS) for the problem
  of counting the number of linear extensions~\cite{DFK89,BD98}, where the algorithm finds, with probability $1-\delta$, a $(1+\epsilon)$-approximation of the number of the
  linear extensions in time $poly(n,1/\epsilon,\log(1/\delta))$. Now given a poset $(S, >)$ and two elements $a, b \in S$, we can apply this algorithm to posets $(S,>)$, getting an output $n_1$, and apply the algorithm on $(S, >')$, getting an output $n_2$, where $>'$ is obtained from $>$ by incorporating an extra relation $(a > b)$. Then $\Pr(a \succ b)$ can be approximated by $n_2/n_1$ (with the precision $\frac{1+\epsilon}{1-\epsilon})$. It is easily seen from the definition of $h(a)$ and that of $\Pr(b \succ a)$ that
  $$h(a) = \sum_{b \neq a}\Pr(b \succ a).$$
  By setting $\epsilon = \frac{1}{5n}$ to approximate the value of each $\Pr(b \succ a)$ and using them to compute the value of $h(a)$, we can derive in polynomial time a value $h'(a)$ such that
  $1-\frac{1}{2n} < \frac{h'(a)}{h(a)} < 1 + \frac{1}{2n}$ with high
  probability. Since $h(a) < n$, we have
  $$|h'(a) - h(a)| < \frac{h(a)}{2n} < 0.5.$$
  Using $h'(a)$ to sort all elements in $S$ in order $(a_{k_1}, a_{k_2}, \dots, a_{k_n})$, we have by the above inequality that $h'(a_{k_i}) < h'(a_{k_j})$ for
  any $i < j$ with arbitrarily high probability. This implies $h(a_{k_i}) - h(a_{k_j})
  < 1$, and thus, $\Pr(a_{k_j} \succ a_{k_i}) < \frac{8}{11}$.
  Therefore, this is a good order for the current poset $(S, >)$.
  The whole process can be done in polynomial time, which completes the proof of the upper bound side.

  For the lower bound side, it was also shown in~\cite{KS84} that for
  any poset $(S, >)$, there exist two elements $a, b \in S$ such that
  $\Pr(a \succ b) > \frac{3}{11}$ and $\Pr(b \succ a) > \frac{3}{11}$.
  Thus, we construct the verification oracle as follows. At any step,
  when the previously returned information forms a poset $(S, >)$,
  no matter what the current proposed order is, the oracle always
  returns such $(a, b)$ (or $(b, a)$, depending on
  their relative position in the proposed order) as a violation. Then after this
  trial, at least $\frac{3}{11}$ fraction of the possible linear
  extensions still remains. Therefore, we have $\R(\Sort\uu) \geq
  \log_{11/3} n! = \Omega(n\log{n})$.
\end{proof}

%%%%%%%%%%%%%%%%%%%%%%%%%%%%%%%%%%%%
%%%%%%%%%%%%%%%%%%%%%%%%%%%%%%%%%%%%

Having the upper bound result for $\Sort\uu$, we can consider the preference of each individual as a sorting problem and solve these $2n$ $\Sort\uu$ problems together, which gives us the desired upper bound for the $\SM\uu$ problem. %We defer the full proof to the appendix.

\begin{proof}[Proof of Theorem \ref{Thm:SMUpperBound}]
  At each step, since our collected information gives a poset $(W,>_m)$ for each
  man $m$ and a poset $(M,>_w)$ for each woman. We can use (part of) the
  above algorithm for $\Sort\uu$ to compute in polynomial time a good order $\succ_m$
  for each man $m$ and a good order $\succ_w$ for each woman $w$ with respect to their current posets. Then we
  take these good orders as their preferences and compute a stable
  matching $\mu$ using Gale-Shapley's algorithm. Then we make a query $\mu$ to the
  verification oracle for $\SM\uu$. If it is not a stable matching, then \V returns
  a blocking pair $(m,w)$ satisfying $w \succ_m \mu(m)$ and $m\succ_w \mu(w)$.
  Notice that at least one of these two new relations conflicts with the assumed
  preferences $\{\succ_m, \succ_w\}$, which means that it can serve a valid verification oracle for $\Sort\uu$.
  %Then we update these information and propose the next matching based on the updated posets.
  Since (i) there are totally $n$ man and $n$ woman,
  (ii) at each step at least one man or woman gets a new pair
  of elements to update their posets, and (iii) by Lemma~\ref{Lem:Sorting}
  we know that each poset will be updated at most $O(n\log{n})$ times,
  we know that this stable matching algorithm calls the verification oracle $2n \cdot O(n\log{n}) = O(n^2\log{n})$ times. Finally, since
  all computation between trials can be done in polynomial time, the
  time complexity of the algorithm is in polynomial.
\end{proof}

%%%%%%%%%%%%%%%%%%%%%%%%%%%%%%%%%%%%
%%%%%%%%%%%%%%%%%%%%%%%%%%%%%%%%%%%%

%A comment on deterministic algorithms for $\SM\uu$:
Note that the same results for deterministic algorithms hold; that is, there is an exponential-time deterministic algorithm using $O(n\log n)$ trials for $\Sort\uu$ (and thus another algorithm using $O(n^2\log n)$ trials for $\SM\uu$), namely,
$\D(\Sort\uu) = O(n\log n)$ and $\D(\SM\uu) = O(n^2\log n).$
%\[\D(\Sort\uu) = O(n\log n) \quad \text{and} \quad \D(\SM\uu) = O(n^2\log n).\]
The reason is simple: we can compute the average height $h(a)$ for every $a$ by enumerating all linear extensions of the current partial order. See Section \ref{sec:conclusion} for more discussions on polynomial time deterministic algorithms.

Note that our algorithm for $\SM\uu$ essentially runs $2n$ $\Sort\uu$ instances to learn the entire unknown input of the preference lists, and analysis of the trial cost simply involves adding the trials made on these instances. Although the $\Omega(n\log{n})$ lower bound for $\Sort\uu$ implies a limit to this approach, there seems to be considerable room for improvement. It may be unnecessary to learn all of the input lists to find a solution, and there may be more sophisticated ways to solve $2n$ instances of $\Sort\uu$ to beat the naive upper bound by addition. However, the following theorem gives an almost matching lower bound for $\SM\uu$.
%The proof uses probabilistic method and is referred to appendix.
%, which also implies a direct-sum type result for the randomized trial complexity of $\Sort\uu$.

%. , as we cannot always transform any two $\Sort\uu$ verification oracle returns from a pair of man and woman to a valid $\SM\uu$ verification oracle return.
%And there may exist some clever algorithm that can take this advantage to solve $\SM\uu$ problem with better trial complexity. Next we establish a slightly weak lower bound $\Omega(n^2)$ on the trial complexity of $\SM\uu$ for randomized algorithms.

%\vspace{-0.05in}
\begin{Thm}\label{theorem-SM-lower}
Any randomized algorithm for $\SM\uu$ needs at least $\Omega(n^2)$ trials even with unbounded computational power.
\end{Thm}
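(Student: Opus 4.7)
The plan is to follow the probabilistic-method strategy previewed in the introduction and exhibit, for any randomized algorithm $\mathcal{A}$ with too few trials, a pair of $\SM$ instances on which $\mathcal{A}$ produces the same query-response transcript but whose sets of stable matchings are disjoint. Once this is achieved, $\mathcal{A}$'s final output matching can lie in the stable set of at most one of the two instances, so on the other it must fail to halt within the allotted budget, yielding the lower bound on the expected trial complexity via a standard Markov-type inequality converting a constant failure probability at $T=cn^2$ queries into an $\Omega(n^2)$ lower bound on expectation.

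The first step is to draw two instances $I_1, I_2$ independently from a suitable random distribution on preference lists (for concreteness, i.i.d.\ uniform preferences). Here I would prove that with high probability $I_1$ and $I_2$ share no common stable matching: the stable matching of a random $\SM$ instance behaves like a ``random-looking'' perfect matching, and a union bound over the $O(\log n)$ expected stable matchings of each instance, combined with the fact that any fixed perfect matching is stable in an independent random instance with probability exponentially small in $n$ (each of the $n^2-n$ non-matched pairs has a constant chance of forming a blocking pair), gives disjointness of the two stable sets with probability at least $1-o(1)$.

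The second step bounds how much information the algorithm can extract per trial. Each NO-response on a queried matching $\mu$ is a single pair $(m,w)$, which exposes exactly two preference facts: $w \succ_m \mu(m)$ and $m \succ_w \mu(w)$. Hence after $T = cn^2$ trials with a small enough constant $c$, the transcript exposes $O(T)=o(n^2)$ pairwise comparisons out of the $\Theta(n^2)$ comparisons encoded by the preference lists. Running a coupling argument on the randomness of $I_1$ and $I_2$ along the trajectory, and exploiting the abundance of unrevealed randomness at each step, I would show that with constant probability both instances agree on all comparison facts disclosed during $\mathcal{A}$'s execution, so that the transcripts---and therefore the final proposed matching $\mu^\ast$---are identical on $I_1$ and $I_2$. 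Combined with the disjointness in Step~1, this forces $\mu^\ast$ to be non-stable on at least one of the two instances, contradicting successful termination within $T$ queries.

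The main obstacle is executing the coupling in this adaptive setting. The adversary may try to steer the transcript along a distinguishing path by selecting, at each query $\mu$, a blocking pair whose presence in $I_1$ is unlikely to coincide with its presence in $I_2$. One must argue that, with positive probability over the joint randomness of $(I_1,I_2)$, the oracle can legitimately answer each query $\mu$ with the \emph{same} blocking pair under both instances. This reduces to a pigeonhole/averaging argument showing that a non-stable matching in a random instance has $\Omega(n)$ blocking pairs in expectation, so with constant probability the two instances share at least one common blocking pair, allowing the coupling to be extended one step; iterating this over $T=cn^2$ queries, through a careful conditional analysis that keeps track of the dependence introduced by previously answered trials, is the technical core of the argument.
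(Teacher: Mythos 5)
Your proposal pursues a genuinely different route from the paper's, and it contains a real gap that you yourself flag but do not close. The paper does not use two coupled instances at all. Instead it maintains, per man $m$ and per woman $w$, a directed graph ($G_m$, $G_w$) encoding the pairwise preference facts revealed by the transcript so far, observes that $k$ trials add at most $k$ edges to each side, and then argues directly: for any candidate output matching $\mu$, a uniformly random pair $(m,w)$ is, with probability at least $1-\tfrac{1}{n}-\tfrac{2k}{n^2}$, a blocking pair for $\mu$ on some input instance consistent with the transcript. This is positive when $k<(n^2-n)/2$, so after that many queries the algorithm still cannot safely commit to any output, on any sequence of coin flips. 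No coupling, no conditional distribution bookkeeping, no control of how many stable matchings a random instance has.

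The gap in your plan is the iterated coupling. You correctly identify the technical core, but the argument as sketched does not survive it. Your appeal to ``a non-stable matching in a random instance has $\Omega(n)$ blocking pairs in expectation, so with constant probability the two instances share at least one'' only gives a \emph{constant} per-step success probability. Compounded over $T=cn^2$ adaptive queries, a constant per-step failure probability makes the probability of sustaining the coupling exponentially small rather than bounded below, which is exactly the opposite of what you need. To make this route work you would need the per-step coupling-failure probability to be $o(1/n^2)$, and that is far from obvious: the algorithm is adaptive and could drive its proposed matchings toward ones with very few blocking pairs in the instance(s) consistent with the accumulated transcript, at which point ``share a blocking pair'' is not a high-probability event. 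The conditional distribution of $(I_1,I_2)$ given the transcript also drifts away from product-of-uniforms in ways your sketch does not control. A secondary, smaller inaccuracy: the expected number of stable matchings in a uniformly random instance is $\Theta(n\ln n)$, not $O(\log n)$, though your disjointness conclusion in Step~1 is salvageable since a fixed matching is stable in an independent random instance with probability $c^{\,n^2-n}$ for a constant $c<1$, which dominates any polynomial count. In short, the two-instance indistinguishability framing (also nodded to in the paper's introduction) is a legitimate high-level strategy, but the paper sidesteps the coupling difficulty entirely by reasoning about the family of inputs consistent with a single transcript; if you want to pursue your route, the burden is to make the coupling survive $cn^2$ adaptive rounds, and that is not supplied here.
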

%\vspace{-0.05in}

\begin{proof}
%We will only prove the first statement, and the second follows from the reduction from $\SM\uu$ to $\Sort\uu$ (in the proof of Theorem \ref{Thm:SMUpperBound}).
For each man $m$, we create a directed graph $G_m$ with vertex set $W$ and edge set initially empty; similarly create a directed graph $G_w$ for each woman $w$. For each proposed matching $\mu$, after a blocking pair $(m,w)$ is
returned by the verification oracle, we know that $m$ prefers $w$ to
$\mu(m)$ and $w$ prefers $m$ to $\mu(w)$. We update the graphs $G_w$ and $G_m$ as follows. If there is no directed path from $\mu(m)$ to $w$, we add an edge $(\mu(m),w)$ in the graph $G_m$. Similarly, if there is no directed path from $\mu(w)$ to $m$, we add an edge $(\mu(w),m)$ in $G_w$. Since we have made $k$ queries to verification oracle, there are at most $k$ edges altogether in all graphs $G_m$ and at most $k$ edges in all graphs $G_w$. Note that these $2n$ graphs are all the information an algorithm gets from the trials and answers.

Our proof is by probabilistic method. Suppose we have already made $k$ queries.%, and updated $G_m$ and $G_w$ accordingly. Now fix any matching $\mu$; we
We pick a pair $(m,w)$ uniformly at random in all possible $n^2$ pairs, and will show the following property for any matching $\mu$: If $k < (n^2-n)/2$, then with a strictly positive probability, $(m,w)$ is a blocking pair for $\mu$ on some input instance consistent with the queried information so far. Therefore, $k$ queries to the verification oracle are not sufficient to guarantee to find a stable matching, as for any candidate output $\mu$, there are still instances with blocking pairs for it.

The analysis of the probability goes as follows. First, with probability $1-1/n$, the pair is not in the current matching $\mu$. Second, we claim that with probability at least $1-k/n^2$, $m$
could prefer $w$ to $\mu(m)$. Here ``could" means that there is an input instance consistent with the previous trials and answers but in the input $m$ prefer $w$ to $\mu(m)$. Indeed, this could not happen only if
the known preferences of $m$ already imply that $w$ is less
preferred than $\mu(m)$; namely there is a path from $w$ to $\mu(m)$
in the graph $G_m$. But there are not many preferences
known---on average, only $k/n$ preferences known for $m$.
Formally, for a randomly chosen $w$, the event that ``$w$ is known
to be less preferred than $\mu(m)$ by $m$'' happens with
probability
\begin{eqnarray*}
    \frac{1}{n} \cdot \text{\big(the number of nodes that can reach $\mu(m)$ in $G_m$\big)} \leq  \frac{1}{n} \cdot \text{\big(the number of edges in $G_m$\big)}
\end{eqnarray*}
So averaging over all men gives that
\begin{eqnarray*}
    \Pr\big[m \text{ prefers } w \text{ to } \mu(m)\big] &\geq& 1-\frac{1}{n} \cdot \frac{1}{n} \cdot \text{\big(the total number of edges in all men's graphs\big)} \\
    &\geq& 1-k/n^2.
\end{eqnarray*}

Third, similar analysis shows that with probability at least
$1-k/n^2$, $w$ could prefer $m$ to her current assignment
$\mu(w)$. Putting all these together, with probability at least
$1-\frac{1}{n}-\frac{2k}{n^2}$, the pair $(m,w)$ can be a
blocking pair. The probability is strictly positive if $k <
(n^2-n)/2$, meaning the existence of a blocking pair. Therefore $k$
needs to be larger than $(n^2-n)/2$.
\end{proof}

%%%%%%%%%%%%%%%%%%%%%%%%%%%%%%%%%%%%
%%%%%%%%%%%%%%%%%%%%%%%%%%%%%%%%%%%%

A final comment is that in the traditional stable matching problem, where the preferences are known, there is a tight lower bound $\Omega(n^2)$ for computing a stable matching~\cite{Ng89}. However, this bound refers to the computational time complexity in a completely different meaning from our trial complexity lower bound.

%\vspace{-0.1in}
\section{SAT}\label{section-SAT}

\newcommand{\AlgSAT}{{\sc Alg-SAT}\xspace}

Given a CNF formula $\phi$ with $n$ variables and $m$ clauses, the $\Sat$ search problem is to find a satisfying assignment to $\phi$ if one exists, or to return ``$\phi$ is unsatisfiable.'' In the unknown-input version $\Sat\uu$, the formula $\phi$ is unknown, and each time that a proposed solution $x$ is not a satisfying assignment, the verification oracle returns an index $i$ such that the $i$-th clause of $\phi$ evaluates to FALSE on $x$.

First, we clarify that the computation oracle for $\Sat$ is defined as follows. On a query $\phi$ which is a CNF formula, $\Sat$ returns a satisfying assignment $x$ to $\phi$, or reports that such $x$ does not exist. So this $\Sat$ oracle solves the search problem instead of the decision problem. This is for the fair comparison since the target $\Sat\uu$ is also a search problem. Also note that the search and the decision problems for $\Sat$ are roughly the same due to the standard self-reducibility. %First we will clarify on how to use solution oracle to find a satisfying CNF in Step~\ref{AlgSAT:1}. Notice that since we have the solution oracle that can solve any 3-SAT decision question. By the standard self-reducibility, we can use it to solve any SAT searching problem as well.

Our algorithm solving the $\Sat\uu$ problem is as follows.

%\vspace{-0.05in}
\begin{algorithm}[H]
  \caption{\AlgSAT}
  {\bf Unknown input}: A formula $\phi$ of $n$ variables and $m$ clauses.
%  {\bf Output}: A satisfying assignment $x$ if $\phi$ is satisfiable, and $\bot$ otherwise.

  \begin{algorithmic}[1]
  \small
    \STATE Let $L_1 = L_2 = \cdots = L_m = \{x_1, \bar x_1, \ldots, x_n, \bar x_n\}$.

    \LOOP
    \STATE Let $\phi' = \bigwedge_{i=1}^m (\vee_{\ell\in L_i} \ell).$
    \STATE Query the computation oracle $\Sat$ on $\phi'$. % is satisfiable and find a satisfying assignment $x$ if yes.
    \IF {the computation oracle says that $\phi'$ is not satisfiable,}
        \RETURN $\phi$ is unsatisfiable (and terminate the program).
    \ELSE
        %\STATE Use the self-reducibility of $\Sat$ to find a satisfying assignment $x$ of $\phi'$.
        \STATE Suppose the computation oracle gives a satisfying assignment $x$ of $\phi'$.
        \STATE Ask the assignment $x$ to the verification oracle $\V$.
        \IF {$\V$ confirms that $\phi(x)=1$}
            \RETURN $x$ (and terminate the program).
        \ELSE
            \STATE Suppose $\V$ returns an index $i$.
            \STATE Let $L_i \leftarrow \{\ell\in L_i: \ell(x) = 0\}$ \\ (i.e., those literals in $L_i$ evaluating FALSE on $x$).
        \ENDIF
    \ENDIF
    \ENDLOOP
  \end{algorithmic}
\end{algorithm}
%\vspace{-0.1in}

The idea of the algorithm uses a standard candidate elimination approach~\cite{Mit97}, which has been used extensively in learning theory (e.g., PAC learning for CNF formulas~\cite{Val84}).
The algorithm initially includes all literals, i.e., $x_1\vee \bar x_1\vee \cdots\vee x_n \vee \bar x_n$, in each clause. It proceeds by employing the $\Sat$ computation oracle to propose an assignment $x$ consistent with the current knowledge of the clauses. If a clause's index is returned by the verification oracle upon a trial, then we know that $x_i$ cannot be in the clause if $x_i=1$ and that $\bar x_i$ cannot be in the clause if $x_i=0$ in the assignment of the trial. We therefore remove the literals from the clause, and continue the process until either a satisfying assignment is found or the computation oracle returns the result that no satisfying assignment exists.

%One may feel resemblance to the standard self-reducibility approach of finding a satisfying assignment by queries to an oracle solving the decision version of \Sat.

%\vspace{-.05in}
\begin{Thm}\label{theorem-SAT-upper}
\AlgSAT solves the $\Sat\uu$ problem in polynomial time using $O(mn)$ trials, where $n$ and $m$ are the numbers of variables and clauses, respectively.
\end{Thm}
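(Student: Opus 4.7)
My plan is to prove correctness via a loop invariant and then bound the number of trials by a simple potential argument. Let $C_1, \dots, C_m$ denote the hidden clauses, each regarded as a set of literals standing for its disjunction. The central invariant I would maintain is that $C_i \subseteq L_i$ for every $i$ at every iteration. This holds at initialization because $L_i$ contains all $2n$ literals. For the inductive step, suppose that on trial $x$ the verification oracle returns index $i$. Then $x$ fails $C_i$, which is possible only if every literal of $C_i$ evaluates to $0$ on $x$; hence $C_i \subseteq \{\ell \in L_i : \ell(x) = 0\}$, which is exactly the updated $L_i$, so the invariant is preserved.

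Correctness of the two exit branches then follows immediately. If the algorithm returns an assignment $x$, the verification oracle has explicitly confirmed $\phi(x) = 1$. If the algorithm instead returns ``unsatisfiable'' after the computation oracle certifies $\phi'$ unsat, the invariant gives $\phi \Rightarrow \phi'$, since $C_i \subseteq L_i$ implies that any assignment satisfying $\phi$ satisfies every $\vee_{\ell \in L_i} \ell$; hence $\phi$ is unsatisfiable too.

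For the trial bound I would use the potential $\Phi = \sum_{i=1}^m |L_i|$, which starts at $\Phi = 2mn$ and remains nonnegative. The key claim is that every unsuccessful trial strictly decreases $\Phi$: since $x$ is a satisfying assignment of $\phi'$ returned by the $\Sat$ oracle, it in particular satisfies the disjunction over $L_i$, so some literal $\ell^\star \in L_i$ has $\ell^\star(x) = 1$; this $\ell^\star$ is discarded in the update, so $|L_i|$ drops by at least one while no other $|L_j|$ changes. Hence there are at most $2mn$ unsuccessful trials, plus at most one final (successful) trial, giving $O(mn)$ trials overall.

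For the running time, each iteration performs only polynomial-time bookkeeping — assembling $\phi'$ from the current lists $L_i$, and, after an error, scanning one $L_i$ to remove the literals true under $x$ — together with one unit-cost call each to the computation and verification oracles. With $O(mn)$ iterations this yields polynomial total running time. The one delicate point worth flagging is the direction of the update rule, namely retaining only the literals evaluating to FALSE on $x$: this single observation about what a ``violated clause'' answer reveals is what simultaneously preserves the containment invariant and forces $\Phi$ to strictly decrease, so both halves of the proof rest on the same elementary fact and no further obstacle arises.
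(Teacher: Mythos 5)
Your proof is correct and follows essentially the same approach as the paper: you establish the invariant $C_i \subseteq L_i$ (equivalently, that $\phi$ always implies $\phi'$), argue correctness of both exit branches from it, and bound the trials via the potential $\sum_i |L_i|$, which starts at $2mn$ and strictly decreases on every unsuccessful trial. The paper's argument is phrased slightly differently but rests on exactly the same two observations.
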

%\vspace{-.05in}

 \begin{proof}%[Proof of Theorem~\ref{theorem-SAT-upper}]
   \emph{Correctness}. The set $L_i$ maintains a collection of possible literals for the $i$-th clause in the unknown formula $\phi$. Note that if $\phi$ is satisfiable, so is $\phi'$, at any step of the algorithm. Indeed, at the beginning all literals are included in each clauses and $\phi'$ is trivially satisfiable. Each time $L_i$ is updated, the literals that are removed from $L_i$ are exactly those $\ell$ with $\ell(x) = 1$. But the literals in the actual $i$-th clause $C_i$ of $\phi$ all evaluate to 0, because $C_i(x) = \vee_{\ell\in C_i} \ell(x) = 0$. Thus, all the literals in $C_i$ are kept when $L_i$ is updated. Therefore a satisfying assignment $x$ to $\phi$ also satisfies $\phi'$.

   Also note that if $\phi'(x) = 1$, then there exists at least one $\ell\in L_i$ such that $\ell(x) = 1$. So the size $|L_i|$ decreases by at least 1 each time $L_i$ is updated. Since $\phi'$ is always satisfiable and $\sum_i|L_i|$ decreases by at least 1 in each iteration, the algorithm finally stops and outputs an $x$ with $\phi(x) = 1$.
   On the other hand, if $\phi$ is unsatisfiable, then $\phi(x)$ never evaluates to 1, thus finally the algorithm outputs that $\phi$ is unsatisfiable.

   \emph{Complexity}. Since initially $|L_1| = \dots = |L_m| = 2n$, the algorithm runs in at most $2nm$ rounds. In each round finding a satisfying assignment of $\phi'$ takes one query to the computation oracle, so the trial complexity are $O(nm)$. The time complexity of the algorithm is $poly(n)$ as all computations between queries can be done in polynomial time.
 \end{proof}

Our upper bound result implies that not knowing the input formula does not add much extra complexity (up to a polynomial) to finding a satisfying assignment.
For a related problem, 2$\Sat$, in which every clause contains at most 2 literals, which can be solved in polynomial time, we can show that solving 2$\Sat\uu$ in polynomial time implies $\p=\np$.\footnote{The idea of the reduction is similar to the one described in the next section for group isomorphism.} Therefore, it is indeed the power of the computation oracle $\Sat$ that yields our efficient algorithm for $\Sat\uu$.
Recall that the focus of our study is not to discover the complexity of solving $\Sat\uu$ itself, but rather, is to understand the relative complexity of solving $\Sat\uu$ compared with that of solving $\Sat$.

Note that our algorithm directly shrinks the space of possible inputs (i.e., formulas), instead of the space of possible solutions (i.e., assignments).
(While some variables may have their values fixed along the process of the algorithm, this may not necessarily be the case and is surely not the reason why our algorithm works within $O(mn)$ queries.) When a satisfying assignment is found by the algorithm, however, we may still do not know the exact input formula.
%Indeed, if we are asked to find the exact input formula, it requires an exponential number of trials or is even unlearnable at all if two different clauses are allowed to have the same format.
This echoes the medical treatment application mentioned in Introduction, where finding a solution is much more urgent than learning the unknown input, and it is indeed often possible to find an effective diagnostic reagent long before completely knowing the virus.

In the complexity language, we ask whether it is computationally much harder to find out the hidden CNF formula (even with the help of the computation oracle $\Sat$) than finding a solution. Learning the exact formula is clearly impossible since, for example, if there is a clause $x_i\vee \bar x_i$, then we can never know this clause. Even if we relax the requirement by clustering the formulas with the same set of satisfying assignments, and allowing to output any formula within the cluster of the hidden input formula, it is still exponentially harder than finding a solution. This separates ours from the standard concept learning model.

\begin{Prop}\label{prop-SAT-learning}
	Any randomized algorithm with $\epsilon$-error, $\epsilon < 1/2$, needs $2^n$ trials to output a formula $\phi'$ with the same set of satisfying assignments as the hidden input formula $\phi$.
\end{Prop}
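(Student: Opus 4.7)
\noindent\textbf{Proof plan for Proposition~\ref{prop-SAT-learning}.}
The plan is to exhibit a family of $2^n$ hidden formulas with pairwise distinct satisfying-assignment sets, and argue that the verification oracle reveals essentially nothing that would let us tell them apart except by a direct lucky hit. For each $y=(y_1,\ldots,y_n)\in\{0,1\}^n$ define the single-clause formula
\[
\phi_y \;=\; \bigvee_{i=1}^n \ell_i^{(y)}, \qquad \ell_i^{(y)}=\begin{cases}\bar x_i & \text{if } y_i=1,\\ x_i & \text{if } y_i=0.\end{cases}
\]
Then the unique assignment that falsifies $\phi_y$ is $y$ itself, so the satisfying-assignment set of $\phi_y$ is $\{0,1\}^n\setminus\{y\}$. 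These $2^n$ sets are pairwise distinct, so any formula $\phi'$ that matches $\phi_y$'s satisfying-assignment set must itself have $y$ as its unique unsatisfying assignment, i.e.\ the algorithm must identify $y$ exactly.

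Next I analyze what the verification oracle can tell the algorithm when the hidden formula is $\phi_y$. Since $\phi_y$ has a single clause, the oracle has only two possible behaviors on a trial $x\in\{0,1\}^n$: if $x\neq y$ it confirms, and if $x=y$ it returns the index ``clause $1$.'' In particular, a confirming answer reveals only that the queried $x$ is not equal to $y$, and a non-confirming answer pinpoints $y$ completely and trivially ends the algorithm. Thus, up to the moment at which the algorithm first queries $y$, the oracle transcript is a string of confirmations that carries no information beyond the identity of the queries made.

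I would then apply Yao's principle with the uniform distribution on $y\in\{0,1\}^n$. Fix any deterministic algorithm that makes at most $k$ trials. Because a deterministic algorithm's next query is a function of the transcript, the sequence of queries it issues when every answer is a confirmation is a fixed set $Q_k\subseteq\{0,1\}^n$ of size at most $k$, and the formula it outputs in that all-confirmations branch is a fixed $\phi_{y^*}$ for some single $y^*$. Under the uniform distribution on $y$, the event ``$y\in Q_k$'' (on which the algorithm can succeed by learning $y$ directly) has probability at most $k/2^n$, and conditional on $y\notin Q_k$ the algorithm outputs $\phi_{y^*}$ which agrees with $\phi_y$ only if $y=y^*$, giving conditional success probability at most $1/(2^n-k)$. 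Hence the overall success probability is at most $k/2^n+1/(2^n-k)$, and for this to exceed $1-\epsilon$ with $\epsilon<1/2$ one needs $k=\Omega(2^n)$. Yao's principle then transfers the bound to randomized algorithms.

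The main (and essentially only) obstacle is making sure nothing outside the single-clause ``confirming branch'' sneaks in extra information; this is handled by using single-clause formulas so that only a single clause index is ever possible as a negative answer, making the oracle transcript carry no bit beyond ``hit or miss.'' Everything else is a straightforward Yao-style calculation, and the computation of the constant in the $\Omega(2^n)$ bound can be tightened to $2^n-O(1)$ if desired.
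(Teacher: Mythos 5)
Your construction and argument are essentially identical to the paper's: the paper also uses the family of single-clause formulas $\phi_y$, each uniquely falsified by $y$, together with an oracle that answers \yes on every query except the exact falsifier, and then bounds the success probability against a uniform prior. The one detail the paper adds is a tautology $\phi_\emptyset$ to the candidate family; this closes the possibility that $2^n-1$ all-\yes answers already pin down the hidden formula, and it is what lets the paper state the bound as $2^n$ rather than $2^n-1$, so you would want to include it to match the proposition as written. (Also, your upper bound on the success probability could be cleaned up to $(k+1)/2^n$ by weighting the second term by $\Pr[y\notin Q_k]$, but this does not change the $\Omega(2^n)$ conclusion.)
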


\begin{proof}
	We will give a collection of $2^n + 1$ formulas as follows. For each $y\in\Bn$, define $\phi_y$ to be a formula of only one clause, which has $n$ literals. The $i$-th literal is $\bar x_i$ if $y_i = 1$, and is $x_i$ if $y_i = 0$. This makes $\phi_y$ to be satisfied by all assignments of $x$ except for $x = y$.  The last formula $\phi_\emptyset$ can be any tautology. We will confine the hidden input formula to within this selection $\big\{\phi_y ~|~ y\in \Bn\big\}\cup \{\phi_\emptyset\}$. Note that all these formulas have different sets of satisfying assignments; thus, outputting a formula $\phi'$, with the same set of satisfying assignments as $\phi$, requires to identify the input formula from this selection.
	
	Now for any randomized algorithm that tries to output a formula $\phi'$, and any first $k$ queries $x^1, ..., x^k$ it makes, we, as an adversarial oracle, always answer \yes. The only information that the algorithm obtains at this point is that the hidden input formula is not $\phi_{x^1}$, ..., $\phi_{x^k}$. If the algorithm outputs a formula, then the success probability would be $1/(2^n+1-k)$. Thus the algorithm cannot pin down $\phi$ with an error probability smaller than $1/2$ until asking all potential solutions in $\Bn$.
\end{proof}

%In the textbook of Arora and Barak~\cite{AB09} (pages 58-59), it is mentioned that if $\p=\np$, then ``whenever a scientist has some experimental data, she would be able to automatically obtain the simplest theory ... that best explains these measurements''. While this is true for a fixed set of experimental data, as we mentioned, natural sciences advance by the cycle of theory-proposing and experiments-verifying; thus, the experimental data is not a fixed set, and the large \Sat formula of the Universe is unknown. However, the above theorem philosophically implies that even in this difficult situation, we can still find a best theory efficiently if $\p=\np$.

%%%%%%%%%%%%%%%%%%%%%%%%%%%%%%%%%%%%
%%%%%%%%%%%%%%%%%%%%%%%%%%%%%%%%%%%%

% One can notice that if the computation oracle is only the decision version of $\Sat$, then the computational complexity increases by at most a factor of $n$, since it takes at most $n$ queries to find a satisfying assignment using the self-reducibility of $\Sat$.

We also have a lower bound on the randomized trial complexity of $\Sat\uu$, which matches the upper bound at least when $m = \Omega(n^2)$.

\begin{Thm}\label{theorem-SAT-lower}
If $m = \Omega(n^2)$, any randomized algorithm for the $\Sat\uu$ problem needs at least $\Omega(mn)$ trials, even with unbounded computational power. If $m = o(n^2)$, the lower bound is $\Omega(m^{3/2})$.
\end{Thm}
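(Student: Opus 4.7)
The plan is to prove this bound adversarially, as the introduction indicates. Fix any randomized algorithm $\mcA$ making $T$ queries. I would construct, online as $\mcA$ runs, a verification oracle $\V$ together with a pair of satisfiable CNF formulas $\phi_1,\phi_2$ with disjoint satisfying-assignment sets, both consistent with every answer returned by $\V$. Because $\mcA$'s eventual output must be an assignment on which $\V$ confirmed satisfiability, this output can lie in at most one of the two disjoint satisfying sets, so $\mcA$ must fail on at least one of $\phi_1,\phi_2$, giving the lower bound on $T$.

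The adversary would maintain, for each clause $i\in [m]$, a set $L_i$ of possible literals (initially the full set of $2n$ literals $\{x_1,\bar x_1,\ldots,x_n,\bar x_n\}$). On a query $x$, it returns some index $i$ with $L_i\cap\{\ell:\ell(x)=0\}\neq\emptyset$ and updates $L_i\leftarrow L_i\cap\{\ell:\ell(x)=0\}$. Writing $T_i$ for the set of trials on which $i$ is returned, one checks that $x_j\in L_i$ iff $x^{(t)}_j=0$ for all $t\in T_i$, and $\bar x_j\in L_i$ iff $x^{(t)}_j=1$ for all $t\in T_i$. Two consistent satisfiable formulas with disjoint satisfying sets can be produced from such a state whenever there exist a variable $j$ and distinct clause indices $i,i'$ with $x_j\in L_i$ and $\bar x_j\in L_{i'}$: take $C_i=\{x_j\}$ in $\phi_1$ (forcing $x^*_j=1$) and $C_{i'}=\{\bar x_j\}$ in $\phi_2$ (forcing $x^*_j=0$), completing the remaining clauses within their $L_{\cdot}$'s. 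The adversary's task therefore reduces to preserving such a ``witnessing triple'' $(j,i,i')$ for as many queries as possible.

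For the $\Omega(mn)$ regime when $m=\Omega(n^2)$, on each query the adversary would pick the response minimizing disruption to the witnessing structure; using a potential function---essentially $\sum_i |L_i|$ augmented with counts of surviving triples across the coordinate splits $P_j=\{t:x^{(t)}_j=1\}$ and $Q_j=\{t:x^{(t)}_j=0\}$---that starts at $\Theta(mn)$ and decreases by only $O(1)$ per query, one would obtain $T=\Omega(mn)$. For the $\Omega(m^{3/2})$ regime when $m=o(n^2)$, feasibility forces each $T_i$ to be monochromatic in some variable, so the adversary's responses induce a partition of $[T]$ into $m$ nonempty monochromatic pieces; a balls-and-bins-style counting argument should then show that until $T=\Omega(m^{3/2})$, one can always find a coordinate split $(P_j,Q_j)$ hosting two opposite-colored pieces, preserving the witnessing triple. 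The main obstacle will be the combinatorial bookkeeping for the adaptive adversary: showing that the adversary can always make a feasible choice maintaining $L_i\neq\emptyset$ and the witnessing triple against arbitrary (randomized, adaptive) trial choices by $\mcA$, and handling the transition at $m=\Theta(n^2)$ between the two regimes cleanly.
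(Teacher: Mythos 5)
Your high-level framework---build an online adversary that maintains two consistent, satisfiable formulas with disjoint solution sets until $\Omega(mn)$ queries have been answered---is the right one, and it is indeed the strategy the paper uses. But the specific mechanism you describe has genuine gaps, and the paper's construction is quite different in how it makes the framework bite.

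First, the plain potential $\sum_i |L_i|$ does not decrease by only $O(1)$ per query, and the augmentation you gesture at is not specified enough to repair this. The algorithm controls the queried assignment $x$; if it queries the all-ones vector and then the all-zeros vector, the first query already cuts some $|L_i|$ from $2n$ to $n$ (the returned clause must have all its literals false on $x$, so $L_i$ becomes the set of $n$ negated literals), and a few more balanced queries continue to cut by $\Theta(|L_i|)$. So a straightforward counting potential starts at $\Theta(mn)$ but can lose $\Theta(n)$ per query, which only yields $\Omega(m)$. You would need to argue that the adversary can route violations so that every query costs $O(1)$ in potential no matter what $x$ the algorithm proposes, and that is exactly the part you flag as an ``obstacle'' without resolving it.

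Second, your ``witnessing triple'' $(j,i,i')$ with $x_j\in L_i$ and $\bar x_j\in L_{i'}$ is not by itself sufficient to produce two disjoint \emph{satisfiable} formulas. In the degenerate case $L_i=\{x_j\}$ and $L_{i'}=\{\bar x_j\}$, the hidden clauses are forced and the unique consistent formula is unsatisfiable. More generally, the completion of the other clauses and of $C_i, C_{i'}$ themselves has to be compatible with both $x_j=1$ and $x_j=0$, so you need nontrivial structural guarantees on all of the $L_k$'s, not just a single surviving triple. Making the adversary maintain those guarantees against an arbitrary adaptive algorithm is the real work.

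The paper circumvents both issues by committing most of the formula upfront. It divides the variables into three blocks, uses $\Theta(n^2)$ ``gadget'' clauses (revealed for free) to force any satisfying assignment to be of the special form $x(i_1,i_2,i_3)$ with exactly one 1 in the last block, and hides $\Theta(n^2)$ clauses of the form $(\bar x_{i_1}\vee\bar x_{i_2})$ or $(\bar x_{i_1}\vee\bar x_{i_2}\vee x_{i_3})$. This collapses the space of candidate satisfying assignments to a set $T$ of $(n/3)^3$ triples, and each query can delete at most one triple, giving $\Omega(n^3)=\Omega(mn)$ immediately without any delicate potential. The $m=\omega(n^2)$ case generalizes the block structure, and the $m=o(n^2)$ case uses only $\Theta(\sqrt m)$ variables to get $\Omega(m^{3/2})$. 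The lesson is that a bespoke hard instance makes the ``only lose $O(1)$ per query'' claim literally true by design, whereas in the generic $L_i$-based adversary it has to be fought for and you have not yet shown it can be won.

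Your proposal is therefore an outline of a plausible alternative route, but as written it has two unfilled holes (the potential decrease rate and the sufficiency of the witnessing triple) that are precisely where the difficulty lies, and the paper's actual proof takes a different, more structured route that avoids them.
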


The proof of the lower bound is combinatorially involved, in which we construct instances such that only one candidate literal can be eliminated in the foregoing process. This is enforced with the help of certain clauses that confine all satisfying assignments to a very special form.
Further, we remark that the proof can be easily adapted to show the same lower bound for the decision problem, namely to decide whether a formula (with unknown clauses) is satisfiable.

\begin{proof}
We will first show the lower bound of $\Omega(n^3)$ for the special case when $m=\Theta(n^2)$. Then we will consider the cases when $m = \omega(n^2)$ and $m = o(n^2)$.
	
Let $m^* = \binom{n/3}{2}+\frac{n^2}{9}+3$. We will construct a family of CNF formulas with $n$ variables and $m$ clauses, where $m^*\leq m \leq O(n^2)$.
Without loss of generality, we assume that 3 divides $n$. Divide $[n]$ into three equal blocks: $B_1 = \big\{1,2,\ldots,\frac{n}{3}\big\}$, $B_2 = \big\{\frac{n}{3}+1, \frac{2n}{3}+2, \ldots, \frac{2n}{3}\big\}$, and $B_3 = \big\{\frac{2n}{3}+1, \frac{2n}{3}+2, \ldots, n\big\}$. We, as an adversary constructing the verification oracle, maintain a set $T$ of triples $(i_1, i_2, i_3)$, with $T$ initially containing $(i_1,i_2,i_3)$ with all $i_1\in B_i$, $i_2\in B_2$ and $i_3\in B_3$.
Each triple $(i_1, i_2, i_3)$ represents an assignment, denoted by $x(i_1, i_2, i_3)$, where $x_i=1$ if and only if $i\in \{i_1,i_2,i_3\}$. We will show that some of the formulas in the constructed family have a unique satisfying assignment of the form $x(i_1, i_2, i_3)$, which can only be found by at least $(\frac{n}{3})^3$ queries for any randomized algorithm and a carefully designed verification oracle.

The hidden formula $\phi$ has $\Theta(n^2)$ clauses, divided into two parts, each with $\Theta(n^2)$ clauses. The first part ensures that any satisfying assignment of $\phi$ needs to have exactly one variable $x_i = 1$ in the last block. This can be done by $\Theta(n^2)$ clauses because we use one clause $\vee_{i\in B_3} x_i$ to enforce that there is at least one $x_i = 1$ in that block, and then use $\bar x_i \vee \bar x_j$ to enforce that at most one of $x_i$ and $x_j$ is assigned to be 1. Thus, this part of formula
\[(\vee_{i\in B_1} x_{i}) ~ \bigwedge ~ (\vee_{i\in B_2} x_{i}) ~ \bigwedge ~ (\vee_{i\in B_3} x_{i})\bigwedge_{i,j\in B_3: i\neq j} (\bar x_i \vee \bar x_j) \]
ensures that a satisfying assignment has at least one $x_i = 1$ in each of the first two blocks, and exactly one $x_i = 1$ in the last block. This part has $\binom{n/3}{2}+3 = \Theta(n^2)$ clauses.

The second part of the formula consists of $\frac{n^2}{9}$ clauses, indexed by the pairs $(i_1,i_2)$ for each $i_1\in B_1$ and $i_2\in B_2$. The clause associated with $(i_1,i_2)$ is either $(\bar x_{i_1} \vee \bar x_{i_2})$ or $(\bar x_{i_1} \vee \bar x_{i_2} \vee x_{i_3})$ for some $i_3\in B_3$. For an arbitrary randomized algorithm, on each query $x$, if $x$ does not satisfy the first part of formula, then we return the corresponding clause in that part. (This first part can be even revealed to the algorithm for free.) Now assume that $x$ satisfies the first part and it has three positions $i_1, i_2, i_3$ being 1, one in each block. (If there are multiple 1's in the first two blocks, let $i_1$ and $i_2$ denote the positions of the first 1's in the corresponding blocks.) Let the verification oracle return the clause with index $(i_1,i_2)$; note that this implies that the clause cannot be of the form $(\bar x_{i_1} \vee \bar x_{i_2} \vee x_{i_3})$. We then update the candidate set $T$ by removing one element $(i_1,i_2,i_3)$ from it. Note that this is also all the information the algorithm obtains from this query, that is, for any $(i_1', i_2', i_3')\in T$,  $(i_1', i_2', i_3')\neq (i_1, i_2, i_3)$, $x(i_1', i_2', i_3')$ can still be a possible satisfying assignment.
	
This process continues; we claim that as long as $|T|\geq 2$, namely it contains at least two distinct triples $(i_1, i_2, i_3)$ and $(i_1', i_2', i_3')$, then the formula can be either with a unique satisfying assignment $x(i_1, i_2, i_3)$ or with a unique satisfying assignment $x(i_1', i_2', i_3')$, thus the algorithm does not know what to output yet. Indeed, all previous queries by the algorithm cannot distinguish the following two possibilities: The second part of the formula can be either
\[\phi_1 = \bigwedge_{(j_1,j_2) \neq (i_1,i_2)} (\bar x_{j_1} \vee \bar x_{j_2}) ~ \bigwedge ~ (\bar x_{i_1} \vee \bar x_{i_2} \vee x_{i_3})\]
\text{or}
\[\phi_2 = \bigwedge_{(j_1,j_2) \neq (i_1',i_2')} (\bar x_{j_1} \vee \bar x_{j_2}) ~ \bigwedge ~ (\bar x_{i_1'} \vee \bar x_{i_2'} \vee x_{i_3'})\]
where $j_k$ is in block $B_k$.
	
We claim that both formulas are satisfiable, and actually each has a unique satisfying assignment, namely $x(i_1, i_2, i_3)$ and $x(i_1', i_2', i_3')$, respectively. Let us consider $\phi_1$ as an example. First, it is easy to see that $x(i_1, i_2, i_3)$ does satisfy $\phi_1$.
%$x_{i_3}=1$ satisfies the last clause and for any $(j_1,j_2) \neq (i_1,i_2)$, say $j_1\neq i_1$, then $x_{j_1} = 0$ and this satisfies the clause $(\bar x_{j_1} \vee \bar x_{j_2})$.
Second, the assignment $x(i_1, i_2, i_3)$ is the only satisfying assignment. Suppose $x$ satisfies $\phi_1$, then to satisfy $(\vee_{i\in B_1} x_{i})\bigwedge (\vee_{i\in B_2} x_{i}) \bigwedge_{(j_1,j_2) \neq (i_1,i_2)} (\bar x_{j_1}\vee \bar x_{j_2})$, we have $x_{i_1}=x_{i_2} = 1$ and $x_i = 0$ for $i\in B_1\cup B_2 - \{i_1,i_2\}$. Otherwise, if $x_{j_1} = 1$ for any $j_1\neq i_1$ in $B_1$, then $x_{j_2} = 0$ for all $j_2\in B_2$, violating $\vee_{i\in B_2} x_{i}$. Thus, $x_{j_1} = 0$ for all $j_1\neq i_1$ in $B_1$; the clause $(\vee_{i\in B_1} x_{i})$ then forces $x_{i_1} = 1$. Similarly we can show that $i_2$ is the only position in $B_2$ with assignment 1. Now to satisfy the last clause in $\phi_1$, $x_{i_3}$ must be 1, and the first part of formula guarantees that $i_3$ is the only position $i$ in $B_3$ with $x_i = 1$. Thus, $x(i_1, i_2, i_3)$ is the only satisfying assignment.
	
Since initially $|T| = (n/3)^3$ and each query decreases $|T|$ by 1, we know that the algorithm needs at least $(n/3)^3 = \Omega(n^3)$ queries in the worst case.
	
Now consider the case that $m = \omega(n^2)$. Suppose $c > 2$ is the minimum integer that $m \leq (\frac{n}{2c})^c$. Divide the variables into $c+1$ blocks, with the first $c$ blocks of size $k_1 = (\frac{m}{2})^{1/c} < \frac{n}{2c}$, and the last block of size $k_2 = n - ck_1$. Note that $k_1 c \leq \frac{n}{2}$ and thus $k_2 \geq \frac{n}{2}$. By a similar setting of the previous formulas, we can use $\binom{k_2}{2}+1$ clauses to ensure that all the satisfying assignments have exactly one $x_i=1$ in the last block, and $c$ clauses to ensure that there is at least one $x_i=1$ in each of the first $c$ blocks. We also use $k_1^c$ clauses to hide a unique tuple $(i_1, \ldots, i_{c+1})$, with each $i_\ell$ in block $\ell$, such that the assignment $x$ with $x_i = 1$ if and only if $i = i_\ell$, for $\ell\in [c+1]$, is the unique satisfying assignment. Then we use $k_1^c + \binom{k_2}{2}+1+c < m$ clauses to hide the satisfying assignment, to find which needs $k_1^c k_2 \geq \frac{m}{2}\cdot \frac{n}{2} = \Omega(mn)$ queries.
	
Finally, for the case of $m < m^*$, we only use $\Theta(\sqrt{m})$ variables. The problem is then reduced to the first case, which gives a lower bound of $\Omega(m\sqrt{m}) = \Omega(m^{3/2})$.
\end{proof}

%\vspace{-0.07in}
\section{Group Isomorphism}\label{section-group-iso}

%\red{replace all permutations to bijections?}
Given two groups, $G$ and $G'$, of the same size, the group isomorphism (\GpI) problem is to find an isomorphism between $G$ and $G'$ or to report that ``$G\ncong G'$''. More precisely, given two groups, $G$ and $G'$, by their multiplication tables, $T_{n\times n}$ and $T'_{n\times n}$, our task is to output a bijection $\pi: G\rightarrow G'$ such that %\\[-0.15in]
\begin{equation}
	\pi(a\circ b) = \pi(a)\circ' \pi(b) %\\[-0.03in]
\label{eq:GroupIDef}
\end{equation}
for all $a,b\in G$ (where $\circ$ and $\circ'$ are the multiplications of $G$ and $G'$, respectively), or to report that ``$G\ncong G'$''.

Whether a polynomial time algorithm exists for the general $\GpI$ problem is a long-standing open question. Compared with another well-known problem, Graph Isomorphism, however, $\GpI$ has more group structures for potential use, and indeed, $\GpI$ can be solved in polynomial time if the given groups are Abelian, and (the decision version of) $\GpI$ is in $\np\cap \co\np$ (under certain complexity assumptions) if the given groups are solvable \cite{AT11}.

The problem is by nature a constraint satisfaction problem, searching for a bijection $\pi$ that satisfies the $n^2$ constraints Eq.\eqref{eq:GroupIDef}. In the unknown-input model, we can consider the case of both groups being unknown and that of exactly one group, say, $G$, being unknown. In either case, we can propose bijections $\pi$ to the verification oracle \V. If $\pi$ is not an isomorphism, then \V gives a pair $(a,b)$ with the foregoing equality violated. Our upper bound result in this section applies to the general case in which both groups are unknown, and our lower bound result applies even to the restricted case in which one group is known. %Therefore, both of our upper and lower bound results are at the stronger sense.

A very natural attempt to solve $\GpI\uu$ is to keep proposing bijections $\pi$ that are consistent with our current knowledge of the restrictions of a valid bijection. More precisely, whenever \V returns a pair of elements $(a, b)$ on a proposed $\pi$, it adds the restriction that we cannot \emph{simultaneously} map the three elements $(a, b, a\circ b)$ to $(\pi(a), \pi(b), \pi(a)\circ' \pi(b))$. Thus, there are no more than $O(n^6)$ forbidden rules. Note that finding a bijection that avoids a list of forbidden pairs of triples is easy with an $\np$ computation oracle. Thus,
$\GpI\uu$ can be solved using $\Sat$ as the computation oracle in polynomial time and via $O(n^6)$ trials.
%we have the following proposition.

%\vspace{-.05in}
%\begin{Prop}
%	$\GpI\uu$ can be solved using $\Sat$ as the computation oracle in polynomial time and by $O(n^6)$ trials.
%\end{Prop}
%\vspace{-.05in}

An immediate question that arises from this claim is the following. Does it hold with only a computation oracle $\GpI$ instead of $\Sat$? (After all, only comparison with $\GpI$ reveals the extra difficulty arising from the unknown input on the problem.) This seems quite conceivable that this is the case, as group theory by definition handles triples in the form $(\pi(a), \pi(b), \pi(a)\circ' \pi(b))$, and we have not yet exploited the group structures in the given tables. % has rich tools studying pairwise relations between objects. However, when the multiplication table of the given group is unknown to us,
Surprisingly, this intuition turns out to be misleading, as shown by the following hardness result, which stands even if one group $G'$ is known to us and is a very simple group $\mbZ_p$ for a prime $p$. %shows that the unknown-input version becomes hard in time complexity even if one group is given as a very simple group $\mbZ_p$, where $p$ is a prime number.
%Actually, the hardness result holds even if one group, say $G'$, is explicitly given to be $\mbZ_p$.
Denote the known-input version of this problem by $\GpIp$. Note that because it is solvable in polynomial time, the computation oracle is not needed (modulo a polynomial factor in runtime) in the unknown-input model. %(thus we will omit it in the notation).

%\vspace{-0.08in}
\begin{Thm}\label{theorem-group-iso}
  If\, $\GpIp\uu$ can be solved in polynomial time, i.e., $\GpIp\uu\in \p^{\scriptsize {\V}}$, then $\p=\np$. More specifically, if $\GpIp\uu$ can be solved in time $t(p)$, then $\HC$ can be solved in time $O(t(p) \cdot p)$, where $p$ is the order of the given group $\mbZ_p$.
\end{Thm}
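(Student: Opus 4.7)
I would reduce the \np-complete \HC to $\GpIp\uu$. By a standard padding (use Bertrand's postulate to find a prime $p$ slightly larger than $n$, and extend $G$ by an HC-preserving gadget) I may assume the input graph $G$ has $p$ vertices for some prime $p$, labeled $u_1, \ldots, u_p$ with $u_1$ distinguished. If $G$ has a Hamiltonian cycle, it may be rotated to start at $u_1$ and written as $u_1 \to u_{\tau(1)} \to \cdots \to u_{\tau(p-1)} \to u_1$ for an unknown permutation $\tau$ (with $\tau(0) := 1$). Define the group $H$ on $V := \{u_1, \ldots, u_p\}$ by $u_{\tau(i)} \cdot u_{\tau(j)} := u_{\tau((i+j)\bmod p)}$. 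Then $H \cong \mbZ_p$ via $\phi(u_{\tau(k)}) = k$, and because $\mathrm{Aut}(\mbZ_p) \cong \mbZ_p^*$, every isomorphism $\pi : H \to \mbZ_p$ has the form $\pi(u_{\tau(k)}) = ck$ for some $c \in \mbZ_p^*$. Hence from any such $\pi$ a Hamiltonian cycle is recoverable by a \emph{$c$-sweep}: try each $c \in \mbZ_p^*$ and check whether $\pi^{-1}(0), \pi^{-1}(c), \pi^{-1}(2c), \ldots, \pi^{-1}((p-1)c)$ is a Hamiltonian cycle of $G$. This costs $O(p)$ per candidate $\pi$.

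\textbf{Simulating $\V$.} Since $\tau$ is unknown I cannot materialize $H$, but $\mcA$ in the trial-and-error model never reads the multiplication table; it only interacts with the verification oracle. So I replace $\V$ by a polynomial-time simulator $\V'$ that on input $\pi$ proceeds as follows. (i) Run the $c$-sweep; if any $c$ produces a Hamiltonian cycle, output it and halt the whole reduction. (ii) Else if $\pi(u_1) \ne 0$, return $(u_1, u_1)$: this is a genuine violation \emph{regardless} of $\tau$, because $u_1$ is the identity of $H$, so $\pi(u_1 \cdot u_1) = \pi(u_1) \ne 2\pi(u_1)$. (iii) Else, i.e., $\pi(u_1) = 0$ yet no $c$-sweep succeeds, return a pair determined by the first edge-failure $(\pi^{-1}(k), \pi^{-1}(k+1)) \notin E(G)$ along one of the putative cyclic sequences. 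The design of step~(iii) must satisfy the dichotomy announced in the introduction: either the returned pair is a true violation of $H$, or the configuration that would make it \emph{non}-violating already exhibits a Hamiltonian cycle of $G$, which the reduction then outputs.

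\textbf{Correctness and main obstacle.} Given the dichotomy, the reduction either halts from within $\V'$ with a Hamiltonian cycle, or $\V'$'s answers are consistent with the hidden $H$ throughout the run, in which case $\mcA$ (by its correctness on inputs supplied via a valid verification oracle) must eventually output some isomorphism $\pi^*$, and applying the $c$-sweep to $\pi^*$ produces the cycle. If $G$ has no Hamiltonian cycle, every $\pi$ fails the $c$-sweep and $(u_1, u_1)$ safely serves as a violation against a vacuous $H$, so the reduction cleanly reports ``no cycle''. The total time is $t(p)$ queries to $\mcA$, each simulated in $O(p)$ time, matching the claimed $O(t(p)\cdot p)$ bound for $\HC$. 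The technical heart---and the step I expect to be the main obstacle---is step~(iii) and its dichotomy: certifying a group-theoretic violation from graph-theoretic data without ever seeing the multiplication table requires a careful case analysis coupling the cyclic sequences $\pi^{-1}(0), \pi^{-1}(c), \pi^{-1}(2c), \ldots$ (for each $c$) to the edge set of $G$, and this is the ``carefully designed simulator'' trick flagged in the introduction.
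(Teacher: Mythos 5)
Your overall strategy matches the paper's: reduce $\HC$ to $\GpIp\uu$, build a polynomial-time simulator $\V'$ for the verification oracle, and exploit the fact that $\mcA$ never reads the multiplication table. The gap is in a small but decisive design choice that the paper makes and you do not: \emph{which} vertex to distinguish. You set $u_1$ to be the \emph{identity} of $H$ (i.e., $u_1 = u_{\tau(0)}$). The paper instead labels an arbitrary vertex $a_1$ and rotates the Hamiltonian cycle so that $a_1 = b_1$, the \emph{generator} (the \emph{second} vertex of the cycle, not the first). This matters for two reasons.

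First, with the generator known, the step size is exposed directly: the paper computes $x = \pi(a_1)$ and needs to examine only the \emph{single} putative cycle $\pi^{-1}(0), \pi^{-1}(x), \pi^{-1}(2x), \ldots$, in time $O(p)$. In your setup, the generator is unknown, forcing a $c$-sweep over all $c \in \mbZ_p^*$; that is $O(p^2)$ per query, so you get $O(t(p)\cdot p^2)$ rather than the stated $O(t(p)\cdot p)$. (Still polynomial, so only a minor issue for the main conclusion.)

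Second, and fatally for step (iii) as written: to certify a violation, you must exhibit a pair $(a,b)$ with $\pi(a\cdot_H b) \ne \pi(a)+\pi(b)$, and the whole point of the construction is that \emph{multiplication by the generator} corresponds to ``move to the next vertex along the Hamiltonian cycle,'' which is what ties the group law to the edge set. The paper therefore returns the pair $(\pi^{-1}(ix),\, a_1)$ --- with the known generator in the second slot --- and verifies soundness by the chain: ``if $(\pi^{-1}(ix),a_1)$ were not a violation, then $\pi^{-1}(ix)\circ_T a_1 = \pi^{-1}((i+1)x)$, which forces the edge $(\pi^{-1}(ix),\pi^{-1}((i+1)x))$ to exist.'' In your setup the generator is $u_{\tau(1)}$, which you do not know. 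A pair of the form $(\pi^{-1}(kc), \pi^{-1}((k+1)c))$ or $(\pi^{-1}(kc),\pi^{-1}(c))$ for a guessed $c$ has no such clean soundness argument, because the product $\pi^{-1}(kc)\cdot_H\pi^{-1}(c)$ is $u_{\tau(j+j')}$ for unknown indices $j,j'$ and does not reduce to an edge-existence statement unless $c$ happens to equal the true $c^*$. You flag step (iii) as the main obstacle; it is indeed a genuine gap, not merely a technicality, and the fix is precisely to pin the distinguished vertex to the generator (as the paper does) rather than to the identity, after which $\pi(a_1)$ hands you the step size for free and the violation $(\pi^{-1}(ix),a_1)$ becomes sound by the short argument above. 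Everything else in your proposal --- the padding to prime order, the $c$-sweep idea repurposed as a single sweep, the ``output cycle if $\V'$ says yes or if $\mcA$ halts with a bijection'' logic, and the observation that HC-absence lets $\mcA$ run to completion without harm --- lines up with the paper once this fix is made.
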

%\vspace{-0.08in}

\noindent
{\em Idea of the proof.}
The hardness result is shown by a reduction that is not very standard. Given a graph $H$ with $p$ vertices, we employ an algorithm $\mcA$ for $\GpI(\cdot,\mbZ_p)\uu$ to find a Hamiltonian cycle in $H$ in the following way. Assuming the existence of a Hamiltonian cycle $C$ (it can be seen that the primality of the size of the graph and the assumption of the existence of one Hamiltonian cycle do not alter the hardness of the Hamiltonian cycle problem), define a group $T$ via $C$ as follows.
Let $(b_0, b_1, \dots b_{p-1},b_0)$ be a Hamiltonian cycle $C$ in graph $H$.
We can fix a vertex $a$ and assume that $b_1 = a$, which
is always achievable by a cyclic shift in the labels in the Hamiltonian cycle if
necessary. For simplicity, we use the vertices of $H$ to denote the elements of group $T$. Now, for any $b_i$ and $b_j$ in group $T$, define their
multiplication by $b_i \circ b_j = b_{i + j \text{ mod }p}$. It is easy to see that $T$ is a
cyclic group with $p$ elements (where $b_1$ is a generator of the group).

The main idea of the reduction is to run algorithm $\mcA$ on input $(T,\mbZ_p)$, %where $T$ is the hidden group to the algorithm,
and translate the output of $\mcA$, which is an isomorphism
from $T$ to $\mbZ_p$, to a Hamiltonian cycle in the given graph $H$.
However, one problem immediately arises: Because the reduction algorithm has only polynomial time, and it cannot find such a Hamiltonian cycle, thus cannot construct the multiplication table $T$.
%which means that we cannot know in polynomial time the explicit representation of the constructed multiplication table $T$;
%and in turn cannot provide the verification oracle $\V$ for $\mcA$.% in the reduction algorithm.

To get around this issue, we employ the crucial fact that $\mcA$ does not know its first input---all of $\mcA$'s information about $T$ comes from interactions with its verification oracle $\V$. Thus, it is sufficient to construct a $\V$ that answers $\mcA$'s trials. However, doing so again requires the information on $T$, which is exactly what we do not have. Here, the idea is to efficiently construct a \emph{simulator} $\V'$ to take the place of $\V$. Given the shortage of running time, it is inevitable that we lose something in our simulator $\V'$, and, in our final construction it turns out to be the correctness, which is the seemingly the most critical component. In other words, $\V'$ cannot answer all of $\mcA$'s questions correctly. What makes it still qualified for our purpose is the following key property. On any $\pi$ proposed by $\mcA$, $\V'$ %\\[-0.25in]
\begin{enumerate}
\item either provides a correct response to $\pi$, or %\\[-0.2in]
\item finds a Hamiltonian cycle in $H$. %\\[-0.25in]
\end{enumerate}
This property means that the first time $\V'$ gives a wrong answer to $\mcA$, it has just found a Hamiltonian cycle in $H$. ($\mcA$'s output is admittedly now out of control now, but we no longer care about the correctness of $A$; we have used part of $\mcA$'s code to solve our $\HC$ problem.) \hfill $\square$

\begin{proof}[Formal Proof of Theorem~\ref{theorem-group-iso}]
 In the $\HC$ problem, we are given a graph $G$ with $p$ vertices and are asked if it has at least one Hamiltonian cycle\footnote{The primality of the graph size does not change the hardness of this problem. For a given graph with $n$ vertices for a general number $n$, one can first find a prime in $[n,2n]$ (which takes time $n\cdot poly(\log n))$ and then for an edge $(u,v)$ in the given graph $G$, add a path from $u$ to $v$ with $(p-n)$ extra vertices to form a new graph $G'$. It is easy to see that $G'$ has a Hamiltonian cycle if and only if $G$ has a Hamiltonian cycle using the edge $(u,v)$. Fix an arbitrary $u$, try all neighbors $v$ in $G$ using the algorithm for the new instances and verify the solutions, we will know that whether $G$ has a Hamiltonian cycle.}.
  Suppose that there exists an algorithm $\mcA$ that solves the $\GpIp\uu$ problem in time $t(p)$.
We now construct an algorithm $\mathcal{B}$ for the following variation of the $\HC$ problem: Given a graph $G$ with $p$ vertices and the condition that it contains at least one Hamiltonian cycle, find a Hamiltonian cycle in $G$; call the problem $\PHC$.
%It is easy to see that it is \np-hard as well.

  Now we are given a graph $G$ with $p$ nodes, we label the nodes in $G$ as $a_1, a_2, \dots, a_p$ in an arbitrary way. We will construct an algorithm $\mcB$ that uses algorithm $\mcA$ to find a Hamiltonian cycle in $G$, assuming one exists.

  First we describe a cyclic group $T$\footnote{Precisely, $T$ is defined according to $G$ and should be written as $T(G)$. As all our discussions are with respect to the given graph $G$, for simplicity we will use $T$ in the proof.} with elements $a_1, \dots, a_p$. Let $(b_0, b_1, \dots b_{p-1})$ be a Hamiltonian cycle in graph $G$. If there is more than one Hamiltonian cycle, pick an arbitrary one. We can further impose that $b_1 = a_1$, which is always achievable by a cyclic shift if necessary, since a Hamiltonian cycle contains every vertex. Now for any $b_i$ and $b_j$ in group $T$, define their multiplication by $b_i \circ b_j = b_{i + j \text{ mod }p}$ (all additions hereafter are module over $p$). It is easy to check that $T$ is a cyclic group with $p$ elements.

  The algorithm $\mcB$ basically runs the algorithm $\mcA$ on input $(T,\mbZ_p)$. If finally $\mcA$ outputs a correct bijection $\pi$ mapping $T$ to $\mbZ_p$, then we can identify all $b_i$'s and thus find a Hamiltonian cycle. However, we do not know how to provide a valid verification oracle, because in polynomial time we cannot find $(b_0, b_1, ..., b_p)$ and define the multiplication table of $T$ as above. The idea here is to construct a simulator $\V'$ of the verification oracle in such a way that $\V'$
  \begin{enumerate}
  \item either provides correct responses to $\mcA$'s trials, or
  \item finds a Hamiltonian cycle for $\mcB$.
  \end{enumerate}
	%Since $\mcA$ makes queries to the verification oracle $V$ during its computation, $\mcB$ needs We will design a simulator $V'$ of the true verification oracle $V$, and also an algorithm $\mcB$ for solving the $\PHC$ problem, which in a given graph with at least one Hamiltonian cycle, finds one. The $V'$ is given in the algorithm box $V'(G,\pi)$.
	
  The $\V'$ and the algorithm $\mcB$ are given below.
  \begin{algorithm}
    \caption{Simulator $\V'$ given $(G,\pi)$}
    \mbox {{\bf Input}: Graph $G$, and bijection $\pi: T \rightarrow \mbZ_p$} \\[-0.15in]
    \begin{algorithmic}[1]
      \STATE Let $x = \pi(a_1)$.
      \IF{$x = 0$}
      	\RETURN $(a_1, a_1)$.
      \ELSE
      	\IF{$\exists i\in \mbZ_p$ \st $(\pi^{-1}(ix), \pi^{-1}((i+1)x)) \notin E(G)$}
      		\RETURN $(\pi^{-1}(ix),a_1)$ for the first such $i$.
      	\ELSE
      		\RETURN \yes.
      	\ENDIF
      \ENDIF
		\end{algorithmic}
	\end{algorithm}
	
  \begin{algorithm}
  \caption{Algorithm $\mathcal{B}$ for the \PHC\ problem}
  {\bf Input}: \mbox{Graph $G$ with at least one Hamiltonian cycle} \\[-0.15in]
  \begin{algorithmic}[1]
    \STATE Suppose there is a Hamiltonian cycle $(b_0, b_1, ..., b_{p-1})$; circularly shift the cycle to make $b_1 = a_1$.
    \STATE Define a group $T$ with the multiplication table given by $T(b_i,b_j) = b_{i+j \text{ mod }p}$.
    \STATE Run $\mathcal{A}$ on input $(T,\mbZ_p)$, during which: %and verification oracle $V$ with the following modification:
    \IF {$\mcA$ makes a query $\pi$ to the verification oracle $\V$}
    	\STATE run $\V'(G,\pi)$ to simulate $\V$ to give either \yes or a pair $(a_i,a_j)$ as an answer.
     	\IF {the answer is \yes}
     		\STATE $x = \pi(a_1)$.
     		\RETURN a Hamiltonian cycle found: \\ $\big(\pi^{-1}(0), \pi^{-1}(x), \pi^{-1}(2x), \dots, \pi^{-1}((p-1)x)\big)$.
     	\ENDIF
    \ENDIF
    \IF {$\mcA$ outputs $\sigma$}
     	\RETURN a Hamiltonian cycle found: \\ \mbox{$\big(\sigma^{-1}(0), \sigma^{-1}(\sigma(a_1)), \sigma^{-1}(2\sigma(a_1)), \dots, \sigma^{-1}((p-1)\sigma(a_1))\big)$.}
    \ENDIF
  \end{algorithmic}
  \end{algorithm}

Several explanations of the algorithms are in order. First, note that in the course of the algorithm, $\mcB$ defines $T$ and runs $\mcA$ on input $(T,\mbZ_p)$. However, as we have mentioned, $\mcB$ actually does not know how to find a Hamiltonian cycle in polynomial time and to define $T$. Here we use the key property that $\mcA$ \emph{does not} know its input: by saying ``$\mcB$ runs $\mcA$'', we mean to let $\mcB$ run $\mcA$'s code between trials; whenever $\mcA$ makes a trial $\pi$, $\mcB$ uses $\V'$ to simulate the true verification oracle $\V$ to give an answer.
	
	This immediately raises the second issue: Our designed $\V'$ is not a valid verification oracle for the $\GpIp\uu$ problem, since it may return \yes for some wrong bijection. (That is, even if a bijection $\pi$ does not really map $T$ to $\mbZ_p$, our oracle $\V'$ may say \yes.) But what we can guarantee are the following two properties of $\V'$. The first is roughly the soundness for the algorithm $\mcA$.

	\begin{Claim}\label{claim:GroupIvio}
		If\, $\V'$ returns a violation $(a_i,a_j)$ to a proposed $\pi$, it is indeed a violation.
	\end{Claim}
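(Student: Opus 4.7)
The plan is to analyze $\V'$ case-by-case and verify in each branch that the returned pair genuinely violates the identity $\pi(a\circ b) = \pi(a)+\pi(b) \pmod p$ with respect to the (mathematically well-defined) multiplication on $T$ induced by a fixed Hamiltonian cycle $(b_0,b_1,\ldots,b_{p-1})$ with $b_1=a_1$. Although $\mcB$ cannot efficiently produce this cycle, its existence is guaranteed in the \PHC promise, so $T$ is a fully determined object for the purposes of verification.

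For the first branch, $x=\pi(a_1)=0$ and $\V'$ returns $(a_1,a_1)$. Since $a_1=b_1$, we have $a_1\circ a_1 = b_2$, and $b_1\neq b_2$ because the $b_i$'s are distinct vertices on a cycle of length $p\geq 3$. Bijectivity of $\pi$ then gives $\pi(a_1\circ a_1)=\pi(b_2)\neq \pi(b_1)=0 = \pi(a_1)+\pi(a_1)$, so $(a_1,a_1)$ is indeed a violation. For the second branch, $x\neq 0$ and there is a smallest $i$ with $(\pi^{-1}(ix),\pi^{-1}((i+1)x))\notin E(G)$; set $u=\pi^{-1}(ix)$ and write $u=b_j$ in the $T$-indexing. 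By the definition of the group operation, $u\circ a_1 = b_j\circ b_1 = b_{j+1}$, and $(b_j,b_{j+1})\in E(G)$ since these vertices are consecutive on the Hamiltonian cycle. I will argue by contradiction: if $(u,a_1)$ were \emph{not} a violation, then $\pi(b_{j+1}) = ix+x = (i+1)x$, so $\pi^{-1}((i+1)x)=b_{j+1}$, which would make $(u,\pi^{-1}((i+1)x)) = (b_j,b_{j+1})$ an edge of $G$, contradicting the defining property of $i$. Hence $\pi(u\circ a_1)\neq \pi(u)+\pi(a_1)$, i.e.\ $(u,a_1)$ is a genuine violation.

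The main conceptual subtlety is that the proof freely manipulates $T$ and the $b_i$'s while $\mcB$ never actually computes them. What rescues us is that the claim is a purely mathematical existence statement: we only need the abstract properties $b_1=a_1$, $b_i\circ b_j = b_{i+j \bmod p}$, and $(b_j,b_{j+1})\in E(G)$ to verify the two inequalities above, so no concrete labeling is required. I expect no additional technical hurdle beyond keeping the two cases cleanly separated and tracking the identification $u=b_j$ carefully.
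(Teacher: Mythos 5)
Your proof is correct and follows essentially the same two-case contrapositive argument as the paper: in the first branch it uses bijectivity of $\pi$ together with $b_1 \neq b_2$, and in the second branch it derives $(\pi^{-1}(ix),\pi^{-1}((i+1)x)) \in E(G)$ from the assumption of non-violation, contradicting the defining property of $i$. Your write-up is slightly more explicit in naming $u = b_j$ and in noting $p \geq 3$, but the substance is the same.
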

	\begin{proof}
	If $\V'$ returns $(a_1, a_1)$ in the outer \textbf{if} statement, then it is a violation because, as $\pi$ is a bijection,
	\[\pi(a_1\circ_T a_1) = \pi(b_2)\]
    but
    \[\pi(a_1) + \pi(a_1) = 0+0 = 0 = \pi(a_1) = \pi(b_1) \neq \pi(b_2).\]
    (Here we use $\circ_T$ to emphasize that it is a multiplication of group $T$).
	
	The other possibility is that $\V'$ returns $(\pi^{-1}(ix), a_1)$: Suppose it is not a violation, then
	\begin{align}
		& \ (\pi^{-1}(ix),a_1) \text{ is not a violation} \\
		\Rightarrow & \ \pi(\pi^{-1}(ix) \circ_T a_1) = \pi(\pi^{-1}(ix))+\pi(a_1) \\
		\Rightarrow & \ \pi(\pi^{-1}(ix) \circ_T a_1) = ix+x \quad\quad\quad\quad (\pi(a_1) = x) &  \\
		\Rightarrow & \ \pi^{-1}(ix) \circ_T a_1 = \pi^{-1}((i+1)x) \quad\quad (\text{taking } \pi^{-1}) & \\
		\Rightarrow & \ (\pi^{-1}(ix),\pi^{-1}((i+1)x)) \in E
	\end{align}
	where the last line is because $\circ_T \ b_1 $ is defined as going to the next vertex along the Hamiltonian cycle $(b_0, b_1, ..., b_{p-1})$.
	\end{proof}

	The second property is roughly the soundness for $\mcB$; it relies on the fact that all non-zero elements of $\mbZ_p$ can generate the whole $\mbZ_p$.

	\begin{Claim}\label{claim:HC}
		If\, $\V'$ returns \yes to a proposed $\pi$, the vertices $(\pi^{-1}(0), \pi^{-1}(x), \pi^{-1}(2x), \dots, \pi^{-1}((p-1)x))$ in that order, as later outputted by $\mcB$, indeed form a Hamiltonian cycle of the graph $G$,
        regardless of whether $\pi$ is a correct bijection.
	\end{Claim}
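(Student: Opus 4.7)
The plan is to unpack what it means for $\V'$ to return \yes and then verify the two defining properties of a Hamiltonian cycle: vertex coverage and edge connectivity. Inspecting the pseudocode for $\V'$, a \yes-answer on a proposed bijection $\pi$ can arise only when (a) $x := \pi(a_1) \neq 0$ (otherwise $\V'$ would have returned $(a_1,a_1)$ in the outer \textbf{if}), and (b) for every index $i \in \mbZ_p$ the pair $(\pi^{-1}(ix), \pi^{-1}((i+1)x))$ is an edge of $G$ (otherwise $\V'$ would have returned some $(\pi^{-1}(ix), a_1)$ in the inner search). These two observations will be the only facts about $\V'$ that I need.

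For the edge-connectivity part, property (b) directly gives that every consecutive pair in the claimed cycle is an edge; and the closing edge from $\pi^{-1}((p-1)x)$ back to $\pi^{-1}(0)$ is covered by the $i=p-1$ case, since $(p-1+1)x \equiv 0 \pmod p$. So the sequence visits only edges of $G$.

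For the vertex-coverage part, I would invoke primality of $p$ crucially here: in $\mbZ_p$, any nonzero $x$ generates the whole group, so the multiples $\{0, x, 2x, \ldots, (p-1)x\}$ are exactly the $p$ distinct elements of $\mbZ_p$. Because $\pi$ is (by the very definition of a query made by $\mcA$) a bijection between $T$ and $\mbZ_p$, applying $\pi^{-1}$ to this set of $p$ distinct scalars yields exactly $p$ distinct vertices of $G$, which is all of them. Thus the listed vertices form a permutation of $V(G)$.

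The main point to underline, and the only subtlety worth flagging, is that nowhere in this argument do I assume $\pi$ is an isomorphism from $T$ to $\mbZ_p$: $\pi$ is merely some bijection of underlying sets, and $\V'$ may well have been fooled into answering \yes incorrectly relative to the true verification oracle $\V$. What saves the reduction is that the construction of the Hamiltonian cycle by $\mcB$ depends only on the set-theoretic bijectivity of $\pi$ and on the explicit edge checks performed by $\V'$, both of which are genuine properties of $G$ and are therefore preserved regardless of $\pi$'s algebraic correctness. Combining the two parts then yields a genuine Hamiltonian cycle in $G$, as desired.
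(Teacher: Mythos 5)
Your proof is correct and follows essentially the same approach as the paper's: extract from the \yes-answer that $x\neq 0$ and that all edges $(\pi^{-1}(ix),\pi^{-1}((i+1)x))$ exist, then use that any nonzero $x$ generates $\mbZ_p$ to get vertex coverage. The extra care you take in spelling out the wrap-around edge at $i=p-1$ and in stressing that only set-theoretic bijectivity of $\pi$ is used are both present (implicitly or explicitly) in the paper's argument, so this is the same proof, just more expansively written.
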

	\begin{proof}
	When $\V'$ returns \yes, it comes to the \textbf{else} branches of both the outer and inner \textbf{if-then-else} statements. The inner statement implies that all edges $(\pi^{-1}(ix), \pi^{-1}((i+1)x))$ exist.
    The outer statement implies that $x \neq 0$, and therefore $\{0, x, 2x, ..., (p-1)x\} = \{0, 1, ..., p-1\}$ since each non-zero is a generator of $\mbZ_p$ for prime $p$. Combining the two gives a claimed Hamiltonian cycle.
	\end{proof}
	
By Claim \ref{claim:GroupIvio}, we know that before $\V'$ returns \yes, all answers of $\V'$ to $\mcA$'s trials are valid. And Claim \ref{claim:HC} guarantees that once $\V'$ returns \yes, $\mcB$ already finds a Hamiltonian cycle (and terminates the program). Note that when $\V'$ returns \yes, it may not be a correct response to $\mcA$, but now we do not care the correctness or even the completeness of the execution of $\mcA$ any more, because we have already used $\mcA$'s code to serve our purpose of finding a Hamiltonian cycle for $G$.
 	
The analysis so far shows that the algorithm $\mcB$ correctly outputs a Hamiltonian cycle as long as $\V'$ answers \yes. To finish the proof, we need to address the case that $\mcA$ halts before $\V'$ answers \yes. First, since the graph $G$ does have a Hamiltonian cycle by promise, the group $T$ as defined does exist (despite the fact that $\mcB$ could not really find it) and it is indeed isomorphic to $\mbZ_p$. Hence, a correct algorithm $\mcA$ cannot output \no.
 	
Now the only left case is that $\mcA$ may somehow infer, from the violations returned by $\V'$, a valid bijection $\sigma$ that maps $T$ to $\mbZ_p$ before $\V'$ returns \yes. In such a case, we can actually identify each $b_i = \sigma^{-1}(i\sigma(a_1))$: First, $b_0 = \pi^{-1}(0)$ since that is the only element which does not change by multiplying itself. Then, for all $i\geq 1$, we have
 	\begin{align}
 		\sigma(b_i) & = \sigma(\underbrace{b_1\circ ... \circ b_1}_{i \ b_1\text{'s}}) & (\text{def of $\circ$ in } T) \\
 		& = \underbrace{\sigma(b_1)+...+\sigma(b_1)}_{i \text{ times}} & (\sigma \text{ is a isomorphism}) \\
 		& = i\cdot\sigma(b_1) = i\cdot\sigma(a_1)
 	\end{align}
Thus, $\big(\sigma^{-1}(0), \sigma^{-1}(\sigma(a_1)), \sigma^{-1}(2\sigma(a_1)), \dots, \sigma^{-1}((p-1)\sigma(a_1))\big) \\ = (b_0, b_1, \ldots, b_{p-1})$, as outputted by algorithm $\mcB$, is a Hamiltonian cycle.
 	
Finally, the runtime of algorithm $\mcB$ is easily upper bounded by that of $\mcA$ times that of $\V'$. So it takes $\mcB$ at most $O(t(p)\cdot p)$ time to solve the $\PHC$ problem.
Now given a $\HC$ problem instance with $p$ nodes, we can run algorithm $\mcB$ on this instance and incorporate its polynomial time bound as a time limit. After the time limit, we check the output of the algorithm $\mcB$, and accept if it is indeed a valid Hamiltonian cycle. If algorithm $\mcB$ outputs a wrong Hamiltonian cycle or it fails to output one, we reject it. Thus in this way the $\HC$ problem can be solved in $O(t(p)\cdot p)$ time too. This completes the proof.
\end{proof}

A few remarks about the above proof are in order.
\begin{enumerate}
%         \item The proof only shows a reduction from the $\HC$ problem with $p$ nodes to $\GpIp\uu$ when $p$ is a prime. For a given graph with $n$ vertices, one can first find a prime in $[n,2n]$ (which takes time $n\cdot poly(\log n))$ and then for an edge $(u,v)$ in the given graph $G$, add a path from $u$ to $v$ with $(p-n)$ extra vertices to form a new graph $G'$. It is easy to see that $G'$ has a Hamiltonian cycle if and only if $G$ has a Hamiltonian cycle using the edge $(u,v)$. Fix an arbitrary $u$, try all neighbors $v$ in $G$ using the algorithm for the new instances and verify the solutions, we will know that whether $G$ has a Hamiltonian cycle.

	\item The polynomial-time algorithm $\mcB$ cannot find a certificate of an $\np$ statement but can create a simulator for its own purpose. This phenomenon is reminiscent of the simulator paradigm in some cryptographic notions such as zero-knowledge proofs. However, differences are also clear, because simulators in zero-knowledge proofs are used to show that the interaction to the prover is in some sense ``useless'', while the simulator in our proof is actually useful for forcing $\mcA$ to leak information of a witness (in our case, a Hamiltonian cycle).
	
	%\item Our use of $\mcA$ in $\mcB$ has a (slightly) non-black-box flavor in the sense that $\mcB$ does not simply invoke the entire $\mcA$ as a function on some input and wait $\mcA$ to return an output. Rather, $\mcB$ runs $\mcA$'s pieces of code between queries and terminate $\mcA$ once some desirable event appears (regardless of whether $\mcA$ finishes).
	
	\item One can generalize the theorem by allowing the algorithm $\mcA$ to also invoke a computation oracle $C$. Then our algorithm $\mcB$ can invoke the same oracle $C$ during simulating $\mcA$, and thus the conclusion becomes that ``if $\GpIp\uu$ can be solved in time $t(n)$ with computation oracle $C$ and verification oracle $\V$, then $\HC$ can be solved in time $O(t(n) n^2)$ with computation oracle $C$''. This implies that $\GpIp\uu$ is not likely to be solved in polynomial time by the help of a computation oracle weaker than $\np$.
\end{enumerate}

%First, in the proof we uses the correctness of $\mcA$ in a non-standard manner that the input groups $(T(G),\mbZ_p)$ fed by $\mcB$ are always isomorphic, but we do need $\mcA$ not outputting \no for
	
%Second, the time bound for $\mcB$ can be improved to $O(t(n)\cdot n + \log^2 n)$ if we assume the ... conjecture that for large $n$, the $(n+1)$-th prime $p_{n+1}$ and the $n$-th prime $p_n$ has distance no more than $O(\log^2 n)$ \cite{}.

%Because of the crucial fact that $\mcA$ does not know its input group $T$, we can thus construct a reduction algorithm by running $\mcA$'s code on input $(T,\mbZ_p)$ between trials; whenever $\mcA$ makes a trial $\pi$, uses $\V'$ to simulate the true verification oracle $\V$ to give an answer. Having the nice properties of the simulator $\V'$, we can guarantee that the output of this reduction algorithm can help us to find a Hamiltonian cycle in graph $H$ efficiently.

%%%%%%%%%%%%%%%%%%%%%%%%%%%%%%%%%%%%
%%%%%%%%%%%%%%%%%%%%%%%%%%%%%%%%%%%%

\section{Graph Isomorphism}

In the graph isomorphism (\GraphI) search problem, we are given two undirected graphs
$G_1$ and $G_2$, and we are asked to find a bijection $\pi: V(G_1) \rightarrow V(G_2)$ \st $\forall i,j\in V(G_1)$,
\begin{equation}\label{eq:GraphIDef}
	(i,j)\in E(G_1) \Leftrightarrow (\pi(i),\pi(j))\in E(G_2),
\end{equation}
if such a permutation exists, and to output ``$G_1 \ncong G_2$'' otherwise. The graph isomorphism is a well-known \np problem whose complexity is still open.

In our unknown-input version, $\GraphI\uu$, we can consider both the case that the two graphs are unknown, and the case that exactly one graph, say $G_1$, is unknown. The latter case corresponds to the situations where we want to compare an unknown object (such as a new chemical compound) to a known one (such as a known chemical compound).
%$\GraphI\uu$, we assume that one graph, say $G_2$, is explicitly given, but $G_1$ is unknown. (Precisely, we know the labels of the vertices in $V(G_1)$ but are unaware of the graph structure.)

In either setting, we can propose a bijection $\pi: V(G_1) \rightarrow V(G_2)$. If it is indeed an isomorphism, the verification oracle returns \yes; otherwise, it returns a pair $i,j\in V(G_1)$ violating the above equivalence Eq.\eqref{eq:GraphIDef}.\footnote{Note that a return of $\V$ only implies that exactly one edge $(i,j)\in E(G_1)$ or $(\pi(i),\pi(j))\in E(G_2)$ exists, but does not tell which one. One may consider to define a stronger $\V$ revealing this further piece of information, but we will show that this distinction does not matter: Our algorithm works with the weaker $\V$, and our hardness result holds even for the stronger \V.}
Our upper bound result in this section applies to the model when both graphs are unknown, and our lower bound result applies to the model when one graph is known. Therefore, both of our upper and lower bound results are at the stronger sense.

%Observe that since the graph $G_2$ is known, each time when the oracle returns a pair $u,v\in V_1$, we are able to know precisely if there is an edge between $u$ and $v$ in $G_1$. Thus, if each time the oracle returns a different pair, one would expect that after $O(n^2)$ trials, we are able to learn the whole structure of $G_1$, and therefore solve the problem $\GraphI\uu$ given an access to a \GraphI oracle. Hence, the crucial question is to propose a consistent trial to ensure that the verification oracle always returns a different pair. However, the following claim implies that finding such a trial is as hard as solving a \np-complete problem.

A natural try for an algorithm is, as in the algorithms for $\SM$ and $\Sat$, to keep proposing bijections $\pi$ that are consistent with the current knowledge of the edge information of the two graphs. More precisely, each returned violation $(i,j)$ implies that a homomorphism, if one exists, should not \emph{simultaneously} map $i$ to $\pi(i)$ and map $j$ to $\pi(j)$ (or $i$ to $\pi(j)$ and $j$ to $\pi(i)$). Actually we can do better: Consider a bipartite graph $H$ with $\binom{n}{2}$ nodes at each side, where the left and right hand side nodes are indexed by all pairs of different vertices $(i,j)$ in $G_1$ and $G_2$, respectively. Starting from empty, $H$ is updated as follows. Each time we propose a bijection $\pi$ and $\V$ returns a pair $(i,j)$, we add an edge $((i,j),(\pi(i),\pi(j)))$ in $H$. Then an efficient algorithm tries to propose a bijection $\pi$ \st $(i,j)$ and $(\pi(i),\pi(j)))$ are not in the same  connected component in the current $H$, so that any newly returned $(i,j)$ gives a new piece of information. More precisely, each added edge in $H$ decreases the number of connected components by 1. Since initially all $2\binom{n}{2}$ nodes in $H$ are isolated and finally the nodes form at least one components, the algorithm stops after at most $2\binom{n}{2}-1$ trials.

The above analysis gives a trial-efficient algorithm to solve $\GraphI\uu$. When it comes to the time complexity, we need to address the question of how to find a $\pi$ to avoid a collection of forbidden pairs. This can surely be done if we are given an $\np$ oracle since \emph{checking} a bijection avoiding a list of forbidden pairs is easy. This leads to the following proposition.

\begin{Prop}
	$\GraphI\uu$ can be solved using $\Sat$ as the computation oracle in polynomial time and by $O(n^2)$ trials.
\end{Prop}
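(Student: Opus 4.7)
The plan is to run precisely the candidate-elimination scheme that the paragraph preceding the proposition already suggests. I would maintain a bipartite graph $H$ whose vertex set is $\binom{V(G_1)}{2}\cup\binom{V(G_2)}{2}$, initially with no edges. At each iteration, I would invoke the $\Sat$ computation oracle to search for a bijection $\pi:V(G_1)\to V(G_2)$ such that, for every pair $\{i,j\}\subseteq V(G_1)$, the left vertex $\{i,j\}$ and the right vertex $\{\pi(i),\pi(j)\}$ lie in different connected components of the current $H$. If $\Sat$ reports that no such $\pi$ exists, output ``$G_1\ncong G_2$''. Otherwise submit $\pi$ to $\V$: if the answer is $\yes$, output $\pi$; if it is a violating pair $(i,j)$, add the edge $\big(\{i,j\},\{\pi(i),\pi(j)\}\big)$ to $H$ and iterate.

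For correctness I would argue by the invariant that any genuine isomorphism $\tau$ from $G_1$ to $G_2$ is consistent with every edge of $H$. Indeed, if $\V$ returns $(i,j)$ as a violation of a trial $\pi$, then exactly one of $(i,j)\in E(G_1)$ and $(\pi(i),\pi(j))\in E(G_2)$ holds, so no isomorphism can send $\{i,j\}$ to $\{\pi(i),\pi(j)\}$; the new constraint encoded in $H$ is therefore satisfied by $\tau$. This invariant guarantees that the algorithm never discards a valid isomorphism; in particular, an $\Sat$ oracle answer of ``unsatisfiable'' certifies $G_1\ncong G_2$.

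The trial bound is an immediate consequence of the way $\pi$ is chosen. Each returned violation adds one edge to $H$ between two vertices that, by the constraint fed to $\Sat$, lay in distinct connected components; such an edge strictly decreases the number of connected components by one. Since $H$ starts with $2\binom{n}{2}$ isolated vertices and must always contain at least one component, the number of violations, and hence the number of trials, is at most $2\binom{n}{2}-1=O(n^2)$.

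The one remaining point, which is also the main (though routine) obstacle, is to realize the per-iteration search as a single $\Sat$ query of polynomial size. I would introduce boolean variables $x_{i,u}$ meaning ``$\pi(i)=u$'' for $i\in V(G_1)$, $u\in V(G_2)$, encode bijectivity by the standard exactly-one clauses over rows and over columns (a total of $O(n^3)$ clauses), and for each edge $\big(\{i,j\},\{u,v\}\big)$ of $H$ add the two clauses $\neg x_{i,u}\vee\neg x_{j,v}$ and $\neg x_{i,v}\vee\neg x_{j,u}$. The connected-component condition is equivalent to saying that no such forbidden pair-mapping is realized, which these clauses capture after computing, between trials, the components of $H$ via union--find and adding forbidden pairs only across newly merged components. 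The formula has size $O(n^3+|E(H)|)=O(n^3)$, is updated incrementally, and each iteration performs one $\Sat$ search-oracle call plus polynomial bookkeeping, so the overall algorithm runs in polynomial time with $O(n^2)$ trials, as claimed.
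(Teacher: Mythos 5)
Your overall scheme matches the paper's sketch exactly --- the bipartite auxiliary graph $H$, the rule that $\Sat$ should propose a bijection mapping every $G_1$-pair into a \emph{different connected component}, and the observation that each violation merges two components, giving the $O(n^2)$ trial count. However, your correctness argument has a genuine gap, and the $\Sat$ encoding is internally inconsistent with the constraint you intend.

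The gap: the invariant you establish is only that a genuine isomorphism $\tau$ avoids each single \emph{edge} of $H$, i.e.\ $\tau$ never maps $\{i,j\}$ to the $G_2$-pair that is an $H$-neighbor of $\{i,j\}$. But the condition you feed to $\Sat$ is the strictly stronger component-level one: $\pi$ must map every $\{i,j\}$ to a right vertex in a different component. A bijection can avoid every recorded edge of $H$ and yet map some pair to a non-adjacent right vertex inside the same component, so your invariant does not license concluding ``$G_1\ncong G_2$'' when $\Sat$ says the component-aware formula is unsatisfiable. The missing step is a parity argument: $H$ is bipartite with parts $\binom{V(G_1)}{2}$ and $\binom{V(G_2)}{2}$, and each edge of $H$ records that \emph{exactly one} of its two endpoints is an edge in its own graph (a ``not-equal'' constraint on edge/non-edge status). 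Any path in $H$ from a left vertex to a right vertex therefore has odd length, so a left vertex and a right vertex in the same component must have opposite edge/non-edge status, and no isomorphism can ever map one to the other. It is this component-level invariant, not the edge-level one, that certifies the ``$G_1\ncong G_2$'' output.

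A smaller but related slip: as written, your $\Sat$ formula contains forbidden-pair clauses only for the edges of $H$, which encodes the weaker edge constraint. To enforce the component condition you must add a forbidden-pair clause for \emph{every} (left, right) pair sharing a component, and there can be $\Theta(n^4)$ such pairs (e.g.\ once $H$ has become connected). That is still polynomial, so the Proposition is unaffected, but the stated $O(n^3+|E(H)|)$ formula size, and the claim that the edge clauses already ``capture'' the component condition, are both too optimistic.
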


%A critical idea in the above algorithm is that at every stage we propose a bijection that is consistent with the forbidden pairs. One may wonder whether one can first prove that finding such a consistent bijection is \np-hard, and then deduce that solving $\GraphI\uu$ is hard. However, the second step of the argument is wrong since an algorithm for $\GraphI\uu$ may be carefully designed to avoid hard-instance collections of forbidden pairs.

% corresponding to a hard instance, but uses another sequence which turns out to be an easy instance of finding a consistent bijection. (The first step is true by our later theorem.)

Of course, as in $\Sat_u\in \p^{\scriptsize \V,\Sat}$, one naturally desires an algorithm using only $\GraphI$ as the computation oracle for $\GraphI\uu$. It looks quite achievable: After all, graphs is by nature a collection of binary relations, and the well-developed graph theory is a large source of tools. Indeed, it is not hard to show by a simple probabilistic argument that one can propose a $\pi$ to avoid $\Theta(n)$ existing forbidden pairs, so it is ``merely" the matter of whether the process can continue to handle more forbidden pairs. However, these intuitions turn out to be wrong, as refuted by the following theorem about the necessity of the $\Sat$ computation oracle.
%In fact, the hardness result holds even if one graph is known.

\begin{Thm}\label{thm:GraphI}
If $\GraphI\uu$ can be solved in time $t(n)$ with computation oracle $A$ and verification oracle $\V$, then $\Clique$ can be solved in time $O(t(n) n^2)$ with oracle $A$.
%In particular, if $\GraphI\uu$ can be solved in polynomial time with $\GraphI$ as the computation oracle, i.e., $\GraphI\uu \in \p^{{\scriptsize \V, \GraphI}}$, then $\p = \np$.
\end{Thm}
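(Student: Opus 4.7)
My plan is to mirror the simulator-based reduction used in Theorem~\ref{theorem-group-iso}, but starting from \Clique instead of \HC. Given an instance $(G,k)$ of \Clique with $|V(G)| = n$, I will (via a standard promise reduction) assume $G$ contains some unknown $k$-clique $C\subseteq V(G)$. I would take the \emph{known} graph to be $G_2 = K_k \sqcup \bar K_{n-k}$ on $n$ vertices, whose only edges form a clique on $\{1,\ldots,k\}$, and define the \emph{unknown} graph $G_1$ conceptually on vertex set $V(G)$ with edge set $\{(u,v): u,v\in C,\ u\neq v\}$, so that $G_1\cong G_2$. I never materialize $G_1$; I only need to simulate its verification oracle.

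The reduction $\mcB$ runs the hypothesized $\GraphI\uu$ algorithm $\mcA$ on $(G_1,G_2)$, answering each of $\mcA$'s trials via a polynomial-time simulator $\V'$ defined as follows. Given $\pi:V(G_1)\to V(G_2)$, compute $S := \pi^{-1}(\{1,\ldots,k\})$. If $S$ is a $k$-clique of $G$, return \yes and have $\mcB$ output $S$ as the found clique; otherwise locate a pair $u,v\in S$ with $(u,v)\notin E(G)$ and return $(u,v)$ as the violation. If instead $\mcA$ halts with a bijection $\sigma$ before any \yes has been issued, $\mcB$ outputs $\sigma^{-1}(\{1,\ldots,k\})$.

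Two claims, paralleling Claims~\ref{claim:GroupIvio}~and~\ref{claim:HC}, will drive the correctness. First, every violation produced by $\V'$ is genuine for $G_1$: because $(u,v)\notin E(G)$ and $C$ is a clique of $G$, at least one of $u,v$ must lie outside $C$, so $(u,v)\notin E(G_1)$; meanwhile $\pi(u),\pi(v)\in\{1,\ldots,k\}$ gives $(\pi(u),\pi(v))\in E(G_2)$, confirming the violation. Second, whenever $\V'$ returns \yes the set $S$ is by construction a $k$-clique of $G$; and if $\mcA$ halts with $\sigma$ before any \yes is issued, then all of $\V'$'s previous answers were faithful to the genuine $G_1$, so $\mcA$'s correctness forces $\sigma$ to be a true isomorphism and hence $\sigma^{-1}(\{1,\ldots,k\}) = C$. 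Either way $\mcB$ recovers a $k$-clique of $G$.

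The main obstacle, as in the $\GpI$ proof, is to design $\V'$ that produces convincing answers without knowing $C$; the observation that breaks the deadlock is that any non-edge of $G$ inside $\pi^{-1}(\{1,\ldots,k\})$ is \emph{automatically} a valid violation against every candidate hidden $G_1$ of the above template, irrespective of which $k$-clique of $G$ actually plays the role of $C$. Each call to $\V'$ costs $O(n^2)$ (a clique check on $S$, and locating a non-edge if needed), so $\mcB$ runs in $O(t(n)\cdot n^2)$ time and invokes only the same computation oracle $A$ that $\mcA$ does, yielding $\Clique\in\p^{A}$ at the stated overhead.
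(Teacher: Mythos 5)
Your reduction is correct, but it takes a genuinely different route from the paper's. The paper sets $G_2 = G$ (known to $\mcA$) and $G_1 = K_k\sqcup\bar K_{n-k}$ (unknown to $\mcA$ but \emph{fully known to $\mcB$}); since $\mcB$ knows both graphs, it implements a \emph{genuine} verification oracle $\mcO$ rather than a simulator, and the only adversarial ingredient is a tie-breaking rule: among violated pairs, return the one minimizing $\max\{i,j\}$. The argument is then a pure adversary argument: as long as $\mcO$ only returns pairs within $\{1,\dots,k\}$, $\mcA$ has no grounds to conclude $G_1\not\cong G_2$, and the first time $\mcO$ is forced to return a pair with $\max\{i,j\}>k$, all of $\{\pi(1),\dots,\pi(k)\}$ must already be pairwise adjacent in $G$. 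You instead swap the roles, making $G_2 = K_k\sqcup\bar K_{n-k}$ the known graph and taking the hidden $G_1$ to be a $k$-clique sitting at an \emph{unknown} position $C$ inside $V(G)$, which forces you to build a bona-fide simulator $\V'$ in the style of the group-isomorphism proof (Theorem~\ref{theorem-group-iso}): $\V'$ cannot compute $G_1$, but your key observation---that any non-edge of $G$ inside $\pi^{-1}(\{1,\dots,k\})$ is a violation against \emph{every} candidate $C$---makes all its non-\yes answers correct, and the first \yes answer hands $\mcB$ a clique. Both arguments are sound and yield the same $O(t(n)\cdot n^2)$ bound. The paper's version handles the ``no $k$-clique'' branch directly inside the reduction, whereas yours goes through the promise/search version of \Clique and needs the standard wrap-up (run $\mcB$ under a time limit, verify its output); this matches what the paper does for $\HC$ in Theorem~\ref{theorem-group-iso} but is a small extra step. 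Conceptually, your approach unifies the two hardness proofs of the paper under the single ``simulator that either tells the truth or surrenders a witness'' paradigm, while the paper's \GraphI argument shows that for this problem a simulator is unnecessary and an adversarial-ordering trick suffices.
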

\begin{proof}
We prove the claim for the model when one graph is known.
We assume that there is an algorithm $\mathcal{A}$ that solves $\GraphI\uu$ in polynomial time.
%Note that for any fixed input $G_1$ and $G_2$, the algorithm $\mathcal{A}$ works in polynomial time for any choices of an adversary oracle.
We will use it to design another algorithm $\mcB$ to solve the $\Clique$ problem --- given a graph $G$ and a number $k$, find a clique of size $k$ in $G$, or claim that it does not exist.

For the given $G$, we construct a $\GraphI\uu$ instance as follows: Let the known graph $G_2=G$ and the unknown graph be $G_1$ where $V(G_1)=\{1,\ldots,n\}$ and $E(G_1)=\{(i,j)~|~1\le i<j\le k\}$, \ie, $G_1$ is composed of a $k$-clique with extra $n-k$ isolated nodes. %That is, the considered $\GraphI\uu$ instance is $(G_1,G_2)$, but we do not know the structure of $G_1$.
We now apply algorithm $\mathcal{A}$ on the instance $(G_1,G_2)$ with the following specific verification oracle $\mathcal{O}$ upon a query: Given a bijection $\pi$ from $V(G_1)$ to $V(G_2)$, if it is indeed an isomorphism, then return \yes. Otherwise $\mcO$ returns a pair $(i,j)$ that minimizes $\max\{i,j\}$ where the minimization is over all pairs $(i,j)$ \st $(i,j) \in E(G_1)$, and $(\pi(i),\pi(j)) \notin E(G_2)$. If no such pair exists, it returns an arbitrary violated pair (for such a pair $(i,j)$, it must be $\max\{i,j\}>k$). Note that $\mathcal{A}$ runs in polynomial time on the instance and $\mathcal{O}$ is efficiently implementable.

We next construct an algorithm $\mathcal{B}$ for the $k$-clique problem.

\begin{algorithm}
  \caption{Algorithm $\mathcal{B}$ for $k$-clique}
    {\bf Input}: Graph $G$
  \begin{algorithmic}[1]
    \STATE $G_1 \leftarrow ([n],\{(i,j)~|~1\le i<j\le k\})$, $G_2\leftarrow G$.
    \STATE Run algorithm $\mathcal{A}$ on $(G_1,G_2)$ on oracle $\mathcal{O}$.
        \IF {($\mathcal{A}$ outputs a bijection $\pi$) or ($\mathcal{O}$ returns a pair $(i,j)$ with $\max\{i,j\} > k$ on $\mcA$'s query $\pi$)}
        \STATE Return ``$(\pi(1), \ldots, \pi(k))$ is a clique in $G$''.
        \ELSE
        \STATE Return ``no $k$-clique''.
        \ENDIF
  \end{algorithmic}
\end{algorithm}

If $G$ is precisely a $k$-clique plus $n-k$ isolated nodes, then $G_1$ is isomorphic to $G=G_2$, and $\mathcal{A}$ finally returns a correct bijection; thus $\mathcal{B}$ also finds the clique.
If $G$ contains a $k$-clique as well as some other edges, then $G_1$ is not
isomorphic to $G$ %(actually, $G_1$ is a proper subgraph of $G$)
and $\mathcal{A}$ finally outputs \no. This looks undesirable since $\mcB$ may also output ``no $k$-clique'' in the \textbf{else} branch. However, we claim that the verification oracle $\mathcal{O}$ always returns a pair $i$ and $j$ with $\max\{i,j\} > k$ before $\mathcal{A}$ can conclude with an answer \no. Indeed, to conclude that $G_1$ is not isomorphic to $G$, $\mathcal{A}$ has to detect at least one pair $(i,j)$ with $\max\{i,j\}>k$ and $(i,j)$ is not an edge in $G_1$. (Otherwise, $\mcA$ only see edges within the first $k$ nodes in $G_1$ and it is still possible in $\mathcal{A}$'s point of view that $G_1$ is isomorphic to $G$. Thus $\mathcal{A}$ cannot make any decision yet.) But the only way that $\mathcal{A}$ detects such a pair $(i,j)$ is when $\mathcal{O}$ returns a pair $(i,j)$ where $(\pi(i),\pi(j)) \in E(G)$ but $(i,j) \notin E(G_1)$. By the design of our oracle, this can happen only if $\max\{i,j\} > k$, as claimed.

Once $\mcO$ returns a pair $(i,j)$ with $\max\{i,j\} > k$, all the edges $(\pi(i),\pi(j))$ in $G$ exist; thus, $\mcB$ already finds a $k$-clique and outputs it.

On the other hand, if $G$ does not contain a $k$-clique, then $\mathcal{A}$ never outputs a bijection. Further, for any
$\pi$, there is at least one pair $(i,j)$, $1\le i<j\leq k$, such that $(i,j)\in E(G_1)$ and
$(\pi(i),\pi(j)) \notin E(G)$, and the oracle $\mathcal{O}$ as defined
always returns one of such pairs. Hence, the algorithm $\mathcal{B}$ never outputs a $k$-clique
before $\mathcal{A}$ finishes, at which time $\mathcal{B}$ gets to the last line and outputs ``no $k$-clique''.

Finally, for the time cost, each execution of $\mcO$ takes time $k^2 \leq n^2$ and the number of queries to $\mcO$ in $\mcA$ is at most its time complexity $t(n)$, so the total time on $\mcO$ is $O(t(n) n^2)$. Other time cost mainly includes the non-query part of $\mcA$, which is at most $t(n)$, so claimed time bound holds.
\end{proof}

We have the following immediate corollary.

\begin{Cor}
For any given computation oracle $L$ (which is a class of languages), $\GraphI\uu \in \p^{\scriptsize {\V,L}}$ if and only if $\np \subseteq \p^{L}$.
%In particular, $\GraphI\uu \in \p^{{\scriptsize \V, \scriptsize \np}}$ and the trial complexity is $O(n^2)$.
\end{Cor}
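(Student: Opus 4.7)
The plan is to deduce this corollary directly from the two prior results: the proposition giving the algorithm $\GraphI\uu \in \p^{\scriptsize \V, \Sat}$ with $O(n^2)$ trials, and Theorem~\ref{thm:GraphI} reducing $\Clique$ to $\GraphI\uu$ in a relativizing manner. Both directions follow in a few lines once these are invoked.

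For the ``only if'' direction, suppose $\GraphI\uu \in \p^{\scriptsize \V, L}$, say solvable in time $t(n) = \mathrm{poly}(n)$. I would apply Theorem~\ref{thm:GraphI} with the computation oracle $A$ instantiated by $L$: the reduction shown there is oblivious to what $A$ is, so it yields a polynomial-time algorithm for $\Clique$ with oracle $L$, i.e., $\Clique \in \p^{L}$. Since $\Clique$ is $\np$-complete under polynomial-time (Karp) reductions, any language in $\np$ reduces to $\Clique$ and hence can be decided in $\p^{L}$, giving $\np \subseteq \p^{L}$.

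For the ``if'' direction, assume $\np \subseteq \p^{L}$. The prior proposition already gives an algorithm $\mcA$ for $\GraphI\uu$ that runs in polynomial time with $O(n^2)$ trials, using $\Sat$ as the computation oracle and $\V$ as the verification oracle. Since $\Sat \in \np \subseteq \p^{L}$, each call to the $\Sat$ oracle can be replaced by a polynomial-time computation that makes oracle calls to $L$. Substituting this simulation into $\mcA$ yields a polynomial-time algorithm for $\GraphI\uu$ with oracles $\V$ and $L$, hence $\GraphI\uu \in \p^{\scriptsize \V, L}$.

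No step here is genuinely hard; the only point to be slightly careful about is that the reduction of Theorem~\ref{thm:GraphI} is phrased relative to an arbitrary computation oracle $A$, so it relativizes cleanly to give $\Clique \in \p^{L}$ in the forward direction, and that the $\Sat$-simulation in the backward direction preserves polynomial running time because $\np \subseteq \p^{L}$ gives a polynomial-time $L$-oracle procedure for $\Sat$ (using search-to-decision self-reducibility of $\Sat$ if a decision oracle is all that $\p^{L}$ directly supplies). The main conceptual obstacle, if any, is just packaging these two prior results into an ``if and only if'' statement without losing the polynomial bounds on trial and time complexity.
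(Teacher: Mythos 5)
Your proof is correct and follows essentially the same route as the paper: the forward direction relativizes Theorem~\ref{thm:GraphI} (with $A := L$) to put $\Clique$, hence all of $\np$, in $\p^{L}$, and the reverse direction uses the prior proposition $\GraphI\uu \in \p^{\scriptsize \V,\Sat}$ together with the fact that $\np \subseteq \p^{L}$ lets one simulate the $\Sat$ oracle inside $\p^{L}$. The paper's one-line argument (``$\np \subseteq \p^{L}$ implies $\p^{\np} \subseteq \p^{L}$, hence $\GraphI\uu \in \p^{\scriptsize \V,\np} \subseteq \p^{\scriptsize \V,L}$'') is exactly your oracle-substitution step, and your side remark about search-to-decision self-reducibility of $\Sat$ is a legitimate but inessential elaboration.
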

\begin{proof}
  First, if $\np \subseteq \p^{L}$, which means $\p^{{\scriptsize \np}} \subseteq \p^{L}$, would imply $\GraphI\uu \in \p^{\scriptsize {\V,\np}} \subseteq \p^{\scriptsize {\V,L}}$.
  Second, Theorem \ref{thm:GraphI} will directly give us that if $\GraphI\uu \in \p^{\scriptsize {\V,L}}$, then $\Clique \in \p^{L}$, which means $\np \subseteq \p^{L}$.
\end{proof}

It was shown in~\cite{BHZ87} that if \GraphI is \np-complete, then the polynomial hierarchy (\ph) collapses to the second level. The proof can be easily adapted to show a slightly stronger result that \ph collapses to the second level even if $\GraphI$ is \np-complete under Turing reduction\footnote{Eric Allender later pointed out that this stronger result, as we guessed, was indeed known, e.g., in \cite{Sch87}.}. This gives the following corollary. We include an elementary proof for completeness.

\begin{Cor}
If $\GraphI\uu \in \p^{{\scriptsize \V,\GraphI}}$, then the polynomial hierarchy (\ph) collapses to the second level.
\end{Cor}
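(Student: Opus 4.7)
The plan is to leverage the previous corollary together with the classical Arthur--Merlin characterization of $\co\GraphI$. Setting $L = \GraphI$ in the preceding corollary, the hypothesis $\GraphI\uu \in \p^{\scriptsize \V, \GraphI}$ yields $\np \subseteq \p^{\GraphI}$. It therefore suffices to show that $\np \subseteq \p^{\GraphI}$ forces $\ph \subseteq \Sigma_2$; this is the purely complexity-theoretic consequence of $\GraphI$ being $\np$-hard under \emph{Turing} reductions that the paper's footnote attributes to Sch\"oning.

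To prove this implication, first recall two standard facts: $\GraphI \in \np \subseteq \am$, and $\co\GraphI \in \am$ by the Goldwasser--Sipser / Goldreich--Micali--Wigderson protocol (based on set lower bounds applied to the orbit of a graph under its automorphism group). Hence $\GraphI \in \am \cap \co\am$. The second key step is the classical lowness argument: any language in $\am \cap \co\am$ is low for $\am$, so $\am^{\GraphI} = \am$, and a fortiori $\p^{\GraphI} \subseteq \am$. Combined with the hypothesis, this gives $\np \subseteq \am$. Because $\p^{\GraphI}$ is closed under complementation, the same chain also yields $\co\np \subseteq \p^{\GraphI} \subseteq \am$.

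Finally, I would invoke the Boppana--H\aa stad--Zachos theorem: $\co\np \subseteq \am$ forces $\ph = \Sigma_2$. An elementary self-contained proof runs as follows. A $\Sigma_2$ predicate has the form $\exists y\, \forall z\, R(x,y,z)$, whose inner $\co\np$ kernel can be replaced by an $\am$ protocol under our assumption. Letting Merlin guess $y$ first and then executing this protocol exhibits the language in a constant-round Arthur--Merlin game, which collapses back to the two-round class $\am$ by Babai's round-reduction theorem. Since $\am \subseteq \Pi_2$, we conclude $\Sigma_2 \subseteq \Pi_2$, and $\ph$ collapses to its second level.

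The main obstacle is rigorously establishing $\p^{\GraphI} \subseteq \am$ in the presence of \emph{adaptive} queries (this is exactly where Turing, rather than many-one, reducibility bites). I would handle it by the standard ``guess-then-verify'' trick: Merlin commits up front to the entire (polynomially bounded) transcript of oracle answers; Arthur then in parallel uses the $\am$ protocol for $\GraphI$ to certify each claimed YES answer and the $\am$ protocol for $\co\GraphI$ to certify each claimed NO answer, with error probabilities driven down by independent repetitions so that a union bound suffices; finally Arthur deterministically simulates the polynomial-time oracle machine on the certified transcript. Collapsing the resulting $O(1)$-round Arthur--Merlin game back to two rounds again uses Babai's round-reduction theorem. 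Every other step in the proof is either a direct application of the preceding corollary or a citation of a standard theorem.
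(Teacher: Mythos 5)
Your argument is correct but packages the collapse differently from the paper. The paper proves $\Sigma_2 \subseteq \Pi_2$ head-on: starting from $\np \subseteq \p^{\GraphI}$ (the preceding corollary), it simulates the $\p^{\GraphI}$ machine for the inner $\co\np$ predicate, sending the $r_i$'s as public-coin challenges for the $\am$ protocol for GraphNonIso (with a trivial $\np$ proof for $\GraphI$), and then exhibits the resulting equivalence as a concrete $\Pi_2$ formula by reordering quantifiers: perfect completeness justifies moving the $\exists x$ and $\exists a_i$'s inward on the Yes side, and a union bound over the $k$ rounds' soundness errors handles the No side. Your route instead factors through two named theorems: first establish $\p^{\GraphI} \subseteq \am$ from the fact that $\GraphI \in \am \cap \co\am$ is low for $\am$ (using the transcript-commitment trick plus Babai's round reduction to collapse the resulting $O(1)$-round game to $\am$), and then apply Boppana--H{\aa}stad--Zachos to $\co\np \subseteq \am$. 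The core technical ingredient, namely committing to all adaptive oracle answers up front and certifying YES/NO answers via parallel $\am$ protocols for $\GraphI$ and its complement, is the same in both; where the paper unpacks everything into an explicit quantifier manipulation, you modularize through standard theorems. Your route is shorter at the cost of black-boxing BHZ and round reduction; the paper's construction is self-contained, which is why the authors include it rather than simply citing lowness for $\Sigma_2^p$.
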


\begin{proof}
It is sufficient to prove that $\Sigma_2 = \Pi_2$. We will show the inclusion $\Sigma_2 \subseteq \Pi_2$, and the other direction is similar. For any formula $\phi(x,y)$, where $x,y\in \Bn$, we want to construct another formula $\phi'(r_1,\ldots, r_k;x,a_1,\ldots, a_k)$, where $k$ and the lengths of all $r_i$'s and $a_i$'s are $poly(n)$, s.t.
\begin{eqnarray}
\exists x \forall y\, \, \phi(x,y) = 1 \Leftrightarrow \label{eq:GI-NPC} \forall (r_1\ldots r_k) \exists (x,a_1\ldots a_k) \, \, \phi'(r_1,\ldots, r_k,x,a_1,\ldots, a_k) = 1. 
\end{eqnarray}
Next is the construction.

By Theorem \ref{thm:GraphI}, if $\GraphI\uu \in \p^{{\scriptsize \V,\GraphI}}$, then $\np\subseteq \p^{{\scriptsize \GraphI}}$, \ie, any $\np$ problem can be solved by a polynomial-time algorithm calling the $\GraphI$ oracle at most $k = poly(n)$ times. By flipping the answer, the algorithms can also solve $\co\np$ problems. Note that $\forall y \ \phi(x,y) = 1$ is a $\co\np$ statement, so it can be solved by a polynomial-time algorithm with the $\GraphI$ oracle.

It is well-known that there is an $\am$ protocol for $\mbox{\sf {GraphNonIso}}$ with perfect completeness and soundness error less than $2^{-m}$, for any $m$ polynomial in the length of the input. Also note that there is a trivial $\np$ proof for $\GraphI$, which is a special case of an $\am$ protocol with perfect completeness and perfect soundness. So we can design a protocol of $2k$ rounds to solve a $\co\np$-complete problem. Basically, the verifier simulates the algorithm mentioned in the above paragraph for the $\co\np$ problem. When it comes to the $i$-th query to the $\GraphI$ oracle, the verifier sends $r_i$ as if it is the $\am$ protocol for $\mbox{\sf {GraphNonIso}}$. Since it is a public-coin protocol, the verifier's code is deterministic except for the public random coins sent to the prover. So each time the prover knows the pair of graphs currently in the verifier's mind (as the input for the $\GraphI$ oracle). So the prover is supposed to solve the graph isomorphism problem and to return the one-bit answer, followed by a proof of that answer.

Now we will define a polynomial-time verification process $V'$ on input $(r_1,\ldots, r_k,x,a_1,\ldots, a_k)$, and then take $\phi'$ to be the formula induced by $V'$ as in the standard Cook-Levin reduction, and show Eq.\eqref{eq:GI-NPC}. Let $V$ denote the predicate which the assumed algorithm for the \co\np problem uses, after all queries to the \GraphI oracle, to decide acceptance/rejection. Now let $V'$ on $(r_1,\ldots, r_k,x,a_1,\ldots, a_k)$ be the following: Check each $a_i$ is a valid answer respect to $r_i$, and if all pass, output $V(r_1,\ldots, r_k,x,a_1,\ldots, a_k)$.

Since no matter whether the answer is 0 or 1, the protocol always has perfect completeness. Therefore, for the Yes instances of the original $\Sigma_2$ language, we have
\begin{align}
	& \exists x \forall y \, \, \phi(x,y) = 1 \\
	\Rightarrow & \exists x \forall r_1 \exists a_1 \ldots \forall r_k \exists a_k \quad V'(x,\phi,r_1, \ldots, r_k,a_1,\ldots,a_k) = 1 & (\text{due to perfect completeness}) \\
	\Rightarrow & \exists x \forall r_1 \ldots \forall r_k \exists a_1 \ldots \exists a_k \quad V'(x,\phi,r_1, \ldots, r_k,a_1,\ldots,a_k) = 1 & (\text{use the honest prover}) \\
	\Rightarrow & \forall r_1 \ldots \forall r_k \exists x \exists a_1 \ldots \exists a_k \quad V'(x,\phi,r_1, \ldots, r_k,a_1,\ldots,a_k) = 1 & (\text{use the fixed $x$})
\end{align}
%where the last line is because in general $\exists a

On the other hand, for the No instances of the $\Sigma_2$ language, we have
\begin{align}
	& \forall x \exists y \, \, \phi(x,y) = 0 \\
	\nonumber \Rightarrow & \forall x, \text{for }(1-2^{-m})\text{-fraction of } r_1, \forall a_1,\ldots, \text{for }(1-2^{-m})\text{-fraction of } r_k, \forall a_k, & \\
	& \quad V'(x,\phi,r_1, \ldots, r_k,a_1,\ldots, a_k) = 0 \hspace{5.9em}
	(\text{small soundness error for each round}) \\
	\nonumber \Rightarrow & \text{for }(1-k2^{n-m})\text{-fraction of } (r_1, \ldots, r_k), \forall x \forall a_1 \ldots \forall a_k \\
	& \quad V'(x,\phi,r_1,\ldots,r_k,a_1,\ldots,a_k) = 0  \hspace{16.5em}
	(\text{union bound}) \\
	\Rightarrow & \exists r_1 \ldots\exists r_k \forall x \forall a_1 \ldots \forall a_k \quad V'(x,\phi,r_1,\ldots,r_k,a_1,\ldots,a_k) = 0 \quad (\text{whenever}\ m > n+\log_2 k)
\end{align}
So if we pick $m = n + \lceil \log_2 k\rceil  + 1$, then Eq.\eqref{eq:GI-NPC} holds, as desired.
\end{proof}

%\vspace{-0.1in}
\section{Nash Equilibrium}\label{section-nash}

%In this section we study two fundamental problems in game theory, both of which when naturally formulated as CSP problems have an exponential or even unbounded number of constraints. In Section \ref{sec:nash}, we consider the problem of finding a Nash equilibrium in a normal-form game. In Section \ref{sec:core}, we consider the problem of finding an element in the core in a cooperative game.

%\subsection{Nash Equilibrium}\label{sec:nash}

In a normal-form game, there are $n$ players. Each player $i$ has a strategy space $S_i$ and a payoff function $u_i:
S_1\times\cdots\times S_n \mapsto \mathbb{Q}$, which gives the utility that $i$ obtains for every strategy profile $(s_1, \ldots, s_n)\in S_1\times\cdots\times S_n$%, where $s_i\in S_i$ is the strategy selected by player $i$
. %A strategy profile $(s_1, \ldots, s_n)$ is called a {\em pure Nash equilibrium} if for any player $i$ and any $s'_i\in S_i$, we have $u_i(s_i,s_{-i})\ge u_i(s'_i,s_{-i})$, where $s_{-i}$ is the strategies of all other players except $i$. It is well-known that a pure Nash equilibrium may not exist; however, if we allow players to draw strategies from a probability distribution, an equilibrium always exists~\cite{Nash51}. Formally,
A joint probability distribution $(p_1,\ldots,p_n)$ on $S_1\times\cdots\times S_n$ is called a {\em (mixed) Nash equilibrium} if
for any player $i$ and any probability distribution $p'_i$ on $S_i$, we have
\begin{eqnarray}
\sum_{(s_1, \ldots, s_n)} \prod_{j}p_j(s_j)\cdot u_i(s_1, \ldots, s_n) &\ge& \sum_{(s_1, \ldots, s_n)} p'_i(s_i)\cdot \prod_{j\neq i}p_{j}(s_{j})\cdot u_i(s_1, \ldots, s_n).
%\quad \forall i, \forall \text{ distribution $p_i'$ on $S_i$}.
\end{eqnarray}
Note that the number of constraints given by the foregoing inequality is unbounded.
It is well-known that a two-player game admits a mixed Nash equilibrium with
polynomial size rationals, whereas games with three or more players
may only have equilibria in irrational numbers~\cite{CD86}. The \nash problem is to find a Nash equilibrium in a normal-form game.

In the unknown-input version of a given game, denoted by $\nash\uu$, the payoff functions
$u_i(\cdot)$ are unknown. We can query a mixed strategy $(p_1, \ldots, p_n)$ each time. If it is not an equilibrium, then the
oracle will return a player $i$ and one of his better responses
$p'_i$ where the foregoing inequality fails to hold (note that $i$ and $p'_i$ are precisely the index of a violated constraint).\footnote{Note that the full set of strategies $S_i$ may also be unknown; that is,
in the process of trials, we query a probability distribution over those strategies that we have already observed. A deviation from a player
can be either from the known strategies or ``new'' unknown strategies.}
%In other words, given other players' strategy $p_{-i}$, player $i$ obtains a larger utility at $p'_i$.
Our trial and error model considers how fast a Nash equilibrium can be found from the viewpoint of a centralized authority, which is quite different from the learning models investigated in~\cite{FL98,You09} whose focuses are on the strategic dynamics formed by the behavior of individual players.
We have the following result.

\begin{Thm}\label{theorem-nash}
There is a polynomial-time algorithm solving $\nash\uu$ for any two-player game, given a computation oracle solving \nash.  %, i.e., $\nash\uu\in \mathbf{\p}^{{\scriptsize \V,\nash}}$.
\end{Thm}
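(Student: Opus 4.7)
The plan is to apply the ellipsoid method in the space of candidate payoff matrices, using the computation oracle $\nash$ together with $\V$ to construct a strong separation oracle. For a two-player game with $m$ pure strategies for player~1 and $n$ for player~2, the hidden payoffs form a pair $(A^*,B^*)\in\mathbb{Q}^{m\times n}\times\mathbb{Q}^{m\times n}$, which I view as a vector in $\mathbb{R}^{2mn}$. Standard bit-complexity assumptions (or an a priori magnitude bound on payoffs) confine $(A^*,B^*)$ to a bounded box $P_0$. Each response of $\V$ translates into a linear inequality that $(A^*,B^*)$ must satisfy, so the subset $P\subseteq P_0$ of payoff pairs consistent with all observed responses is a polytope that always contains the true input.

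Next, I would build the separation oracle. Given a candidate $(\hat A,\hat B)\in P_0$, invoke the $\nash$ computation oracle on the hypothetical game $(\hat A,\hat B)$ to obtain a mixed Nash equilibrium $(p_1,p_2)$; this exists by Nash's theorem and has rational coordinates of polynomial bit size in the two-player case. Query $(p_1,p_2)$ to $\V$. If $\V$ accepts, we are done. Otherwise $\V$ returns a better response, say $p'_1$ for player~1 (the other case is symmetric). That response witnesses the strict inequality $(p'_1-p_1)^{T}A^* p_2>0$, while the fact that $p_1$ is a best response against $p_2$ inside $\hat A$ gives $(p'_1-p_1)^{T}\hat A p_2\leq 0$. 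Hence a single linear functional on $\mathbb{R}^{2mn}$ strictly separates $(\hat A,\hat B)$ from the true payoff vector, and we add the corresponding half-space to the description of $P$ for all future iterations.

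With this strong separation oracle in hand, I would run the ellipsoid method inside $P_0$: at each iteration, evaluate the oracle at the current ellipsoid center; either $\nash\uu$ is solved outright (a certified Nash equilibrium of the true game), or we execute the standard L\"owner--John update using the returned hyperplane. Each iteration consumes one call to $\nash$, one trial to $\V$, and polynomial-time linear algebra, and the iteration count is polynomial by the usual volume-shrinkage argument applied with respect to the bit-complexity of $P_0$. Since the $\nash$ oracle on a rational game of polynomial-bit-size payoffs returns a Nash equilibrium of polynomial bit size, the queried mixed strategies $(p_1,p_2)$ are representable, and the whole procedure remains in polynomial time.

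The main obstacle is the degeneracy flagged in the introduction: in the worst case $P$ collapses to the single point $(A^*,B^*)$, and one cannot inflate $P$ by an $\epsilon$-perturbation because the defining constraints are themselves the unknowns. I would handle this by invoking the form of the ellipsoid method driven purely through a strong separation oracle together with a priori bit-complexity bounds on the coordinates, as in Gr\"otschel--Lov\'asz--Schrijver: once the enclosing ellipsoid's volume drops below the threshold dictated by the bit-complexity of $P_0$, the persistent nonemptiness of $P$ (which contains $(A^*,B^*)$) forces the most recent iteration to have already returned a certified Nash equilibrium via $\V$, completing the algorithm.
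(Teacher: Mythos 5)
Your proposal follows essentially the same route as the paper: view the pair of payoff matrices as a single point $U$ in $\mathbb{R}^{2mn}$, use the $\nash$ computation oracle to turn each candidate point $(\hat A,\hat B)$ into a trial mixed strategy, and convert the better-response returned by $\V$ into a linear functional that strictly separates the candidate from $U$, then feed this strong separation oracle to the Gr\"otschel--Lov\'asz--Schrijver ellipsoid machinery for well-described (possibly degenerate) polyhedra. The separating functional you identify, $(p_1'-p_1)^{T}(\cdot)\,p_2$, is exactly the vector $(p_1-p_1')\otimes p_2$ the paper uses.

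One small imprecision in your final paragraph: it is not the case that, once the ellipsoid's volume drops below the bit-complexity threshold, an earlier trial must already have been certified by $\V$. The ellipsoid centers can converge toward $U$ without ever coinciding with it, and $\V$ need never say \yes{} along the way. What the GLS strong-nonemptiness algorithm actually does at that stage is conclude that the polyhedron is not full-dimensional, recover the affine hyperplane it lies in via simultaneous Diophantine approximation, and recurse in lower dimension; after finitely many such reductions it pins down $U$ exactly. In the paper's version, once $U$ (the true payoff matrices) is identified, one final call to the $\nash$ computation oracle produces an equilibrium. So the termination argument should be phrased as ``either $\V$ certifies an equilibrium at some trial, or GLS identifies $U$ exactly and one more $\nash$ query finishes,'' rather than as the volume bound forcing an accepting trial.
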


\noindent
{\em Idea of the proof.} The proof is built on the existence of a Nash equilibrium in any game~\cite{Nash51}.
  Assume that each player has $m$ strategies. There are a total of $2m^2$
  values in the two payoff matrices. Note that the Nash equilibrium solution space may not be convex; thus, we cannot employ the ellipsoid method to search for an equilibrium in the solution space.
  One observation is that the $2m^2$ values in the matrices
  correspond to a point $U$ in the space $\mbR^{2m^2}$, which can also
  be seen as a degenerate polyhedron in $\mbR^{2m^2}$.
  For any given point $X \in \mbR^{2m^2}$, we can consider it as two
  payoff matrices of some game. If we compute a Nash equilibrium with
  respect to this game using the computation oracle and query it to
  the verification oracle, then the returned information (if it is not a
  $\yes$) actually gives us a hyperplane that separates $X$ from the
  true point $U$. It is now tempting to claim that the problem is solved by the ellipsoid method.
  However, there is a remaining issue: in our problem, the solution polyhedron degenerates to a point and has
  volume 0. The standard approach in the ellipsoid method for handling such degenerated cases is to add perturbations to the
  constraints to introduce a positive volume of the feasible solution
  polyhedron. However, this approach is not applicable in our context,
  as we do not know the constraints explicitly. Luckily, we are able
  to employ a much more involved machinery developed by Gr{\"o}tschel,
  Lov{\'a}sz, and Schrijver~\cite{GLS84,GLS88}, solving the strong nonemptiness problem for well-described polyhedra given by a strong separation oracle, to overcome this issue
  and thus solve the problem. \hfill $\square$
  %Details are deferred to Appendix~\ref{appendix-nash}.

\begin{proof}[Formal Proof of Theorem~\ref{theorem-nash}]
  Assume without loss of generality that each player has $m$
  strategies. Notice that there are totally $2m^2$ values in the
  payoff matrices given by $u_1$ and $u_2$; these values correspond to a point $U$ in the space
  $\mbR^{2m^2}$. If we are able to construct a separation oracle for $U$
  (which can also be seen as a degenerated polyhedron), then we
  can apply the ellipsoid method to find this point and thus solve the problem. Therefore the remaining
  problem is how to construct such a separation oracle in polynomial time.

  For any given point $X \in \mbR^{2m^2}$, consider it as two utility
  matrices $x_1, x_2$ of another game. We can first compute a Nash equilibrium
  $(p_1, p_2)$ with respect to the two new utility matrices. Next we query this mixed
  strategy $(p_1, p_2)$ to the verification oracle. If $(p_1, p_2)$ is already a Nash equilibrium to the unknown utility matrices $u_1$ and $u_2$, then the verification oracle tells us so, and we have thus solved the problem. Now we assume that $(p_1, p_2)$ is not a Nash equilibrium to the utility matrices
  $u_1$ and $u_2$. In this case we know that $X \neq U$. Suppose
  without loss of generality that the verification oracle returns that the first player
  has a better response $p_1'$. This means that
  $$\sum_{(s_1, s_2)}p_1(s_1)\cdot p_2(s_2)\cdot u_1(s_1, s_2) < \sum_{(s_1, s_2)}p_1'(s_1)\cdot p_2(s_2)\cdot u_1(s_1, s_2).$$
  Also notice that $(p_1, p_2)$ is a Nash equilibrium to utility
  matrices $x_1$ and $x_2$, thus we have
  $$\sum_{(s_1, s_2)}p_1(s_1)\cdot p_2(s_2)\cdot x_1(s_1, s_2) \geq \sum_{(s_1, s_2)}p_1'(s_1)\cdot p_2(s_2)\cdot x_1(s_1, s_2).$$

  This means that the vector $(p_1-p_1')\otimes p_2$, whose
  $(s_1,s_2)$-entry is $(p_1(s_1)-p_1'(s_1))p_2(s_2)$, can serve as
  the returned vector for the strong separation oracle (separating
  point $U$ from point $X$). Thus, we can use an approach developed by
  Gr{\"o}tschel, Lov{\'a}sz, and Schrijver~\cite{GLS84,GLS88} that
  based on the ellipsoid method to solve this problem. Formally, the
  approach can be used to provide a polynomial-time algorithm for the
  strong nonemptiness problem for well-described polyhedra $K$ given
  by a strong separation oracle (\cite{GLS88}, Theorem 6.4.1). Here a
  strong nonemptiness problem is to decide whether $K$ is empty, and
  if not, finding a point in $K$. Our (degenerated) polyhedron $U$ is
  a single point and is thus well-described. A strong separation
  oracle is one that on a given point $y\in \mbR^n$ and a polyhedron
  $K$, finds a vector $c$ such that $c^T y > \max\{c^T x: x\in K\}$.
  As argued above, we can construct such a strong separation oracle
  for point $U$ easily in polynomial time (thanks to the computation
  oracle for solving \nash); thus, we can identify the exact value of
  these two utility matrices $u_1$ and $u_2$ (or find a Nash equilibrium in
  some middle step). Once we find $u_1$ and $u_2$, we call the computation
  oracle one more time to compute a Nash equilibrium.
\end{proof}

We would like to comment that here the polyhedron $U$ is
degenerated and has volume 0. In the known-input case, one can use the
standard perturbation approach in the ellipsoid method to introduce a
positive volume to $U$ for handling this issue. However, the approach
is not applicable in our context, as we do not know the constraints
explicitly. Therefore, we have to use a much more involved machinery
in \cite{GLS84,GLS88}, as applied in the above proof, to find
halfspaces that contain $U$, and do a sequence of dimension
reductions. Fortunately and very interestingly, what is provided by
the verification oracle just fits what the method requires.

For completeness, we briefly describe the idea of this general method.
Let $F$ denote the feasible solution polyhedra.
Given the separation oracle provided by the verification oracle, we use the ellipsoid method to ask queries iteratively: Initially we pick an ellipsoid
that covers the entire domain $[0,N]^n$ and query the center of the
ellipsoid. If a constraint is violated from the verification oracle, we
establish a separation oracle and compute the next ellipsoid. (Note
that a remarkable property of the ellipsoid method is that we do not
even need to know the explicit expression of the violated
constraint, and a separation oracle suffices for the algorithm to
continue.) The process continues until either we find a point which
is in $F$ or the volume of the ellipsoid is sufficiently small such
that $F$ must have volume 0.

Note that if the system of inequalities has a solution, the numerator and denominator of all its components are
bounded by $(nN)^n$. Thus, there is a lower bound on the volume of
$F$ if it is positive. After the volume of the ellipsoid gets
smaller than that lower bound, which can be done in polynomial to
$n$ and $\log N$ number of steps, we can conclude that $F$ is not
full-dimensional, i.e., all points in $F$ are lie on a hyperplane
$H$. If we can find the hyperplane $H$ in polynomial time, we know
an extra (explicit) constraint and can reduce our problem to an
$(n-1)$-dimensional case; then we can use the same method to solve it
recursively.

Thus the remaining problem is how to identify the hyperplane $H$ in
polynomial time. This problem is solved by Gr{\"o}tschel,
Lov{\'a}sz, and Schrijver in \cite{GLS84}. The general idea is that,
having the ellipsoid with small enough volume that contains the
solution polyhedra $F$, the ellipsoid must be very ``thin'' in the
direction perpendicular to $H$. That is to say, if we take the
shortest axis of this ellipsoid, let $u$ be the unit vector which is
parallel to this axis and $v$ be the center of the ellipsoid, the
hyperplane $u^Tx = u^Tv$ must be very close to our target hyperplane
$H$. Then next we use the simultaneous diophantine approximation
algorithm, which is a technique to round real numbers by rational
numbers with relatively small sized denominator, to round the
coefficients of $u^Tx = u^Tv$; this will finally give us the
hyperplane $H$.

%%%%%%%%%%%%%%%%%%%%%%%%%%%%%%%%%%%%
%%%%%%%%%%%%%%%%%%%%%%%%%%%%%%%%%%%%

\medskip
Although the aforementioned claim applies only to two-player games, our approach can also be generalized to $n$ players
to obtain an $\epsilon$-approximate mixed Nash equilibrium for any constant $\epsilon>0$.
(Note that we cannot hope to compute an exact Nash equilibrium when $n\ge 3$, as such a solution may consist of irrational numbers.)
However, there is one potential issue: the input size of a game can be as large as $O(m^n)$, where $m$ is the number of strategies of each player.
Thus, for general games, our algorithm may require running time polynomial to $O(m^n)$ (which is still polynomial in the input size).
However, for some multi-player games that admit concise representations, e.g., graphical games~\cite{KLS01} on constant degree graphs, we can find an $\epsilon$-approximate Nash equilibrium in time polynomial to $m$ and $n$.

Our result implies that even if players are not completely aware of the rules of a game, we can still find a Nash equilibrium efficiently. Further, even if a Nash equilibrium has been achieved, the game itself can still remain a mystery (because beyond this point, the verification oracle cannot return any further information). The Internet provides one such example, as Scott Shenker remarked:
``{\it The Internet is an equilibrium, we just have to identify the game}''~\cite{Papa01}.

In addition, we note that our approach can also be adopted to solve some other similar problems such as correlated equilibrium in games and competitive equilibrium in matching markets with prices (the details are quite similar to $\nash\uu$ and thus are omitted here).

\section{Core of Cooperative Games}\label{sec:core}

In a cost-sharing game, we are given a set $A$ of $n$ agents, and a
cost function $c: 2^A \mapsto \mbR^+\cup \{0\}$. Basically, the cost function gives the cost
for any subset $S \subseteq A$ in order to let every agent in $S$ be
served. For example, $c(S)$ can be the cost to build a network that connects everyone in set $S$ to the Internet. We assume that $c(\emptyset)=0$ and the function is monotone, i.e., $c(S)\le c(T)$ if $S\subseteq T$. We say a vector
$\mathbf \valpha = (\alpha_i)_{i\in A} \in \mbR^n$ is in the
\textit{core} of the game if it satisfies the following two
conditions:
\begin{itemize}
\item $\sum_{i \in A}\alpha_i = c(A)$.
\item $\sum_{i \in S}\alpha_i \leq c(S)$ for every $S \subset A$.
\end{itemize}
Core is a central notion in cooperative game theory. The classic
Bondareva-Shapley Theorem~\cite{bondareva63,shapley67} says that the
core of a cooperative game is non-empty if and only if the cost
function is fractionally subadditive. (A function is fractionally subadditive if there is a set of
linear functions $f_1,\ldots,f_m$ such that $v(S)=\max\big\{f_1(S),f_2(S),\ldots,f_m(S)\big\}$ for any $S\subseteq A$.) Fractionally subadditive functions form a pretty general class which
includes, e.g., additive functions, gross substitutes functions, and submodular functions as
special cases~\cite{LLN06}.

In the corresponding unknown-input problem, denoted by $\Core\uu$, the cost function $c(\cdot)$ is unknown; the information that we have is the number of agents $n$ and an integer
$N$ which bounds the encoding length of every $c(S)$. (We assume
that each $c(S)$ is given by two rational numbers presenting its
numerator and denominator, respectively, whose values are therefore
bounded by $2^N$.) We can propose a vector $\valpha
\in [0,N]^n$. If it is in the core, the verification oracle returns a \yes answer;
otherwise, it returns the index of a violated constraint. Here the set of constraints contains precisely the linear constraints used to define the core, except that we replace the
inequality $\sum_{i \in A}\alpha_i = c(A)$ by two inequalities:
$\sum_{i \in A}\alpha_i \ge c(A)$ and $\sum_{i \in A}\alpha_i \le
c(A)$.\footnote{This is necessary to admit a polynomial time
algorithm. For instance, when there is only one agent, the problem
degenerates to find a given unknown rational number using queries.
If the query is of the form ``Is $x=y$?'', it will take an exponential
number of queries in the worst case; but if the query is of the form
``Is $x\le y$?'', polynomial time algorithms are
known~\cite{Papa79,KM03}.}

\begin{Thm}\label{thm-cost-sharing}
There is an algorithm solving the $\Core\uu$ problem in time polynomial in the input size.
\end{Thm}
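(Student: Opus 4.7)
The plan is to reduce $\Core\uu$ to the strong nonemptiness problem for a well-described polyhedron specified by a strong separation oracle, and then invoke the same Grötschel--Lovász--Schrijver machinery already used for $\nash\uu$ in Theorem~\ref{theorem-nash}. Let $K\subseteq[0,N]^n$ denote the core polyhedron, cut out by the inequalities $\sum_{i\in S}x_i\le c(S)$ for every $S\subsetneq A$, together with $\sum_{i\in A}x_i\le c(A)$ and $-\sum_{i\in A}x_i\le -c(A)$. Since every coefficient is $0$ or $\pm 1$ and each $c(S)$ has encoding length at most $N$, $K$ is well-described in the sense of \cite{GLS88}, even though its right-hand sides are hidden from us.

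The crux is to convert the verification oracle $\V$ into a strong separation oracle for $K$. Given a query point $\valpha\in[0,N]^n$, first submit it to $\V$. If $\V$ answers \yes, report $\valpha\in K$ and halt. Otherwise $\V$ returns the index of some violated constraint; say it is $\sum_{i\in S}x_i\le c(S)$, so $\sum_{i\in S}\alpha_i>c(S)$. Although $c(S)$ itself is unknown, we can output the indicator vector $\chi_S$ as the separator, because for every $y\in K$,
\begin{equation*}
\chi_S^T y \;=\; \sum_{i\in S} y_i \;\le\; c(S) \;<\; \sum_{i\in S}\alpha_i \;=\; \chi_S^T \valpha,
\end{equation*}
so $\chi_S^T\valpha > \max\{\chi_S^T x: x\in K\}$, which is the defining condition of a strong separator. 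The two inequalities derived from $\sum_{i\in A}x_i=c(A)$ are handled identically, with $\chi_A$ replaced by $-\chi_A$ when the lower half is violated.

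With this strong separation oracle in hand, Theorem~6.4.1 of \cite{GLS88} gives a polynomial-time algorithm (in $n$ and $N$) for the strong nonemptiness problem on $K$: it either returns a point of $K$, which is our core vector, or certifies $K=\emptyset$, in which case we report that no core exists. The main obstacle the proof has to confront is the possible degeneracy of $K$. Unlike the $\nash\uu$ polyhedron, which is a single point, $K$ may be empty, a single point, a low-dimensional face, or full-dimensional, and a priori we cannot tell which. The textbook perturbation trick is unavailable because we do not see the constraints explicitly and so cannot relax them. This is exactly the situation the GLS procedure is designed for: when the running ellipsoid shrinks below the volume bound implied by the $N$-bit encoding without a feasible point being found, the algorithm uses simultaneous Diophantine approximation on the shortest axis to extract an implicit equality satisfied by every point of $K$, drops one dimension, and recurses. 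Since every implicit equality it discovers carries the $O(N)$-bit encoding length of the input data, the recursion bottoms out in $\mathrm{poly}(n,N)$ time, yielding the claimed polynomial bound.
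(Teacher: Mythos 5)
Your proposal is correct and follows essentially the same route as the paper: describe the core as a system of linear inequalities, convert the verification oracle into a strong separation oracle (with the indicator vector $\chi_S$ as separator), and apply the GLS strong-nonemptiness algorithm (Theorem 6.4.1 of \cite{GLS88}) to handle the possibly degenerate, empty, or low-dimensional polyhedron. The extra detail you give about the Diophantine-approximation dimension reduction matches the discussion the paper defers to its remarks after Theorem~\ref{theorem-nash}.
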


Note that when the cost function is additive, i.e., $\sum_{i\in
A}c(i)=c(A)$, there is a unique solution to the core, i.e.,
$\alpha_i=c(i)$. An implication of our result is that we are able to
identify these number $c(i)$'s precisely given the verification
oracle. Further, together with Bondareva-Shapley
Theorem~\cite{bondareva63,shapley67}, our result immediately implies
that we are able to determine if a given function is in the class of
fractionally subadditive in polynomial trials.

\begin{proof}[of Theorem~\ref{thm-cost-sharing}]
According to the definition, the core of a cost-sharing game is just the set $F$ of solutions
described by the following system of linear inequalities:
\begin{eqnarray*}
   \sum_{i \in A}-x_i & \le & -c(A). \\
   \sum_{i \in A}x_i & \le & c(A). \\
   \sum_{i \in S}x_i & \leq & c(S), \qquad \forall S \subset A.
\end{eqnarray*}
This is a system of linear inequalities with $n$ variables and $2^n+2$ constraints.

Each time when we propose a vector $\valpha$, either we know that
$\valpha \in F$, which implies that $\valpha$ is in the core and we
are done), or we get a subset $S$ where $\sum_{i \in S}\alpha_i >
c(S)$. (The case that the first inequality is violated can be handled
similarly.) Note that the value of $c(S)$ is still unknown to us, but
we are able to get a strong separation oracle for $(F,\valpha)$.
Because our polyhedron $F$ has short representation and is thus
well-described. Hence again we can apply the ellipsoid algorithm %for $\nash\uu$
either to
find a point $x \in F$ or to conclude that $F$ is empty, solving the
problem $\Core\uu$.
% and use an approach developed by Gr{\"o}tschel, Lov{\'a}sz, and Schrijver~\cite{GLS84,GLS88} that based on the ellipsoid method. Formally, the approach can be used to provide a polynomial-time algorithm for the strong nonemptiness problem for well-described polyhedra $K$ given by a strong separation oracle (\cite{GLS88}, Theorem 6.4.1). Here a strong nonemptiness problem is to decide whether $K$ is empty, and if not, finding a point in $K$. Our polyhedron $F$ has short representation and is thus well-described. A strong separation oracle is one that on a given point $y\in \mbR^n$ and a polyhedron $K$, finds a vector $c$ such that $c^T y > \max\{c^T x: x\in K\}$. As argued, the verification oracle can exactly serve as a strong separation oracle: On $\valpha$ and $F$, it finds a vector $\id_S$, the characteristic vector of $S$, satisfying
% \[
% \id_S^T \valpha > c(S) \geq \id_S^T x, \quad \forall x\in F
% \]
% where the two steps are because of the definitions of the verification oracle and $F$, in that order. We can then use the algorithm to find a point $x\in F$ or conclude that $F = \emptyset$, solving the problem $\Core\uu$.
\end{proof}

%Our problem therefore corresponds to the strong non-emptiness problem (i.e., determine if $F$ is empty, and if not find a rational point inside) for an $n$-dimensional polyhedron $F$ whose body can be defined by a linear system with encoding length no more than $N$, given a separation oracle. This problem can be solved in polynomial time, due to the seminal work by Gr{\"o}tschel, Lov{\'a}sz, and Schrijver~\cite{GLS84,GLS88}.

% We would like to comment that in the known-input case, one can use the standard perturbation approach in the ellipsoid method to introduce a positive volume to $F$ for handling the degenerated case when the volume of $F$ is 0. However, the approach is not applicable in our context, as we do not know the constraints explicitly. Therefore, we have to use the much more involved machinery in \cite{GLS84,GLS88}, as applied in the above proof, to find halfspaces that contain $F$, and do a sequence of dimension reductions. Fortunately and very interestingly, what is provided by the verification oracle just fits what the method requires.

\section{Subset Sum}

Given a set $S=\{a_1,\ldots,a_n\}$ of $n$ integers, the subset sum problem is to find a partition of $S$ into two subsets $S_1$ and $S_2$ such that
$\sum_{a\in S_1}a = \sum_{b\in S_2}b$, or report that such a partition does not exist.

In the unknown input version of subset sum problem, denoted by $\ssum\uu$, the values of these $n$ integers are unknown to us.
Each time we can propose a partition $S_1$ and $S_2$. If it is indeed a solution to the subset sum problem,
the verification oracle will return $\yes$. Otherwise, the oracle will return which subset has a larger total value. That is,
the violation is one of the following two constraints.
\begin{itemize}
\item $\sum_{a\in S_1}a \ge \sum_{b\in S_2}b$
\item $\sum_{a\in S_1}a \le \sum_{b\in S_2}b$
\end{itemize}

\begin{Thm}
The $\ssum\uu$ problem has an exponential lower bound on trial complexity.
\end{Thm}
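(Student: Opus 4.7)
The plan is to apply Yao's minimax principle: exhibit a distribution $\mu$ over $\ssum\uu$ instances on which any deterministic algorithm, even with unbounded computation, requires $\Omega(2^n)$ trials in expectation. The argument is an adversary analysis of the verification oracle combined with an information-theoretic bookkeeping over a large family of ``hard'' instances.

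I would first construct a family $\mathcal F$ of $2^{\Omega(n)}$ instances in which every $I\in \mathcal F$ has a unique balanced partition and distinct members of $\mathcal F$ have distinct solutions. A concrete candidate is to fix $a_1,\ldots,a_{n-1}$ to be a geometric sequence (say $a_i = 2^{i-1}$) and to let $a_n$ range over a carefully chosen set of odd values; each choice of $a_n$ pins down a unique subset of $[n-1]$ summing to half the total, giving $2^{\Theta(n)}$ distinct instances, each with a unique solution. I would then take $\mu$ uniform on $\mathcal F$.

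The adversarial oracle maintains the subfamily $\mathcal F_k\subseteq \mathcal F$ still consistent with all answers returned so far. On each trial $Q_k$, if $Q_k$ is the solution of some $I\in \mathcal F_k$ the oracle answers with the ``not solution'' verdict and its correct sign (eliminating that $I$); otherwise the oracle returns whichever sign retains the largest subset of $\mathcal F_k$. The algorithm cannot correctly output any partition while $|\mathcal F_k|\geq 2$ contains two instances with disjoint solutions, so the lower bound reduces to controlling how fast $|\mathcal F_k|$ can shrink. A naive 1-bit-per-query analysis only gives $\Omega(\log|\mathcal F|) = \Omega(n)$, so the crux is showing that the sign answers convey essentially no information about the hidden instance. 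I would design $\mathcal F$ so that an overwhelming ``bulk'' portion of each $a_i$ is shared across all $I\in \mathcal F$, and the tiny perturbations that distinguish different instances only affect comparisons at or very near the actual solution partition; hence for all but $O(1)$ queries the sign of $\sum_{S_1} a_i - \sum_{S_2} a_i$ is determined by the bulk and is the same for every instance in $\mathcal F_k$. Under such a construction each trial can eliminate only $O(1)$ instances, yielding the desired $\Omega(|\mathcal F|) = \Omega(2^n)$ bound.

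The main obstacle is simultaneously enforcing these three properties on the family: (i) each $I \in \mathcal F$ has a unique balanced partition, (ii) the solutions are pairwise distinct across $\mathcal F$, and (iii) for every query $Q$ that is not the solution of an instance $I\in \mathcal F$, the sign of $\sum_{S_1} a_i -\sum_{S_2} a_i$ under $I$ agrees with its value under almost all other instances in $\mathcal F$. Property (iii) is the delicate one; achieving it requires the perturbations separating instances to be much smaller than the minimum nonzero value of $\sum_{S_1} c_i -\sum_{S_2} c_i$ over the shared component $c$, while still being expressive enough to realize $2^{\Omega(n)}$ different unique solutions. Once such a family is in hand, the Yao/adversary argument described above produces the $\Omega(2^n)$ lower bound on the randomized trial complexity.
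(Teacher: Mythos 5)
Your high-level strategy (exhibit a large family of unique-solution instances with pairwise distinct solutions, then argue the oracle's sign answers are nearly information-free so that each trial eliminates only $O(1)$ candidates) is the right shape for this lower bound, and it matches the spirit of the paper's argument. However, you have explicitly left the crux unresolved, and the concrete candidate family you propose does not have the needed property. With $a_i = 2^{i-1}$ for $i<n$ and $a_n$ ranging over odd values, the sign of $\sum_{T_1}a_i - \sum_{T_2}a_i$ for a query placing index $n$ on side $T_1$ equals the sign of $\bigl(\text{bulk difference}\bigr) + a_n$, and the bulk difference sweeps the full range $[-(2^{n-1}-1),\,2^{n-1}-1]$ as $T_1,T_2$ vary. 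An algorithm can therefore binary-search on $a_n$: $O(n)$ well-chosen comparisons pin it down, defeating the intended $\Omega(2^n)$ bound. Your stated property (iii) --- that almost all instances agree on the sign for every non-solution query --- is exactly what fails, and the ``bulk plus tiny perturbation'' idea is in tension with requiring $2^{\Omega(n)}$ \emph{distinct} unique solutions realized by additive shifts of a single coordinate.

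The paper sidesteps this tension entirely by not varying the numeric values at all. It fixes one multiset of $2n$ values --- a single distinguished element $a_1=M+n+2$, together with $n-1$ copies of $M+2$ and $n$ copies of $M+3$, with $M$ large enough that any unequal-size split is decided by size alone --- and lets the hidden object be the \emph{assignment} of these values to indices. For this multiset, every equal-size query whose two sides are not the true partition is decided solely by which side contains $a_1$; once the algorithm knows where $a_1$ sits, every subsequent answer is predetermined and carries no information about how the remaining $2n-1$ indices split into the two types. Hence each query can rule out at most one of the $\binom{2n-1}{n}$ consistent assignments, each of which has a distinct solution partition, yielding the $\Omega(2^n)$ bound immediately. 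This ``vary the labeling, not the values'' trick is precisely the missing idea you would need to make your plan go through; without it, or an alternative construction certifying your property (iii), the proposal does not yet constitute a proof.
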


\begin{proof}
Consider the following instance of \ssum: $S=\{a_1,a_2,\ldots,a_{2n}\}$ of
  $2n$ integers, which can be divided into three categories:
  \begin{itemize}
  \item $a_1=M+n+2$.
  \item $a_2=\cdots = a_n = M+2$.
  \item $a_{n+1}=\cdots=a_{2n}=M+3$.
  \end{itemize}
  Here $M$ is a sufficiently large integer, such that in any partition, if the
  two sets have different sizes (number of integers), the larger set will always have a
  larger sum. Thus, it is easy to see that the only valid partition is given by $S_1 =
  \{a_1,a_2,\ldots,a_n\}$ and $S_2 = \{a_{n+1},\ldots,a_{2n}\}$.

%  Next we construct another instance $S'$ derived from $S$, where one of values of $\{a_2,\ldots,a_n\}$ is changed to from $M+2$ to $M+1$ (everything else remains the same).
%  Notice that in $S'$ there is no valid partition solution, because all numbers sum up to an odd number.
%  Given an algorithm for \ssum, it should be able to distinguish the instance $S$ from $S'$.
%  Notice that for any partition $(T_1, T_2)$ of $S$, as long as $(T_1,T_2)\neq (S_1, S_2)$,
%  the verification oracle upon the query $(T_1,T_2)$ will return the same answer for both instances $S$ and $S'$.

Given an algorithm for $\ssum\uu$, since the instance $S$ has a valid and unique partition solution, it should be able to
  find out the subset $\{a_{n+1},\ldots,a_{2n}\}$ precisely, i.e., the exact partition $(S_1,S_2)$. For any query
  $(T_1, T_2)$, if $T_1$ and $T_2$ have different sizes, the
  oracle will always return the set with bigger size (which has a larger sum), and we cannot derive any information from this query. If
  $T_1$ and $T_2$ have the same size but are not exactly $S_1$ and
  $S_2$, it is always the case that the set containing $a_1$ has a larger sum. Thus, the only information we can derive
  is a subset of candidates of $a_1$. Therefore, in order to find the
  subset $\{a_{n+1},\ldots,a_{2n}\}$, even if we already know the membership of $a_1$, in the worse case one needs as much as
  ${2n-1 \choose{n}}$ queries; this gives the desired exponential lower bound.
\end{proof}

%\vspace{-.15in}
\section{Concluding Remarks}\label{sec:conclusion}

In this paper, we propose a trial and error model to investigate search problems with unknown input. We consider a number of natural problems, and show that the lack of input knowledge may introduce different levels of extra difficulty in finding a valid solution. Our complexity results range from polynomially solvable, to \np-complete and exponential. %(even for trial-complexity lower bound).
Our model and results demonstrate the value of input information in solution finding from the computational complexity viewpoint.

%Our work touches two fundamental measures of the model: computational and trial complexity.
The present work showcases a number of algorithms and lower bounds. Meanwhile, a number of important questions are left for future exploration. Closing the small gaps in Theorem \ref{thm:trialbounds} and examining more CSP problems are the obvious and specific ones. The following is a list of more problems and directions for further research.
\begin{itemize}
\item Information processing on hidden inputs is a common phenomenon in many scenarios, and the present work tries to address the related computational complexity issues in a specific and natural framework. What other general frameworks could be employed for systematic studies of hidden inputs from an algorithmic perspective?

%\item Our bounds on trial complexities of $\Sat\uu$ and $\SM\uu$ leave small gaps. Given their many theoretical and practical applications, closing the gaps is an interesting and important technical question.

\item Our complexity results focus on either the trial or the time cost. It would be intriguing to consider the tradeoff between them. For instance, in $\Sort\uu$ and $\SM\uu$, our deterministic upper bounds $O(n\log n)$ and $O(n^2\log n)$ for trial complexity are established by exponential-time algorithms. If we only allow {\em polynomial time} computation, then we do not have any bound better than $O(n^2)$ and $O(n^3)$ for $\Sort\uu$ and $\SM\uu$, respectively. (Note that the classic argument of graph entropy for sorting under a partial order~\cite{KK95} is not directly applicable here, as the allowed queries there are of the standard form of pair comparison.)

    On the lower bound side, a natural question is whether any lower bound better than $\Omega(n\log n)$ and $\Omega(n^2\log n)$ can be proven for $\Sort\uu$ and $\SM\uu$ with polynomial time computation power. Note that the bound, if possible, would probably be very difficult to prove because it implies that $\sharpp \neq \fp$ and thus $\p \neq \pp$. (If $\sharpp = \fp$, then counting linear extensions, as a $\sharpp$-complete problem, can be computed in polynomial-time, which makes our algorithms for $\Sort\uu$ and $\SM\uu$ also in polynomial time.)

\item It is well known in decision tree complexity that deterministic and randomized complexities can be polynomially separated~\cite{BdW02}, and a fundamental open question is whether the gap exhibited by the NAND tree~\cite{SW86} is the largest possible. What separation between deterministic and randomized trial complexities could we have in our model? This question could also be considered in the polynomial-time computation framework.

%\item In our model, the verification oracle is assumed to behave adversarially upon a trial. What if a random violated constraint is returned? Can such randomness reduces the computational and trial complexity substantially?
%approx oracle.

\item An algorithm in our model can access two oracles, verification and computation. In this paper, we consider only the complexity that interacts with the verification oracle. It is natural to ask about the query complexity of the other oracle (the problem is of particular importance when the computational complexity of the problem itself is large).
\end{itemize}

\subsection*{Acknowledgments}
We are grateful to Shang-Hua Teng, Leslie Valiant, Umesh Vazirani, and Andrew Yao for their helpful discussions and comments. We also thank Eric Allender, Graham Brightwell, Leslie Goldberg, and Kazuo Iwama for pointing out~\cite{Sch87}, \cite{BW91}, \cite{BD98}, and~\cite{Ng89}, respectively.

\small
\bibliographystyle{abbrv}
\bibliography{reference}
\normalsize

\end{document}